\def\doi{8 (1:31) 2012}
\begin{document}


\title[A Reduction-Preserving Completion for Proving Confluence]{
A Reduction-Preserving Completion for Proving Confluence of 
Non-Terminating Term Rewriting Systems\rsuper*}


\author[]{Takahito Aoto}
\address{RIEC, Tohoku University\\
2-1-1 Katahira, Aoba-ku, Sendai, Miyagi, 980-8577, Japan}
\email{\{aoto,toyama\}@nue.riec.tohoku.ac.jp}


\author[]{Yoshihito Toyama}
\address{\vskip-6 pt}

\keywords{Confluence, Completion, Equational Term Rewriting Systems, 
Confluence Modulo Equations}

\subjclass{D.3.1, F.3.1, F.4.2, I.2.2}

\titlecomment{{\lsuper*}This is a revised and extended version of the paper:
Takahito Aoto and Yoshihito Toyama,
A Reduction-Preserving Completion for Proving Confluence of 
Non-Terminating Term Rewriting Systems,
in \textit{Proceedings of the 22nd 
International Conference on 
Rewriting Techniques and Applications},
LIPIcs, Vol.10, Schloss Dagstuhl -- Leibniz-Zentrum fuer Informatik,
pp.91-106, 2011.}



\begin{abstract}
\noindent 
We give a method to prove confluence of term rewriting systems that
contain non-terminating rewrite rules such as commutativity and
associativity.  Usually, confluence of term rewriting systems
containing such rules is proved by treating them as equational term
rewriting systems and considering $E$-critical pairs and/or termination
modulo $E$. In contrast, our method is based solely on usual critical
pairs and it also (partially) works even if 
the system is not terminating modulo $E$.
We first present confluence criteria for
term rewriting systems whose rewrite rules can be partitioned into
a terminating part and a possibly non-terminating part.  We then give a
reduction-preserving completion procedure so that the applicability of
the criteria is enhanced.  In contrast to the well-known Knuth-Bendix
completion procedure which preserves the equivalence relation of the
system, our completion procedure preserves the reduction relation of
the system, by which confluence of the original system
is inferred from that of the completed system.
\end{abstract}

\maketitle


\newcommand{\parto}{\mathrel{\rightarrow\!\!\!\!\!\!\!\!\!\;{+}\!\!\!\!{+}\,}}
\newcommand{\pargets}{\mathrel{\leftarrow\!\!\!\!\!\!\!\!\!\;{+}\!\!\!\!{+}\,}}
\newcommand{\parleftrightarrow}{\mathrel{\leftrightarrow\!\!\!\!\!\!\!\!\!\;{+}\!\!\!\!{+}\,}}
\newcommand{\gggg}{\gg_\textrm{\rm mul}}
\newcommand{\doubleheadsrightarrow}{\mathrel{\to\!\!\!\!\!\to}}

\newcommand{\cv}[2]{\curlyvee #1 \cup \curlyvee #2}

\section{Introduction}

Confluence is one of the most important properties
of term rewriting systems (TRSs for short)
and hence many efforts have been spent on developing
techniques to prove this property \cite{BaaderFandNipkowT:TR,ToyRTA}.
One of the classes of TRSs for which
many known confluence proving methods are not effective
is the class of TRSs 
containing associativity and commutativity rules (AC-rules).
Such TRSs are non-terminating by the existence of AC-rules
(more precisely, the commutativity rule is self-looping and 
associativity rules are looping under the presence of 
the commutativity rule)
and hence the Knuth-Bendix criterion 
(i.e.\ terminating TRSs are confluent
iff all critical pairs are joinable)
does not apply.
Furthermore, confluence criteria regardless of termination
based on critical pairs often do not apply either.

A well-known approach to deal with TRSs containing AC-rules
is to deal with them as equational term rewriting systems 
\cite{Hue80,JK86,PS81}.
In this approach,
non-terminating rules such as AC-rules
are treated exceptionally as an equational subsystem $\mathcal{E}$.
Then the 
confluence of equational term rewriting system $\langle \mathcal{R}, \mathcal{E} \rangle$
is obtained if 
$\mathcal{R}$ is terminating modulo $\mathcal{E}$ 
\cite{Hue80,JK86,PS81}
and either the $\mathcal{E}$-critical pairs of $\mathcal{R}$
satisfy certain conditions \cite{JK86,PS81}
or $\mathcal{R}$ is left-linear and 
the $\mathcal{E}/\mathcal{R}$-critical pairs 
satisfy a certain condition \cite{Hue80}.
This approach, however, only works if $\mathcal{R}$ is terminating modulo $\mathcal{E}$.
Furthermore, the computation of $\mathcal{E}$-critical pairs requires
a finite and complete $\mathcal{E}$-unification algorithm
which depends on $\mathcal{E}$.

In this paper, we give a method to prove confluence of TRSs 
that contain non-terminating rewrite rules such as AC-rules.
In contrast to the traditional approach described above,
our method is based solely on usual critical
pairs and it also (partially) works even if 
the system is not terminating modulo $\mathcal{E}$.
Thus the implementation of the method requires few special ingredients
and the method is easily integrated
into confluence provers and combined
with other confluence proving methods.

Let us explain the idea of our approach via concrete examples.

\begin{exa}
\label{exp:com-assoc}
Let $\mathcal{R}_1$ be the TRS consisting of 
the commutativity rule and an associativity rule.
\[
\mathcal{R}_1
= \left\{
\begin{array}{clcl}
(C) & \mathsf{+}(x,y)             &\to&  \mathsf{+}(y,x) \\
(A) & \mathsf{+}(\mathsf{+}(x,y),z) &\to&  \mathsf{+}(x,\mathsf{+}(y,z))\\
\end{array}
\right\}.
\]
This TRS is non-terminating and 
many known critical pair conditions for left-linear TRSs
do not apply.
However, ${\mathcal{R}_1}$ is 
confluent, 
i.e.\ 
$s \stackrel{*}{\to}_{\mathcal{R}_1} t_0$
and 
$s \stackrel{*}{\to}_{\mathcal{R}_1} t_1$
imply
$t_0 \stackrel{*}{\to}_{\mathcal{R}_1} u$
and 
$t_1 \stackrel{*}{\to}_{\mathcal{R}_1} u$
for some $u$.
Here $s \stackrel{*}{\to}_{\mathcal{R}_1} t$
denotes that $s$ rewrites to $t$ in arbitrary many rewrite steps.
One way to prove this is by observing that 
${\mathcal{R}_1}$ is reversible,
i.e.\ $s \stackrel{*}{\to}_{\mathcal{R}_1} t$
iff $t \stackrel{*}{\to}_{\mathcal{R}_1} s$.
This holds because for any rewrite rule $l \to r \in \mathcal{R}_1$
we have $r \stackrel{*}{\to}_{\mathcal{R}_1} l$:
for the $(C)$-rule, this holds obviously,  and
for the $(A)$-rule, this holds because
$\mathsf{+}(x,\mathsf{+}(y,z)) 
\stackrel{*}{\to}_{\mathcal{R}_1}
\mathsf{+}(\mathsf{+}(x,y),z)$ viz.\
\[
\begin{array}{lcl}
\mathsf{+}(x,\mathsf{+}(y,z)) 
&\to_C& \mathsf{+}(\mathsf{+}(y,z),x)\\
&\to_A& \mathsf{+}(y,\mathsf{+}(z,x))\\
&\to_C& \mathsf{+}(\mathsf{+}(z,x),y)\\
&\to_A& \mathsf{+}(z,\mathsf{+}(x,y))\\
&\to_C& \mathsf{+}(\mathsf{+}(x,y),z).\\
\end{array}
\]
Thus
for any $s_1 \to_{\mathcal{R}_1} s_2 \to_{\mathcal{R}_1} \cdots  \to_{\mathcal{R}_1} s_n$,
we have
$s_n \stackrel{*}{\to}_{\mathcal{R}_1} s_{n-1} 
\stackrel{*}{\to}_{\mathcal{R}_1} \cdots 
\stackrel{*}{\to}_{\mathcal{R}_1} s_1$.
Hence 
$s \stackrel{*}{\to}_{\mathcal{R}_1} t_0$
and 
$s \stackrel{*}{\to}_{\mathcal{R}_1} t_1$
imply 
$t_0 \stackrel{*}{\to}_{\mathcal{R}_1} s$
and 
$t_1 \stackrel{*}{\to}_{\mathcal{R}_1} s$.
\end{exa}

\begin{exa}
\label{exp:plus-com-assoc}
Next we consider the TRS $\mathcal{R}_2$,
which extends $\mathcal{R}_1$ slightly.
\[
\mathcal{R}_2
= \left\{
\begin{array}{clclclcl}
(\mathsf{add}_1) & \mathsf{+}(\mathsf{0},y)    &\to&  y &\quad\\
(\mathsf{add}_2) & \mathsf{+}(\mathsf{s}(x),y) &\to&  \mathsf{s}(\mathsf{+}(x,y)) \\
(C) & \mathsf{+}(x,y)             &\to&  \mathsf{+}(y,x) &\quad\\
(A) & \mathsf{+}(\mathsf{+}(x,y),z) &\to&  \mathsf{+}(x,\mathsf{+}(y,z))\\
\end{array}
\right\}.
\]
$\mathcal{R}_2$ is a TRS consisting of 
rules for addition of natural numbers 
denoted by $\mathsf{0},\mathsf{s}(\mathsf{0}),
\mathsf{s}(\mathsf{s}(\mathsf{0})),\ldots$
and AC-rules for plus.
The TRS $\mathcal{R}_2$ is again non-terminating and many
known critical pair conditions for left-linear TRSs
also do not apply.
However, $\mathcal{R}_2$ is confluent.
This can be explained like this.
Since $\mathsf{+}(y,\mathsf{0})  \to_{\mathcal{R}_2}
 \mathsf{+}(\mathsf{0},y)  \to_{\mathcal{R}_2}  y$
and $\mathsf{+}(y,\mathsf{s}(x)) 
\to_{\mathcal{R}_2} \mathsf{+}(\mathsf{s}(x),y) 
\to_{\mathcal{R}_2} \mathsf{s}(\mathsf{+}(x,y))$,
together with $(\mathsf{add}_1),(\mathsf{add}_2)$-rules,
all occurrences of the symbol $\mathsf{0}$ in a term can be eliminated and
all occurrences of the symbol $\mathsf{s}$ can be moved to the top
of the term.
Hence, for any term $t$, we have 
$t \stackrel{*}{\to}_{\mathcal{R}_2} 
\mathsf{s}^k(+(\cdots x \strut \cdots))$
where 
$k$ is the number of occurrences of the symbol $\mathsf{s}$ in the term $t$
and the part ``$+(\cdots x \strut \cdots)$'' denotes
the addition of all variables contained in the term $t$.
Thus for any $u_1,u_2$ such that
$t \stackrel{*}{\to}_{\mathcal{R}_2} u_1$
and 
$t \stackrel{*}{\to}_{\mathcal{R}_2} u_2$,
we have 
$u_1 \stackrel{*}{\to}_{\mathcal{R}_2} \mathsf{s}^k(+(\cdots x \strut \cdots))$
and 
$u_2 \stackrel{*}{\to}_{\mathcal{R}_2} \mathsf{s}^k(+(\cdots x \strut \cdots))$.
It remains to use the reversibility of AC-rules (i.e.\ $\mathcal{R}_1$)
to join two terms of the form $\mathsf{s}^k(+(\cdots x \strut \cdots))$
because they are equivalent modulo associativity and commutativity.
\end{exa}

A key point of this method is that,
in addition to rewrite rules of $\mathcal{R}_2$,
we considered auxiliary rewrite rules 
$\mathsf{add}_3:  \mathsf{+}(y,\mathsf{0}) \to  y$
and $\mathsf{add}_4: \mathsf{+}(y,\mathsf{s}(x)) 
\to \mathsf{s}(\mathsf{+}(x,y))$.
In our method, such rewrite rules
are added via a reduction-preserving completion procedure.
In contrast to the well-known Knuth-Bendix
completion procedure which preserves the equivalence relation of the
system, our completion procedure preserves the reduction relation of
the system, by which confluence of the original system
is inferred from that of the completed system.
We note that the Knuth-Bendix completion procedure 
for equational term rewriting systems
was initiated by \cite{LB77c,LB77a,LB77b,PS81}
and is generalized in \cite{BD89,JK86}.
Since the Knuth-Bendix completion procedure 
needs to preserve equivalence relation
but not necessarily reduction relation,
much flexibilities are allowed for 
the Knuth-Bendix completion procedure
compared to our reduction-preserving completion procedure.

The contribution of this paper is summarized as follows:
\begin{enumerate}[(1)]
\item new abstract criterion for the 
property \textit{Church-Rosser modulo} (Theorem \ref{thm:ARS}),
\item new confluence criteria (Theorems \ref{thm:linear} and \ref{thm:PCP}),
\item reduction-preserving completion 
for proving confluence   and
\item implementation and experiments for these techniques.
\end{enumerate}
This paper is a revised and extended version of \cite{Rcomp}.
Compared to \cite{Rcomp},
Theorems \ref{thm:ARS}, \ref{thm:linear} and \ref{thm:PCP}
are new---these extend the results in \cite{Rcomp}
which are adapted as 
Corollaries\footnotemark \ref{cor:ARS-I}, \ref{cor:linear} and \ref{cor:PCP}
respectively in the present paper.

\footnotetext{
Corollaries \ref{cor:ARS-I} and \ref{cor:linear}
are incorrectly claimed to be original
in \cite{Rcomp}.}

The rest of the paper is organized as follows.
We first present a criterion for Church-Rosser modulo
in an abstract setting (Section 2).  
Then based on this abstract criterion,
we present confluence criteria for
TRSs whose rewrite rules can be partitioned into
a terminating part and a possibly non-terminating part (Section 3).  
We then give a
reduction-preserving completion procedure so that the applicability of
the criteria is enhanced (Section 4).  
Finally we report on our implementation
and results of experiments (Section 5).

\newcommand{\vdashv}{\mathrel{\lower .22ex \hbox{$\vdash\!\dashv$}}}
\newcommand{\eqvdashv}{\mathrel{\ooalign{{$\vdashv$}\crcr{$\stackrel{=}{\phantom{\to}}$}}}}
\newcommand{\plsvdashv}{\mathrel{\ooalign{{$\vdashv$}\crcr{$\stackrel{+}{\phantom{\to}}$}}}}
\newcommand{\astvdashv}{\mathrel{\ooalign{{$\vdashv$}\crcr{$\stackrel{*}{\phantom{\to}}$}}}}
\newcommand{\parvdashv}{\mathrel{\vdashv}\!\!\!\!\!\!\!\!\!\;{+}\!\!\!\!{+}\,}

\section{Abstract criterion for Church-Rosser modulo}

In this section,
after providing some preliminaries (subsection 1),
we present a criterion for Church-Rosser modulo
an equivalence relation and
present some corollaries of the criterion that have been
appeared in the literature (subsection 2).
Then we compare our criterion and other
abstract criteria for Church-Rosser modulo
an equivalence relation (subsection 3).

\subsection{Preliminaries}

In this subsection, we fix
some notions and notations on relations
that will be used throughout the paper.

Let $\to$ be a relation on a set $A$.
The \textit{inverse} of $\to$ is denoted by $\gets$.
The \textit{reflexive closure} (the \textit{symmetric closure}, 
the \textit{transitive closure}, 
the \textit{reflexive and transitive closure},
the \textit{equivalence closure})
of $\to$ is denoted by $\stackrel{=}{\to}$
(${\leftrightarrow}$, 
$\stackrel{+}{\to}$, 
$\stackrel{*}{\to}$
$\stackrel{*}{\leftrightarrow}$, respectively).
We will also use $\to_0,\to_1,\ldots,\Rightarrow,\leadsto,\blacktriangleright,\ldots$ 
for binary relations, $\vdashv,\bowtie,\ldots$ for symmetric relations
and $\sim,\ldots$ for equivalence relations.
Closures for such relations are written in the similar way.

The union of two
relations $\to$ and $\Rightarrow$ is written as $\to \cup \Rightarrow$.
For any two relations $\to$ and $\Rightarrow$,
we write ${\to} \subseteq {\Rightarrow}$
if $a \to b$ implies $a \Rightarrow b$ for any $a,b$.
The composition of relations $\to$ and $\Rightarrow$ is written as $\to
\circ \Rightarrow$.  
For a (possibly infinite) number of indexed
relations $(\to_\alpha)_{\alpha \in I}$ where $I$ is a set of indexes,
$\bigcup_{\alpha \in I} \to_\alpha$ is written as $\to_I$.  We will
identify element and singleton set in this notation, i.e.\ ${\to}_{\{
  \alpha \}} = {\to}_\alpha$.
Closures for such relations are written in the similar way.

A relation $\to$ is \textit{well-founded} 
if
there exists no infinite descending chain $a_0 \to a_1 \to \cdots$.  
The relation $\to$ is said to be \textit{confluent}
if ${\stackrel{*}{\gets}} \circ {\stackrel{*}{\to}}
\subseteq {\stackrel{*}{\to}} \circ {\stackrel{*}{\gets}}$ holds.
The relation $\to$ is said to be \textit{Church-Rosser modulo
an equivalence relation $\sim$}
(\textit{CRM} in short)
if ${\stackrel{*}{\bowtie}}
\subseteq {\stackrel{*}{\to}}
 \circ {\sim} \circ {\stackrel{*}{\gets}}$ holds,
where ${\bowtie} =  {\leftrightarrow \cup \sim}$.

\subsection{Abstract criterion for Church-Rosser modulo}

In this subsection, we give a new criterion
for Church-Rosser modulo an equivalence relation.

In what follows, we consider 
a (strict) partial order $\succ$ on
the set $I$ of indexes.
Let $I$ be the set of indexes.
For a set $J \subseteq I$ of indexes,
the set $\{ \beta \in I \mid \exists \alpha \in J.~ \beta \prec \alpha \}$
is written as $\curlyvee J$.
If $J = \{ \alpha \}$, we write $\curlyvee \alpha$
instead of $\curlyvee \{ \alpha \}$.
We assume
that $\curlyvee$ associates stronger than
$\cup$ 
i.e.\
$\curlyvee I_1 \cup \curlyvee I_2 = (\curlyvee I_1) \cup (\curlyvee I_2)$.
The next lemma, which is the basis of our abstract criterion 
for Church-Rosser modulo, is 
obtained by induction on the set of indexes
w.r.t.\ the well-founded order $\succ$.

\begin{lem}
\label{lem:ARS}
Let $I$ be a set of indexes equipped
with a well-founded order $\succ$.
Let 
$\vdashv_{\alpha},\to_{\alpha}$ be relations on a set $A$
such that $\vdashv_{\alpha}$ is symmetric
for each $\alpha \in I$.
Let 
${\Rightarrow}_\alpha = {\vdashv_\alpha \cup \to}_\alpha$
for each $\alpha \in I$.
Suppose 
(i) ${\gets}_{\alpha} \circ {\to}_{\beta}
\subseteq 
{\stackrel{*}{\Leftrightarrow}}_{\cv{\alpha}{\beta}}$
and 
(ii) ${\vdashv}_{\alpha} \circ {\to}_{\beta}
\subseteq 
{\stackrel{*}{\Leftrightarrow}}_{\cv{\alpha}{\beta}}$.
Then ${\to}_I$ is Church-Rosser modulo ${\astvdashv}_I$.
\end{lem}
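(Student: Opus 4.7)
The plan is to prove the lemma by well-founded induction on a proof-complexity measure attached to conversions. First, I unfold every witness of $a \stackrel{*}{\bowtie} b$, where $\bowtie \,=\, {\leftrightarrow}_I \cup \astvdashv_I$, into a finite sequence of elementary steps, each of the shape $\to_\alpha$, $\gets_\alpha$, or $\vdashv_\alpha$ for some $\alpha \in I$; this is possible because $\astvdashv_I$ unfolds into single $\vdashv_I$-steps and each $\vdashv_\alpha$ is symmetric. To any such conversion I associate the multiset of labels $\alpha$ occurring in its steps, and I compare these multisets by the multiset extension $\gggg$ of $\succ$, which is well-founded since $\succ$ is (by Dershowitz--Manna).

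The induction step proceeds by inspecting the shape of the conversion. If it already matches the pattern ${\stackrel{*}{\to}}_I \circ \astvdashv_I \circ {\stackrel{*}{\gets}}_I$, the required joining diagram is read off directly. Otherwise the conversion contains at least one of the three ``bad'' adjacent patterns $\gets_\alpha \to_\beta$, $\vdashv_\alpha \to_\beta$, or $\gets_\beta \vdashv_\alpha$. The first pattern is eliminated by assumption (i) and the second by assumption (ii). The third reduces to the second: read $a \gets_\beta b \vdashv_\alpha c$ in reverse as $c \vdashv_\alpha b \to_\beta a$ using the symmetry of $\vdashv_\alpha$, apply (ii), and then reverse the resulting symmetric ${\stackrel{*}{\Leftrightarrow}}_{\cv{\alpha}{\beta}}$-conversion to obtain one from $a$ to $c$. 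In each case the two offending steps---carrying labels $\alpha$ and $\beta$---are replaced by a sequence of steps whose labels all lie in $\curlyvee\alpha \cup \curlyvee\beta$ and are therefore strictly smaller than $\alpha$ or $\beta$ under $\succ$. The rewritten conversion thus has a strictly smaller multiset of labels, and the induction hypothesis supplies elements $c, d$ with $a \stackrel{*}{\to}_I c \astvdashv_I d \stackrel{*}{\gets}_I b$, witnessing Church--Rosser modulo $\astvdashv_I$.

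The main subtlety to watch for is that the measure must be taken over the \emph{entire} conversion rather than over a local window: a local replacement can perfectly well introduce new bad pairs at the glue-points with the surrounding context, but this is harmless because the global multiset strictly decreases, so the induction hypothesis is simply re-applied to the fully rewritten conversion. A small but essential technical point is that, even though $\vdashv$ is symmetric, the pattern $\gets \vdashv$ is syntactically distinct from $\vdashv \to$ in a conversion sequence; its elimination requires reading assumption (ii) backwards, which is exactly why the hypotheses are phrased in terms of the bidirectional $\stackrel{*}{\Leftrightarrow}$ rather than $\stackrel{*}{\Rightarrow}$.
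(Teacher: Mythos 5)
Your proof is correct and follows essentially the same route as the paper's: well-founded induction on the multiset of step labels under the multiset extension of $\succ$, eliminating the local patterns ${\gets}_\alpha \circ {\to}_\beta$, ${\vdashv}_\alpha \circ {\to}_\beta$ and ${\gets}_\alpha \circ {\vdashv}_\beta$ via (i), (ii) and the symmetric reading of (ii), respectively. The only presentational difference is that the paper establishes, by a secondary induction on the length of the conversion, the easy combinatorial fact you assert directly — that a conversion containing none of the three bad patterns already has the form ${\stackrel{*}{\to}}_I \circ {\astvdashv}_I \circ {\stackrel{*}{\gets}}_I$.
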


\proof
For each sequence 
$a_0 
\Leftrightarrow_{\alpha_0}
a_1
\Leftrightarrow_{\alpha_1}
\cdots
\Leftrightarrow_{\alpha_{n-1}}
a_n$,
let its weight be the multiset consisting of the 
indexes of the each steps 
i.e.\ 
$\{  \alpha_0, \alpha_1, \ldots, \alpha_{n-1}  \}$.
Let $\gg$ be the multiset extension of the well-founded order $\succ$.
We show by well-founded induction on the weight of the sequence
w.r.t.\ $\gg$ that 
for any sequence $a_0 \stackrel{*}{\Leftrightarrow}_I a_n$ 
there exists a sequence
$a_0 \stackrel{*}{\to}_I \circ \astvdashv_I \circ \stackrel{*}{\gets}_I a_n$.
\begin{enumerate}[(1)]
\item Suppose there exists $k$ such that
$a_{k-1} \gets_{\alpha} a_{k}  \to_{\beta} a_{k+1}$.
Then by assumption (i), 
there exists a sequence 
$a_{k-1} \stackrel{*}{\Leftrightarrow}_{\cv{\alpha}{\beta}}
a_{k+1}$.
Thus we have a  sequence
$a_0 \stackrel{*}{\Leftrightarrow}_I a_{k-1}
\stackrel{*}{\Leftrightarrow}_{\cv{\alpha}{\beta}}
a_{k+1}
\stackrel{*}{\Leftrightarrow}_I a_n$.
Since this new  sequence has a weight
less than the original  sequence
$a_0 \stackrel{*}{\Leftrightarrow}_I a_n$,
it follows that there exists a  sequence
$a_0 \stackrel{*}{\to}_{I} \circ \astvdashv_{I}
\circ \stackrel{*}{\gets}_{I} a_n$
by the induction hypothesis.


\item Suppose that there exists $k$ such that
$a_{k-1} \vdashv_{\alpha} a_{k}  \to_{\beta} a_{k+1}$.
Then by assumption (ii), 
there exists a  sequence 
$a_{k-1} \stackrel{*}{\Leftrightarrow}_{\cv{\alpha}{\beta}} a_{k+1}$.
Thus, it follows that there exists a  sequence
$a_0 \stackrel{*}{\to}_{I} \circ \astvdashv_{I}
\circ \stackrel{*}{\gets}_{I} a_n$
by the induction hypothesis
as in the previous case.


\item Suppose that there exists $k$ such that
$a_{k-1}  \gets_{\alpha} a_{k} \vdashv_{\beta} a_{k+1}$.
Then one can show that
there exists a  sequence
$a_0 \stackrel{*}{\to}_{I} \circ \astvdashv_{I}
\circ \stackrel{*}{\gets}_{I} a_n$
in the same way as the case (2).

\item It remains to treat the case
that 
($\alpha$) there exists no $k$ such that $a_{k-1} \gets_I a_k \to_I a_{k+1}$,
($\beta$) there exists no $k$ such that $a_{k-1} \vdashv_I a_k \to_I a_{k+1}$ and
($\gamma$) there exists no $k$ such that $a_{k-1} \gets_I a_k \vdashv_I a_{k+1}$.
We show by induction on the length of 
$a_0 \stackrel{*}{\Leftrightarrow} a_n$
that this  sequence has the form
$a_0 \stackrel{*}{\to}_{I} \circ  \astvdashv_{I} \circ
\stackrel{*}{\gets}_{I}  a_n$.
The case $n = 0$ is trivial.
Suppose
$a_0 \Leftrightarrow_{I} a_1
\stackrel{*}{\Leftrightarrow}_{I} a_n$.
By induction hypothesis 
we have $a_1
 \stackrel{*}{\to}_{I} a_l  \astvdashv_{I} a_m 
\stackrel{*}{\gets}_{I}  a_n$
for some $1 \le l,m \le n$.
We distinguish three cases:
\begin{enumerate}[(a)]
\item $a_0 \vdashv_I a_1$.
By ($\beta$), it follows that 
we have $a_0 \vdashv_I a_1 = a_l  \astvdashv_{I} a_m 
\stackrel{*}{\gets}_{I}  a_n$.
Hence the conclusion follows.

\item $a_0 \to_I a_1$.
Since we have $a_0 \to_I a_1
 \stackrel{*}{\to}_{I} a_l  \astvdashv_{I} a_m 
\stackrel{*}{\gets}_{I}  a_n$,
the conclusion follows.

\item $a_0 \gets_I a_1$.
Then by ($\alpha$), it follows that 
we have $a_0 \gets_I a_1 = a_l  \astvdashv_{I} a_m 
\stackrel{*}{\gets}_{I}  a_n$.
Furthermore, by ($\gamma$), it follows that 
$a_0 \gets_I a_1 = a_l = a_m 
\stackrel{*}{\gets}_{I}  a_n$.
Hence the conclusion follows.\qed\smallskip
\end{enumerate}
\end{enumerate}

\noindent The following abstract criterion for Church-Rosser modulo 
will be used as the basis of all of our confluence criteria
presented in this paper.

\begin{thm}[abstract criteria for CRM]
\label{thm:ARS}
Let $\vdashv,\to,\leadsto$ be relations on a set $A$
such that $\vdashv$ is symmetric,
${\leadsto} \subseteq {\vdashv}$, and
$\to \circ \stackrel{*}{\leadsto}$ is well-founded.
Let ${\Rightarrow} = {\leadsto \cup \to}$.
Suppose 
(i) ${\gets} \circ {\to}
\subseteq 
{\stackrel{*}{\Rightarrow}}
\circ {\eqvdashv} \circ 
{\stackrel{*}{\Leftarrow}}$
and 
(ii) ${\vdashv} \circ {\to}
\subseteq 
(
{\eqvdashv}
\circ {\stackrel{*}{\Leftarrow}})
\cup 
({\to} \circ {\stackrel{*}{\Rightarrow}}
\circ {\eqvdashv} \circ 
{\stackrel{*}{\Leftarrow}})$.
Then
$\to$ is Church-Rosser modulo $\astvdashv$.
\end{thm}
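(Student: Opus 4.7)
The plan is to apply Lemma \ref{lem:ARS} to an appropriate indexed refinement of $\to$ and $\vdashv$.  I take the index set $I = A$ equipped with the well-founded strict partial order $\succ$ given by the transitive closure of $\to \circ \stackrel{*}{\leadsto}$ (well-founded by hypothesis).  Each $\to$-step is indexed by its source, $\to_a = \{(a,b) \mid a \to b\}$, and each $\vdashv$-step is indexed symmetrically by its endpoints, so that $\to_I = \to$, $\vdashv_I = \vdashv$, and $\astvdashv_I = \astvdashv$; thus the conclusion of Lemma \ref{lem:ARS} is precisely what the theorem asserts.

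For hypothesis (i) of Lemma \ref{lem:ARS}, a local peak has the form $a \gets_v v \to_v b$ under source-indexing (so $\alpha = \beta = v$), and hypothesis (i) of the theorem furnishes a replacement $a \stackrel{*}{\Rightarrow} p \eqvdashv q \stackrel{*}{\Leftarrow} b$.  A straightforward induction using well-foundedness of $\to \circ \stackrel{*}{\leadsto}$, starting from $v \to a$ and $v \to b$, shows that every element occurring in this replacement is strictly $\prec v$, so every step can be reindexed by an element of $\curlyvee v$.  For hypothesis (ii) of the lemma, a pattern $x \vdashv_\alpha y \to_y z$ is dispatched via the two branches of hypothesis (ii) of the theorem: the $\eqvdashv \circ \stackrel{*}{\Leftarrow}$-branch yields a replacement whose content lies strictly below $y$, while the $\to \circ \stackrel{*}{\Rightarrow} \circ \eqvdashv \circ \stackrel{*}{\Leftarrow}$-branch yields $x \to u_1 \stackrel{*}{\Rightarrow} p \eqvdashv q \stackrel{*}{\Leftarrow} z$, whose body $u_1 \stackrel{*}{\Rightarrow} p$ sits strictly below $x$ (by $x \to u_1$) and whose right segment sits strictly below $y$ (by $y \to z$).

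The principal obstacle is the opening $\to$-step $x \to u_1$ of the second branch of hypothesis (ii): its source $x$ is not generally below $y$ in $\succ$, so the $\vdashv$-step $x \vdashv y$ must be indexed by some $\alpha$ strictly above $x$ in order for the multiset weight underlying Lemma \ref{lem:ARS} to strictly decrease.  I expect the design of this indexing---ensuring that the conditions of Lemma \ref{lem:ARS} are verified uniformly across hypothesis (i), both branches of hypothesis (ii), and all admissible pairs $(\alpha,\beta)$---to be the technically delicate point; one natural remedy is to adjoin an auxiliary top element to $I$ and route $\vdashv$-pairs through it whenever no common upper bound exists in $A$.  Once this is arranged, Lemma \ref{lem:ARS} delivers that $\to_I = \to$ is Church-Rosser modulo $\astvdashv_I = \astvdashv$, as desired.
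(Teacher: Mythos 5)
Your overall strategy---instantiating Lemma \ref{lem:ARS} with labels drawn from $A$, ordering them by the transitive closure of $\to \circ \stackrel{*}{\leadsto}$, and labelling each $\to$-step by its source---is the same as the paper's, but the proof is not complete: the point you flag as ``the technically delicate point'' is exactly the step that carries the whole argument, and the remedy you sketch does not work. The paper's resolution is to take the indexes to be \emph{multisets} of elements of $A$ rather than single elements, setting $lab(a \to b) = \{a\}$ and $lab(a \vdashv b) = \{a,b\}$, and ordering the index set by the \emph{multiset extension} of $(\to \circ \stackrel{*}{\leadsto})^+$. This dissolves your obstacle at once: for a pattern $x \vdashv y \to z$ handled by the second branch of (ii), the opening step $x \to u_1$ carries index $\{x\}$, and $\{x\}$ is strictly below $\{x,y\}$ in the multiset extension simply because it is obtained by deleting $y$---no comparison between $x$ and $y$ is required. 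The remaining steps are then routine: every element on $u_1 \stackrel{*}{\Rightarrow} p$ lies strictly below $x$ and every element on $z \stackrel{*}{\Rightarrow} q$ lies strictly below $y$ (group the $\Rightarrow$-steps into $\to \circ \stackrel{*}{\leadsto}$ blocks), so their labels sit below $\{x\}$ resp.\ $\{y\}$, and the $\eqvdashv$-step gets label $\{p,q\}$ with $p \prec x$ and $q \prec y$, hence below $\{x,y\}$. The same bookkeeping disposes of condition (i) and of the first branch of (ii), where the $\eqvdashv$-step $x \vdashv q$ receives label $\{x,q\} \lll \{x,y\}$ because $q \prec y$ (note that its endpoint $x$ is \emph{not} below $y$, so even there your claim that the whole replacement ``lies strictly below $y$'' needs the multiset order to be made precise).

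By contrast, adjoining a top element $\top$ and routing awkward $\vdashv$-pairs through it cannot succeed: the replacement sequence supplied by condition (ii) itself contains an $\eqvdashv$-step whose endpoints need not admit a common upper bound in $A$ either, so that step would again have to carry index $\top$; it then fails to lie in $\curlyvee \top \cup \curlyvee \beta$, and the multiset weight in the proof of Lemma \ref{lem:ARS} does not decrease. More basically, with $I = A$ a single index cannot faithfully record both endpoints of a $\vdashv$-step, which is precisely why the paper passes from elements to multisets of elements.
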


\newcommand{\LRleftrightarrow}{{\Lleftarrow}\!\!\!\!{\Rrightarrow}}
\newcommand{\boxrightarrow}{-\hspace{-.15em}\triangleright}
\newcommand{\astboxleftrightarrow}{\triangleleft\hspace{-.15em}{{\stackrel{*}-}\hspace{-.15em}\triangleright}}

\proof
Let ${\boxrightarrow} = {\vdashv} \cup {\rightarrow}$.
Suppose that the index of a step $a\: {\boxrightarrow}\: b$
be given by the multiset 
$lab(a \:{\boxrightarrow}\: b)$
defined like this:
$lab(a \vdashv b) = \{ a, b \}$
(i.e.\ $lab(a \leadsto b) = \{ a, b \}$),
$lab(a \rightarrow b) = \{ a \}$.
Let $\succ$ be the multiset extension 
of the transitive closure of 
the well-founded relation 
$\to \circ \stackrel{*}{\leadsto}$.
Then by our assumption it readily follows that 
(i) ${\gets}_{\alpha} \circ {\to}_{\beta}
\subseteq 
{\astboxleftrightarrow}_{\cv{\alpha}{\beta}}$
and 
(ii) ${\vdashv}_{\alpha} \circ {\to}_{\beta}
\subseteq 
{\astboxleftrightarrow}_{\cv{\alpha}{\beta}}$
are satisfied.
Thus, from Lemma \ref{lem:ARS},
${\to}$ is Church-Rosser modulo ${\astvdashv}$.\qed

Several corollaries of the theorem follow.

\begin{cor}[Corollary of Propositions 1 and 3 of \cite{JKR83}]
\label{cor:ARS-I}
Let $\vdashv,\to$ be relations on a set $A$
such that $\vdashv$ is symmetric and 
$\to$ is well-founded.
Suppose 
(i) ${\gets} \circ {\to}
\subseteq 
{\stackrel{*}{\to}} \circ
{\eqvdashv}
\circ {\stackrel{*}{\gets}}$
and 
(ii) ${\vdashv} \circ {\to}
\subseteq 
{\stackrel{*}{\to}} \circ
{\eqvdashv}
\circ {\stackrel{*}{\gets}}$.
Then
$\to$ is Church-Rosser modulo $\astvdashv$.
\end{cor}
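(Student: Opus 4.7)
The plan is to derive Corollary~\ref{cor:ARS-I} as a direct instantiation of Theorem~\ref{thm:ARS} by taking the auxiliary relation $\leadsto$ in the theorem to be the empty relation. Under this choice, the more general hypotheses of the theorem collapse exactly to those of the corollary.

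First, I would verify the preconditions of Theorem~\ref{thm:ARS} under this instantiation. With $\leadsto = \emptyset$, the containment $\leadsto \subseteq \vdashv$ holds trivially, and symmetry of $\vdashv$ is unchanged. Its reflexive-transitive closure $\stackrel{*}{\leadsto}$ is the identity relation, so $\to \circ \stackrel{*}{\leadsto}$ reduces to $\to$, which is well-founded by the corollary's hypothesis. Moreover, $\Rightarrow = \leadsto \cup \to$ simplifies to $\to$, and hence $\stackrel{*}{\Rightarrow} = \stackrel{*}{\to}$ and $\stackrel{*}{\Leftarrow} = \stackrel{*}{\gets}$.

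Next, I would match the two inclusion conditions. Condition~(i) of Theorem~\ref{thm:ARS} becomes ${\gets} \circ {\to} \subseteq {\stackrel{*}{\to}} \circ {\eqvdashv} \circ {\stackrel{*}{\gets}}$, which is exactly condition~(i) of the corollary. For condition~(ii), the key observation is that $\stackrel{*}{\to}$ decomposes as the union of the identity relation and $\stackrel{+}{\to}$, so
\[
({\eqvdashv} \circ {\stackrel{*}{\gets}}) \cup ({\to} \circ {\stackrel{*}{\to}} \circ {\eqvdashv} \circ {\stackrel{*}{\gets}}) = {\stackrel{*}{\to}} \circ {\eqvdashv} \circ {\stackrel{*}{\gets}}.
\]
Thus the right-hand side of condition~(ii) of the theorem coincides with that of condition~(ii) of the corollary, and the inclusion assumed in the corollary implies the one required by the theorem.

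The hypotheses of Theorem~\ref{thm:ARS} are therefore satisfied, and its conclusion—that $\to$ is Church-Rosser modulo $\astvdashv$—yields the corollary verbatim. There is no real obstacle here; the only substantive point is the unfolding of $\stackrel{*}{\to}$ that reconciles the more permissive right-hand side of condition~(ii) in the theorem with the simpler form appearing in the corollary.
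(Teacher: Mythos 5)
Your proposal is correct and is exactly the paper's own argument: the paper's proof of Corollary~\ref{cor:ARS-I} simply reads ``Take ${\leadsto} := \emptyset$ in Theorem~\ref{thm:ARS}.'' You have merely spelled out the routine verifications, including the correct identity $({\eqvdashv} \circ {\stackrel{*}{\gets}}) \cup ({\to} \circ {\stackrel{*}{\to}} \circ {\eqvdashv} \circ {\stackrel{*}{\gets}}) = {\stackrel{*}{\to}} \circ {\eqvdashv} \circ {\stackrel{*}{\gets}}$, which the paper leaves implicit.
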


\begin{proof}
Take ${\leadsto} := \emptyset$
in Theorem~\ref{thm:ARS}.
\end{proof}

\begin{cor}
\label{cor:ARS-II}
Let $\vdashv,\to,\leadsto$ be relations on a set $A$
such that $\vdashv$ is symmetric, ${\leadsto} \subseteq {\vdashv}$,
and $\to \circ \stackrel{*}{\leadsto}$ is well-founded.
Let ${\Rightarrow} =  {\to \cup \leadsto}$.
Suppose 
(i) ${\gets} \circ {\to}
\subseteq 
{\stackrel{*}{\Rightarrow}}
\circ {\eqvdashv} \circ 
{\stackrel{*}{\Leftarrow}}$
and 
(ii) ${\vdashv} \circ {\to}
\subseteq 
{\to} \circ
{\stackrel{*}{\Rightarrow}}
\circ {\eqvdashv} \circ 
{\stackrel{*}{\Leftarrow}}$.
Then
$\to$ is Church-Rosser modulo $\astvdashv$.
\end{cor}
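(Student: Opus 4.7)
The plan is to deduce this corollary as an immediate consequence of Theorem~\ref{thm:ARS} applied to the very same triple $\vdashv$, $\to$, $\leadsto$. I would begin by matching the preconditions: symmetry of $\vdashv$, the inclusion $\leadsto \subseteq \vdashv$, and the well-foundedness of $\to \circ \stackrel{*}{\leadsto}$ are carried over verbatim, and since set-theoretic union is commutative the two definitions ${\Rightarrow} = {\to \cup \leadsto}$ and ${\Rightarrow} = {\leadsto \cup \to}$ coincide. Condition~(i) of the theorem is literally condition~(i) of the corollary, so nothing is to be done there.

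The only nontrivial check is to verify that condition~(ii) of the corollary implies condition~(ii) of the theorem. This reduces to the trivial set-theoretic inclusion
\[
{\to} \circ {\stackrel{*}{\Rightarrow}} \circ {\eqvdashv} \circ {\stackrel{*}{\Leftarrow}}
\;\subseteq\;
({\eqvdashv} \circ {\stackrel{*}{\Leftarrow}}) \cup ({\to} \circ {\stackrel{*}{\Rightarrow}} \circ {\eqvdashv} \circ {\stackrel{*}{\Leftarrow}}),
\]
namely that a set is contained in its union with anything else. Hence every $\vdashv$-$\to$ peak joinable in the stronger form demanded by the corollary is \emph{a fortiori} joinable in one of the two forms allowed by the theorem.

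With both hypotheses in place, Theorem~\ref{thm:ARS} yields that $\to$ is Church-Rosser modulo $\astvdashv$, which is precisely what the corollary asserts. There is no genuine obstacle here; this corollary exists as a convenient packaging that eliminates the disjunction in the theorem's hypothesis~(ii) at the price of a uniformly stronger second condition, which is typically easier to discharge in concrete applications to term rewriting.
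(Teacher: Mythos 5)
Your proof is correct and takes essentially the same approach as the paper: the paper's proof likewise just observes that condition (ii) of the corollary is the second disjunct of condition (ii) of Theorem~\ref{thm:ARS}, so the theorem applies directly.
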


\begin{proof}
Take the case
${\vdashv} \circ {\to}
\subseteq
{\to} \circ
{\stackrel{*}{\Rightarrow}} \circ
{\eqvdashv}
\circ {\stackrel{*}{\Leftarrow}}$ 
for the condition $(ii)$
in Theorem~\ref{thm:ARS}.
\end{proof}

In case ${\leadsto} := {\vdashv}$,
necessary and sufficient conditions for CRM
are obtained.

\begin{cor}
\label{cor:ARS-II'}
Let $\vdashv,\to$ be relations on a set $A$
such that $\vdashv$ is symmetric and 
$\to \circ \astvdashv$ is well-founded.
Let ${\Rightarrow} =  {\to \cup \vdashv}$.
Then
$\to$ is Church-Rosser modulo $\astvdashv$
if and only if 
(i) ${\gets} \circ {\to}
\subseteq 
{\stackrel{*}{\Rightarrow}}
\circ {\stackrel{*}{\Leftarrow}}$
and 
(ii) ${\vdashv} \circ {\to}
\subseteq 
{\to} \circ
{\stackrel{*}{\Rightarrow}}
\circ {\stackrel{*}{\Leftarrow}}$.
\end{cor}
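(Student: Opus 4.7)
The plan is to prove the two directions separately, with the \emph{if} direction reduced to the preceding corollary. Setting $\leadsto := \vdashv$ in Corollary~\ref{cor:ARS-II}, the relation $\Rightarrow = \to \cup \leadsto$ coincides with the $\Rightarrow = \to \cup \vdashv$ of the present statement, and the well-foundedness of $\to \circ \stackrel{*}{\leadsto}$ is precisely the assumed well-foundedness of $\to \circ \astvdashv$. Since a trivial (identity) $\eqvdashv$-step may be inserted anywhere, conditions (i) and (ii) here imply the corresponding conditions of Corollary~\ref{cor:ARS-II}, and CRM follows.

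For the \emph{only if} direction, assume $\to$ is Church-Rosser modulo $\astvdashv$. Condition (i) is immediate: a peak $a \gets b \to c$ gives $a \stackrel{*}{\bowtie} c$, so CRM supplies $a \stackrel{*}{\to} p \astvdashv q \stackrel{*}{\gets} c$; since $\stackrel{*}{\to} \subseteq \stackrel{*}{\Rightarrow}$ and $\astvdashv \subseteq \stackrel{*}{\Rightarrow}$, this collapses to $a \stackrel{*}{\Rightarrow} q \stackrel{*}{\Leftarrow} c$. For condition (ii), given $a \vdashv b \to c$, we again have $a \stackrel{*}{\bowtie} c$, so CRM provides $a \stackrel{*}{\to} p \astvdashv q \stackrel{*}{\gets} c$. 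If the initial segment $a \stackrel{*}{\to} p$ contains at least one step, say $a \to a' \stackrel{*}{\to} p$, then that first step serves as the required initial $\to$ and the rest of the sequence fits into $\stackrel{*}{\Rightarrow} \circ \stackrel{*}{\Leftarrow}$.

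The main obstacle is ruling out the degenerate case where no $\to$ step is taken from $a$, i.e.\ $p = a$. Suppose for contradiction this happens, so $a \astvdashv q$ with $c \stackrel{*}{\to} q$. Since $a \vdashv b$ gives $a \astvdashv b$, transitivity of $\astvdashv$ yields $b \astvdashv q$. Writing the derivation as $b \to c = c_0 \to c_1 \to \cdots \to c_n = q$ and pairing each step with reflexivity of $\astvdashv$ shows that $(b,c), (c,c_1), \ldots, (c_{n-1}, q)$ all belong to $\to \circ \astvdashv$, while the last step also fits as $(c_{n-1}, b) \in \to \circ \astvdashv$ via $c_{n-1} \to q \astvdashv b$ when $n \ge 1$, or degenerates to the self-loop $(b,b) \in \to \circ \astvdashv$ via $b \to c \astvdashv b$ when $c = q$. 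In either case one obtains a cycle in $\to \circ \astvdashv$, and hence an infinite descending chain, contradicting the assumed well-foundedness of $\to \circ \astvdashv$. Therefore $p = a$ is impossible, and (ii) follows.
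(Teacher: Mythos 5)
Your proof is correct and follows essentially the same route as the paper: the ``if'' direction is obtained by instantiating Corollary~\ref{cor:ARS-II} with ${\leadsto} := {\vdashv}$, and the ``only if'' direction derives (ii) from CRM and excludes the degenerate case (no initial $\to$-step) by exhibiting a cycle contradicting the well-foundedness of ${\to} \circ {\astvdashv}$. Your unwinding of that cycle into an explicit infinite descending chain just spells out a step the paper leaves implicit.
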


\begin{proof}
($\Rightarrow$)
(i) is trivial.
To show (ii),
suppose $a \vdashv \circ \to b$.
Then 
$a 
\stackrel{*}{\to}
\circ \astvdashv 
\circ \stackrel{*}{\gets} b$ by our assumption.
Since 
${\stackrel{+}{\to}}
\circ {\astvdashv} 
\circ {\stackrel{*}{\gets}}
\subseteq 
{\to} \circ {\stackrel{*}{\Rightarrow}} \circ {\stackrel{*}{\Leftarrow}}$,
it remains to exclude the
case $a \astvdashv \circ \stackrel{*}{\gets} b$.
If $a \astvdashv \circ \stackrel{*}{\gets} b$
then 
$a \vdashv \circ \to b
\stackrel{*}{\to} \circ \astvdashv a$.
This contradicts our assumption that 
$\to \circ \astvdashv$
is well-founded.
($\Leftarrow$)
follows from Corollary~\ref{cor:ARS-II}.
\end{proof}

The conditions (i), (ii) can be
replaced with particular forms.

\begin{cor}
\label{cor:ARS-III}
Let $\vdashv,\to$ be relations on a set $A$
such that $\vdashv$ is symmetric and 
$\to \circ \astvdashv$ is well-founded.
Then
$\to$ is Church-Rosser modulo $\astvdashv$
if and only if 
(i) ${\gets} \circ {\to}
\subseteq 
{\stackrel{*}{\to}} \circ {\astvdashv} \circ {\stackrel{*}{\gets}}$
and 
(ii) ${\vdashv} \circ {\to}
\subseteq 
{\stackrel{+}{\to}} \circ
{\astvdashv} \circ {\stackrel{*}{\gets}}$.
\end{cor}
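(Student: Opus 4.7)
The plan is to prove the two directions separately. The $(\Leftarrow)$ direction will be a short reduction to Corollary~\ref{cor:ARS-II'}, while the $(\Rightarrow)$ direction will be a direct unpacking of the definition of CRM, with the well-foundedness of $\to \circ \astvdashv$ invoked only to rule out one degenerate sub-case of condition~(ii).

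For $(\Leftarrow)$, I would take ${\Rightarrow} = {\to \cup \vdashv}$ as in Corollary~\ref{cor:ARS-II'}. Since $\vdashv$ is symmetric, $\vdashv$ lies in both $\Rightarrow$ and $\Leftarrow$, so the equivalence closure $\astvdashv$ sits inside both $\stackrel{*}{\Rightarrow}$ and $\stackrel{*}{\Leftarrow}$. Hence the present condition~(i) implies $\gets \circ \to \subseteq \stackrel{*}{\Rightarrow} \circ \stackrel{*}{\Leftarrow}$ (absorbing the $\astvdashv$-step into, say, $\stackrel{*}{\Rightarrow}$), and the present condition~(ii) implies $\vdashv \circ \to \subseteq \to \circ \stackrel{*}{\Rightarrow} \circ \stackrel{*}{\Leftarrow}$. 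Applying Corollary~\ref{cor:ARS-II'} then yields that $\to$ is Church-Rosser modulo $\astvdashv$.

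For $(\Rightarrow)$, assume $\to$ is CRM $\astvdashv$ and write $\bowtie = {\leftrightarrow \cup \astvdashv}$. Condition~(i) is immediate: $a \gets b \to c$ gives $a \stackrel{*}{\leftrightarrow} c$, hence $a \stackrel{*}{\bowtie} c$, and CRM furnishes a decomposition of the required shape. For condition~(ii), given $a \vdashv b \to c$ we again have $a \stackrel{*}{\bowtie} c$, so CRM yields $a \stackrel{*}{\to} d \astvdashv e \stackrel{*}{\gets} c$ for some $d,e$. If the left segment is non-empty we are done. Otherwise $a = d$, so $a \astvdashv e$ with $c \stackrel{*}{\to} e$; concatenating with $b \to c$ gives $b \stackrel{+}{\to} e$, and transitivity of $\astvdashv$ through $a$ yields $e \astvdashv b$. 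Writing the chain as $b \to b_1 \to \cdots \to b_n = e$ and cycling via $e \astvdashv b$, one then exhibits an infinite sequence in $\to \circ \astvdashv$, contradicting well-foundedness.

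The main obstacle is precisely this exclusion step in $(\Rightarrow)$-(ii)---ensuring that the first $\stackrel{*}{\to}$-segment of the CRM decomposition cannot be empty. The argument mirrors the one already used in the proof of Corollary~\ref{cor:ARS-II'} and depends crucially on the hypothesis that $\to \circ \astvdashv$ (not merely $\to$) is well-founded; once this case is dispatched, both directions close immediately.
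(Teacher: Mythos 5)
Your proposal is correct and follows essentially the same route as the paper: the $(\Leftarrow)$ direction is the same reduction to Corollary~\ref{cor:ARS-II'} via ${\Rightarrow} = {\to \cup \vdashv}$ with the $\astvdashv$-step absorbed into $\stackrel{*}{\Rightarrow}$, and the $(\Rightarrow)$ direction uses the identical cycle argument, deriving $b \stackrel{+}{\to} \circ \astvdashv\: b$ from an empty first segment to contradict well-foundedness of $\to \circ \astvdashv$. No gaps.
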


\begin{proof}
($\Rightarrow$)
(i) is trivial.
To show (ii),
suppose $a \vdashv \circ \to b$.
Then 
$a 
\stackrel{*}{\to}
\circ \astvdashv 
\circ \stackrel{*}{\gets} b$ by our assumption.
Thus it remains to exclude the
case $a \astvdashv \circ \stackrel{*}{\gets} b$.
If $a \astvdashv \circ \stackrel{*}{\gets} b$
then 
$a \vdashv \circ \to b
\stackrel{*}{\to} \circ \astvdashv a$.
This contradicts our assumption that 
$\to \circ \astvdashv$
is well-founded.
($\Leftarrow$)
Let ${\Rightarrow} = {\to \cup \vdashv}$.
Then 
${\gets} \circ {\to}
\subseteq 
{\stackrel{*}{\to}} \circ {\astvdashv} \circ {\stackrel{*}{\gets}}
\subseteq 
{\stackrel{*}{\Rightarrow}} \circ {\stackrel{*}{\Leftarrow}}$
and 
${\vdashv} \circ {\to}
\subseteq 
{\stackrel{+}{\to}} \circ
{\astvdashv} \circ {\stackrel{*}{\gets}}
\subseteq 
{\to} \circ
{\stackrel{*}{\Rightarrow}}
\circ 
{\stackrel{*}{\Leftarrow}}$.
Hence the claim follows
from Corollary~\ref{cor:ARS-II'}.
\end{proof}

\begin{rem}
\label{rem:Differnce with Huet}
Note that 
since 
Corollaries \ref{cor:ARS-II'} and \ref{cor:ARS-III} 
give the necessary and sufficient conditions for CRM,
the conditions (i), (ii) of Corollary \ref{cor:ARS-II'} 
imply
the conditions (i), (ii) of Corollary \ref{cor:ARS-III}.
\end{rem}

The condition 
(ii) ${\vdashv} \circ {\to}
\subseteq 
{\stackrel{+}{\to}} \circ
{\astvdashv} \circ {\stackrel{*}{\gets}}$
in this corollary
can be replaced with 
(ii) ${\vdashv} \circ {\to}
\subseteq 
{\stackrel{*}{\to}} \circ
{\astvdashv} \circ {\stackrel{*}{\gets}}$
as shown in the next corollary.

\begin{cor}[Lemma 2.8 of \cite{Hue80}]
\label{cor:Huet80}
Let $\vdashv,\to$ be relations on a set $A$
such that $\vdashv$ is symmetric and 
$\to \circ \astvdashv$ is well-founded.
Then
$\to$ is Church-Rosser modulo $\astvdashv$
if and only if 
(i) ${\gets} \circ {\to}
\subseteq 
{\stackrel{*}{\to}} \circ {\astvdashv} \circ {\stackrel{*}{\gets}}$
and 
(ii) ${\vdashv} \circ {\to}
\subseteq 
{\stackrel{*}{\to}} \circ
{\astvdashv} \circ {\stackrel{*}{\gets}}$.
\end{cor}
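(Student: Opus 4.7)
The plan is to reduce to Corollary \ref{cor:ARS-III}, whose conditions differ from those here only in that its condition (ii) uses $\stackrel{+}{\to}$ rather than $\stackrel{*}{\to}$. The forward direction ($\Rightarrow$) is immediate, since the Church-Rosser modulo property yields $\stackrel{*}{\bowtie} \subseteq \stackrel{*}{\to} \circ \astvdashv \circ \stackrel{*}{\gets}$, and both $\gets \circ \to$ and $\vdashv \circ \to$ are contained in $\stackrel{*}{\bowtie}$.

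For the backward direction, I will upgrade condition (ii) here to the stronger form with $\stackrel{+}{\to}$ and then invoke Corollary \ref{cor:ARS-III}. Fix $a \vdashv b \to c$. The present hypothesis yields $a \stackrel{*}{\to} d \astvdashv e \stackrel{*}{\gets} c$ for some $d,e$; the only case beyond what Corollary \ref{cor:ARS-III} already handles is the zero-step case $a = d$. In that case, concatenating the chains gives $b \to c \stackrel{*}{\to} e$ together with $e \astvdashv a \vdashv b$, so $b \stackrel{+}{\to} e$ and $e \astvdashv b$.

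From such a cycle $b = x_0 \to x_1 \to \cdots \to x_n = e$ with $n \geq 1$ and $x_n \astvdashv x_0$, I would extract an infinite $(\to \circ \astvdashv)$-chain by setting $y_k := x_{k \bmod n}$: for $k \not\equiv n-1 \pmod n$ the step from $y_k$ to $y_{k+1}$ is an internal $\to$-arrow of the cycle combined with a trivial reflexive $\astvdashv$-step, while for $k \equiv n-1 \pmod n$ the step $x_{n-1} \to x_n$ is followed by $x_n \astvdashv x_0$. This contradicts the well-foundedness of $\to \circ \astvdashv$, forcing $a \stackrel{+}{\to} d$, after which Corollary \ref{cor:ARS-III} gives the conclusion. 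The cycle-to-infinite-chain construction is the only step that requires any care, but it is routine and parallels the contradiction argument already used in the proof of Corollary \ref{cor:ARS-II'}.
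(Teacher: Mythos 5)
Your proof is correct and takes essentially the same route as the paper: both directions reduce the statement to Corollary~\ref{cor:ARS-III} by showing that, under well-foundedness of $\to \circ \astvdashv$, the zero-$\to$-step case in condition (ii) would produce a cycle $b \stackrel{+}{\to} \circ \astvdashv b$ and hence an infinite descending $(\to \circ \astvdashv)$-chain. Your explicit unwinding of the cycle into an infinite chain is merely a spelled-out version of the contradiction the paper leaves implicit.
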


\begin{proof}
($\Rightarrow$) is trivial.
($\Leftarrow$)
We show our conditions
imply 
${\vdashv} \circ {\to}
\subseteq 
{\stackrel{+}{\to}} \circ
{\astvdashv} \circ {\stackrel{*}{\gets}}$.
Then the claim follows
from Corollary~\ref{cor:ARS-III}.
Suppose contrarily that
we have $a \vdashv {\circ} \to b$
but not 
$a \stackrel{+}{\to} \circ
\astvdashv  \circ \stackrel{*}{\gets} b$.
From $a \vdashv {\circ} \to b$,
we have
$a \stackrel{*}{\to} \circ
\astvdashv  \circ \stackrel{*}{\gets} b$ by our assumption.
Hence $a \astvdashv {\circ} \stackrel{*}{\gets} b$ holds.
Thus 
$a \vdashv {\circ} \to b
\stackrel{*}{\to}
{\circ}  \astvdashv a$.
This contradicts our assumption that 
$\to \circ \astvdashv$
is well-founded.
\end{proof}


\begin{rem}
The proof of Corollary \ref{cor:Huet80}
given in \cite{Hue80} is based on a combinatorial argument;
another proof of Corollary \ref{cor:Huet80}
based on an argument similar to the proof of Lemma \ref{lem:ARS}
has been given in \cite{AMrep1991}.
\end{rem}

We now give a proof of Theorem 5 of \cite{JK86}
based on Lemma~\ref{lem:ARS}.

\begin{lem}
\label{lem:For JK86}
Let $\vdashv,\to$ be relations on a set $A$
such that $\vdashv$ is symmetric and 
$\to \circ \astvdashv$ is well-founded.
Let $\blacktriangleright$ be a relation on $A$ satisfying
${\to} \subseteq {\blacktriangleright} \subseteq 
{\astvdashv} \circ {\to} \circ {\astvdashv}$.
Then 
$\blacktriangleright$ is Church-Rosser modulo $\astvdashv$
 if and only if
(i)$ {\gets} \circ {\blacktriangleright}
\subseteq 
{\stackrel{*}{\blacktriangleright}}
\circ {\astvdashv}
\circ {\stackrel{*}{\blacktriangleleft}}$
and 
(ii) ${\vdashv} \circ {\blacktriangleright}
\subseteq 
{\stackrel{+}{\blacktriangleright}}
\circ {\astvdashv}
\circ {\stackrel{*}{\blacktriangleleft}}$.
\end{lem}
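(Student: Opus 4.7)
The plan is to split the biconditional, using throughout the preliminary observation that $\blacktriangleright \circ \astvdashv$ is well-founded. This follows from the sandwich inclusion ${\blacktriangleright} \subseteq {\astvdashv} \circ {\to} \circ {\astvdashv}$ combined with the hypothesis that ${\to} \circ {\astvdashv}$ is well-founded: any infinite $\blacktriangleright \circ \astvdashv$-chain would, after absorbing the $\astvdashv$-sandwiches, yield an infinite $\to \circ \astvdashv$-chain, a contradiction.

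For the ``only if'' direction I would follow the template of Corollary~\ref{cor:ARS-III}. Condition~(i) is immediate since ${\to} \subseteq {\blacktriangleright}$ gives ${\gets} \subseteq {\blacktriangleleft}$, so every $\gets \circ \blacktriangleright$ peak is a $\blacktriangleleft \circ \blacktriangleright$ peak resolved by the Church-Rosser property of $\blacktriangleright$. For condition~(ii), that property yields $a \stackrel{*}{\blacktriangleright} \circ \astvdashv \circ \stackrel{*}{\blacktriangleleft} b$; the strictness $\stackrel{+}{\blacktriangleright}$ is obtained by the standard cycle argument, namely that an empty $\blacktriangleright$-prefix would produce $c \stackrel{+}{\blacktriangleright} \circ \astvdashv c$, contradicting the well-foundedness of $\blacktriangleright \circ \astvdashv$ established above.

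For the ``if'' direction the plan is to apply Lemma~\ref{lem:ARS} with $\to_I$ instantiated by $\blacktriangleright$ and $\vdashv_I$ by $\vdashv$, using the labelling $lab(a \blacktriangleright b) = \{a\}$ and $lab(a \vdashv b) = \{a, b\}$, ordered by the multiset extension of the transitive closure of $\blacktriangleright \circ \astvdashv$. Hypothesis~(ii) of our lemma handles the Lemma~\ref{lem:ARS}-peak $a \vdashv c \blacktriangleright b$: the mandatory first $\blacktriangleright$-step has label $\{a\}$ strictly below $\{a, c\}$ in the multiset order, and the subsequent steps descend further. The delicate case is the peak $a \blacktriangleleft c \blacktriangleright b$, because our hypothesis~(i) covers only the narrower peak $\gets \circ \blacktriangleright$. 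My plan is to use the sandwich inclusion to decompose $c \blacktriangleright a$ as $c \astvdashv x \to y \astvdashv a$, producing the expanded diagram $a \astvdashv y \gets x \astvdashv c \blacktriangleright b$, then apply hypothesis~(ii) iteratively to transport the $\blacktriangleright b$-step across the $\astvdashv$-chain between $x$ and $c$ until a plain $\gets \circ \blacktriangleright$ peak is exposed, which hypothesis~(i) resolves.

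The main obstacle will be the label accounting in this last step. The intermediate elements along the $\astvdashv$-chain between $x$ and $c$ are only $\astvdashv$-equivalent to $c$, so the fresh $\blacktriangleright$-steps generated by iterating~(ii) have singleton labels that are not a priori strictly below $\{c\}$ in the base well-founded order. Making the multiset-order induction of Lemma~\ref{lem:ARS} go through requires controlling these labels by the adjacent $\vdashv$-step labels they grow out of, and this bookkeeping is where the technical weight of the argument sits.
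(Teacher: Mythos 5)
Your overall architecture matches the paper's: the ``only if'' direction via the cycle argument is correct, and for the ``if'' direction the paper likewise invokes Lemma~\ref{lem:ARS} with the source/step labelling $lab(a \blacktriangleright b) = \{a\}$, $lab(a \vdashv b) = \{a,b\}$, decomposes the $\blacktriangleright$-step through the sandwich ${\blacktriangleright} \subseteq {\astvdashv} \circ {\to} \circ {\astvdashv}$, and transports the opposing $\blacktriangleright$-step along the resulting $\vdashv$-chain by iterating condition (ii) (formally, an induction on the length of that chain) before closing with condition (i).

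However, the obstacle you flag at the end is a genuine gap, and your choice of base order is exactly what makes it unresolvable as stated. Take the peak $a \blacktriangleleft c \blacktriangleright b$ with $c \blacktriangleright a$ decomposed as $c \vdashv c_1 \vdashv \cdots \vdashv c_k = x \to y \astvdashv a$. Iterating (ii) produces elements $b_i$ with $c_i \blacktriangleright b_i$ and segments $b_{i+1} \stackrel{*}{\blacktriangleright} \circ {\astvdashv} \circ \stackrel{*}{\blacktriangleleft} b_i$ in the final conversion; the first step of each such segment carries the label $\{b_{i+1}\}$, so the multiset induction of Lemma~\ref{lem:ARS} requires $c \rhd b_{i+1}$ in the base order $\rhd$. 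With your choice ${\rhd} := ({\blacktriangleright} \circ {\astvdashv})^+$ this fails: all that is known is $c \astvdashv c_{i+1} \blacktriangleright b_{i+1}$, and a $({\blacktriangleright}\circ{\astvdashv})^+$-chain cannot begin with an $\astvdashv$-prefix. The missing idea is to take ${\rhd} := ({\astvdashv} \circ {\to} \circ {\astvdashv})^+$ instead; this is still well-founded (an infinite $\rhd$-chain collapses to an infinite ${\to}\circ{\astvdashv}$-chain), and it satisfies the absorption properties ${\astvdashv} \circ {\rhd} \subseteq {\rhd}$ and ${\astvdashv}\circ{\blacktriangleright} \subseteq {\rhd}$. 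These give $c \rhd b_{i+1}$ immediately and also allow the labels accumulated further along the chain, which are a priori only below $\{c_{i+1}\}$, to be promoted below $\{c\}$ via $\curlyvee\{c_{i+1}\} \subseteq \curlyvee\{c\}$. That absorption property is the entire content of the bookkeeping you defer, so the proof is incomplete without it.
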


\proof 
($\Rightarrow$) 
(i) is trivial.
(ii) 
Suppose 
$a \vdashv \circ \blacktriangleright b$.
Then 
$a \stackrel{*}{\blacktriangleright}
\circ \astvdashv
\circ \stackrel{*}{\blacktriangleleft} b$ by our assumption.
Thus it remains to exclude the case
$a \astvdashv
\circ \stackrel{*}{\blacktriangleleft} b$.
If $a \astvdashv
\circ \stackrel{*}{\blacktriangleleft} b$
then 
$a \vdashv \circ \blacktriangleright b
\stackrel{*}{\blacktriangleright} \circ
\astvdashv a$.
Hence
$a \:({\astvdashv} \circ {\to} \circ {\astvdashv})^+ \:a$.
This contradicts our assumption
that $\to \circ \astvdashv$ is well-founded.
($\Leftarrow$)
By assumption, ${\astvdashv} \circ {\to} \circ
{\astvdashv}$ is well-founded.  Let ${\rhd} = ({\astvdashv} \circ
{\to} \circ {\astvdashv})^+$. Clearly, $\rhd$ is well-founded.
Note that $a \astvdashv \circ \stackrel{+}{\blacktriangleright}
\circ \astvdashv b$ implies $a \rhd b$.
Let ${\Rightarrow} = {\blacktriangleright} \cup {\vdashv}$.  For each
step $a \Leftrightarrow b$, let its index be given by the multiset
$lab(a \vdashv b) = \{ a, b \}$, $lab(a \blacktriangleright b) = \{
a \}$ and $lab(a \blacktriangleleft  b) = \{ b \}$.  Let $\succ$ be
the multiset extension of $\rhd$.  Then, $\succ$ is a well-founded
relation on the set of indexes.
By Lemma~\ref{lem:ARS},
it suffices to show
(i$'$) 
${\blacktriangleleft}_{\alpha}  \circ {\blacktriangleright}_{\beta}
\subseteq 
{\stackrel{*}{\Leftrightarrow}}_{\cv{\alpha}{\beta}}$
and
(ii$'$)
${\vdashv}_{\alpha}  \circ {\blacktriangleright}_{\beta}
\subseteq 
{\stackrel{*}{\Leftrightarrow}}_{\cv{\alpha}{\beta}}$.

To prove (i$'$),
we claim that for any 
$b \blacktriangleleft a  = a_0 \vdashv a_1 \vdashv \cdots
\vdashv a_m \to c'  \astvdashv c$,
we have $b \stackrel{*}{\Leftrightarrow}_{\curlyvee \{ a \}} c$.
As $\alpha = \beta = \{ a \}$,
(i$'$) immediately follows from this claim.
Our proof proceeds by induction on $m$.
\begin{iteMize}{$\bullet$}
\item
Suppose $m = 0$.
Then we have $b \blacktriangleleft a  \to c'  \astvdashv c$.
Then, by our condition (i),
$b \stackrel{*}{\blacktriangleright}
\circ \astvdashv
\circ \stackrel{*}{\blacktriangleleft}
c' \astvdashv c$.
Since $b \blacktriangleleft a$ and $a \blacktriangleright c'$,
we have $b \stackrel{*}{\Leftrightarrow}_{\curlyvee \{ a \}} c$.

\item
Suppose $m > 0$.  Then, since we have $b \blacktriangleleft a \vdashv
a_1$, by our condition (ii), $b \stackrel{*}{\blacktriangleright}
\circ \astvdashv \circ \stackrel{*}{\blacktriangleleft} a_1'
\blacktriangleleft a_1$ for some $a_1'$.  Then we have $a_1'
\blacktriangleleft a_1 \vdashv \cdots \vdashv a_m \to c'
\astvdashv c$, and hence, by induction hypothesis, $a_1'
\stackrel{*}{\Leftrightarrow}_{\curlyvee \{ a_1 \}} c$.  Since $a
\vdashv a_1$, $a_1 \rhd d$ implies $a \rhd d$ for any $d$.  Hence
$a_1' \stackrel{*}{\Leftrightarrow}_{\curlyvee \{ a \}} c$.  Thus we
have $b \stackrel{*}{\blacktriangleright} \circ \astvdashv \circ
\stackrel{*}{\blacktriangleleft} a_1'
\stackrel{*}{\Leftrightarrow}_{\curlyvee \{ a \}} c$.  Hence, by $a
\blacktriangleright b$ and $a \vdashv \circ \blacktriangleright
a_1'$, we have $b \stackrel{*}{\Leftrightarrow}_{\curlyvee \{a \}} c$.
\end{iteMize}

To prove (ii$'$),
we claim that for any 
$b \vdashv a  \blacktriangleright c$,
we have $b \stackrel{*}{\Leftrightarrow}_{\curlyvee \{a, b \}} c$.
From the condition (ii) we have $b \blacktriangleright b' 
\stackrel{*}{\blacktriangleright} \circ  \astvdashv \circ
\stackrel{*}{\blacktriangleleft} c$.
 By $a  \vdashv b \blacktriangleright b'$ and 
$a  \blacktriangleright c$ we have 
 $b' \stackrel{*}{\Leftrightarrow}_{\curlyvee \{a \}} c$, and 
by $b \blacktriangleright b'$ we have 
$b \stackrel{*}{\Leftrightarrow}_{\curlyvee \{a, b \}} b'$. Hence, 
we obtain $b \stackrel{*}{\Leftrightarrow}_{\curlyvee \{a, b \}} c$.

Thus, from Lemma~\ref{lem:ARS}, we conclude $\blacktriangleright$ is
Church-Rosser modulo $\astvdashv$.\qed

Let $\blacktriangleright$ be a relation on a set $A$ satisfying
${\to} \subseteq {\blacktriangleright} \subseteq 
{\astvdashv} \circ {\to} \circ {\astvdashv}$.
Then
$\to$ is said to be \textit{$\blacktriangleright$-Church-Rosser 
modulo $\astvdashv$}
if 
${\stackrel{*}{\bowtie}}\subseteq 
{\stackrel{*}{\blacktriangleright}}
\circ {\astvdashv}
\circ {\stackrel{*}{\blacktriangleleft}}$,
where 
${\bowtie} = {\leftrightarrow \cup  \vdashv}$. 

\begin{cor}[Theorem 5 of \cite{JK86}]
\label{cor:Theorem of JK86 revised}
Let $\vdashv,\to$ be relations on a set $A$
such that $\vdashv$ is symmetric and 
$\to \circ \astvdashv$ is well-founded.
Let $\blacktriangleright$ be a relation on $A$ satisfying
${\to} \subseteq {\blacktriangleright} \subseteq 
{\astvdashv} \circ {\to} \circ {\astvdashv}$.
Then 
$\to$ is $\blacktriangleright$-Church-Rosser 
modulo $\astvdashv$
if and only if
(i) ${\gets} \circ {\blacktriangleright}
\subseteq 
{\stackrel{*}{\blacktriangleright}}
\circ {\astvdashv}
\circ {\stackrel{*}{\blacktriangleleft}}$
and 
(ii) ${\vdashv} \circ {\blacktriangleright}
\subseteq 
{\stackrel{+}{\blacktriangleright}}
\circ {\astvdashv}
\circ {\stackrel{*}{\blacktriangleleft}}$.
\end{cor}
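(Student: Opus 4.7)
The plan is to derive the corollary directly from Lemma~\ref{lem:For JK86}, which establishes the same equivalence for the stronger property ``$\blacktriangleright$ is Church-Rosser modulo $\astvdashv$''.

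For the ($\Leftarrow$) direction, I would apply Lemma~\ref{lem:For JK86} to obtain ${\stackrel{*}{\bowtie'}} \subseteq {\stackrel{*}{\blacktriangleright}} \circ {\astvdashv} \circ {\stackrel{*}{\blacktriangleleft}}$ with ${\bowtie'} = (\blacktriangleright \cup \blacktriangleleft) \cup \vdashv$. Since ${\to} \subseteq {\blacktriangleright}$ gives ${\bowtie} \subseteq {\bowtie'}$, it follows that ${\stackrel{*}{\bowtie}} \subseteq {\stackrel{*}{\blacktriangleright}} \circ {\astvdashv} \circ {\stackrel{*}{\blacktriangleleft}}$, which is precisely the $\blacktriangleright$-Church-Rosser property of $\to$ modulo $\astvdashv$.

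For the ($\Rightarrow$) direction, I assume $\to$ is $\blacktriangleright$-Church-Rosser modulo $\astvdashv$ and derive (i) and (ii). For (i), given a peak $b \blacktriangleleft a \blacktriangleright c$, I would use ${\blacktriangleright} \subseteq {\astvdashv} \circ {\to} \circ {\astvdashv}$ to unfold both $\blacktriangleright$-steps into sequences of $\vdashv$- and $\leftrightarrow$-steps, yielding $b \stackrel{*}{\bowtie} c$; the hypothesis then supplies the required $\stackrel{*}{\blacktriangleright} \circ \astvdashv \circ \stackrel{*}{\blacktriangleleft}$-conversion. Condition (ii) is treated the same way starting from $b \vdashv a \blacktriangleright c$.

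The main obstacle is the remaining task of promoting $\stackrel{*}{\blacktriangleright}$ to $\stackrel{+}{\blacktriangleright}$ in (ii). I plan to argue by contradiction, mimicking the ($\Rightarrow$) direction of Lemma~\ref{lem:For JK86}: if only $b \astvdashv d \stackrel{*}{\blacktriangleleft} c$ held for some $d$, combining with $b \vdashv a \blacktriangleright c$ would yield $a \stackrel{+}{\blacktriangleright} d$ together with $a \astvdashv d$, hence $a \,(\astvdashv \circ \to \circ \astvdashv)^+\, a$. This contradicts well-foundedness of $(\astvdashv \circ \to \circ \astvdashv)^+$, which follows from the assumed well-foundedness of $\to \circ \astvdashv$ as observed in the proof of Lemma~\ref{lem:For JK86}.
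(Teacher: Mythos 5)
Your proposal is correct and essentially matches the paper's proof: the paper likewise reduces the corollary to Lemma~\ref{lem:For JK86}, observing that ${\to} \subseteq {\blacktriangleright} \subseteq {\astvdashv} \circ {\to} \circ {\astvdashv}$ forces ${\stackrel{*}{\bowtie}} = {\stackrel{*}{\Leftrightarrow}}$, so that $\blacktriangleright$-Church-Rosser modulo $\astvdashv$ for $\to$ and Church-Rosser modulo $\astvdashv$ for $\blacktriangleright$ are the same property and both directions follow at once. Your ($\Rightarrow$) direction re-derives by hand (including the well-foundedness contradiction for promoting $\stackrel{*}{\blacktriangleright}$ to $\stackrel{+}{\blacktriangleright}$) what the paper obtains for free from the reverse inclusion ${\stackrel{*}{\Leftrightarrow}} \subseteq {\stackrel{*}{\bowtie}}$ together with the lemma's own ($\Rightarrow$) direction, but the argument is sound.
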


\begin{proof}
Let ${\bowtie} = {\leftrightarrow \cup  \vdashv}$
and ${\Rightarrow} = {\blacktriangleright} \cup {\vdashv}$. From
${\to} \subseteq {\blacktriangleright} \subseteq {\astvdashv} \circ
{\to} \circ {\astvdashv}$, it follows that ${\bowtie}\subseteq
{\Leftrightarrow} \subseteq {\stackrel{*}{\bowtie}}$.  Hence, we have
${\stackrel{*}{\bowtie}} = {\stackrel{*}{\Leftrightarrow}}$. 
Thus, from Lemma~\ref{lem:For JK86}, the claim  follows.
\end{proof}

\begin{rem}
As ${\blacktriangleright} \circ {\astvdashv}$
is well-founded,
the condition 
(ii) ${\vdashv} \circ {\blacktriangleright}
\subseteq 
{\stackrel{+}{\blacktriangleright}}
\circ {\astvdashv}
\circ {\stackrel{*}{\blacktriangleleft}}$
in this corollary
can be replaced with
(ii) ${\vdashv} \circ {\blacktriangleright}
\subseteq 
{\stackrel{*}{\blacktriangleright}}
\circ {\astvdashv}
\circ {\stackrel{*}{\blacktriangleleft}}$,
similar to Corollaries
\ref{cor:ARS-III} and \ref{cor:Huet80}.
\end{rem}

\begin{rem}
A $\blacktriangleright$-normal form of an element $a$
is an element $b$ such that
$a \stackrel{*}{\blacktriangleright} b$
and $b \blacktriangleright c$ for no $c$.
The condition ``(i) and (ii)''
can be replaced with the condition
that, for any $a,b$ and their respective 
$\blacktriangleright$-normal forms
$\hat a, \hat b$,
(i$'$) 
$a \gets \circ \blacktriangleright b$
implies $\hat a  \astvdashv \hat b$
and 
(ii$'$) 
$a \vdashv \circ \blacktriangleright b$
implies
$\hat a  \astvdashv \hat b$,
which is explained as follows.
Clearly,
(i$'$) and (ii$'$) imply (i) and (ii).
To show the reverse direction, suppose (i) and (ii).
Let 
$\hat a, \hat b$ be $\blacktriangleright$-normal forms
$a,b$, respectively.
If $a \gets \circ \blacktriangleright b$
($a \vdashv \circ \blacktriangleright b$)
then 
$\hat a \stackrel{*}{\blacktriangleleft} a
\gets \circ \blacktriangleright b
\stackrel{*}{\blacktriangleright} \hat b$
($\hat a \stackrel{*}{\blacktriangleleft} a
\vdashv \circ \blacktriangleright b
\stackrel{*}{\blacktriangleright} \hat b$, respectively)
and thus 
$\hat a \stackrel{*}{\bowtie} \hat b$,
where ${\bowtie} = {\leftrightarrow \cup  \vdashv}$.
Thus $\hat a \stackrel{*}{\blacktriangleright}
\circ \astvdashv
\circ \stackrel{*}{\blacktriangleleft} \hat b$,
as $\to$ is $\blacktriangleright$-Church-Rosser 
modulo $\astvdashv$.
Since $\hat a, \hat b$ are $\blacktriangleright$-normal forms,
we conclude  $\hat a \astvdashv \hat b$.
\end{rem}

\subsection{Related works}

Several other abstract criteria for \textrm{CRM}
have been obtained in \cite{Hue80} and \cite{Ohl98}.
In this subsection, we compare our criterion with these.

The following 
necessary and sufficient criterion for \textrm{CRM}
is obtained in \cite{Hue80}.

\begin{prop}[Corollary of Lemmas 2.6 and 2.7 of \cite{Hue80}]
\label{prop:suffient and necessary condition of CRsim}
Let $\vdashv,\to$ be relations on a set $A$
such that $\vdashv$ is symmetric and 
$\to$ is well-founded.
Then 
$\to$ is Church-Rosser modulo $\astvdashv$
if and only if 
(i$\,'$) ${\gets} \circ {\to}
\subseteq 
{\stackrel{*}{\to}} \circ {\astvdashv} \circ {\stackrel{*}{\gets}}$
and 
(ii$\,'$) ${\astvdashv} \circ {\to}
\subseteq 
{\stackrel{*}{\to}} \circ
{\astvdashv} \circ {\stackrel{*}{\gets}}$.
\end{prop}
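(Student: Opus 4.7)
The plan is to deduce this proposition directly from Corollary \ref{cor:ARS-I}, without rerunning any combinatorial or inductive argument. The direction $(\Rightarrow)$ will be immediate from the definition of CRM: any peak $a \gets \circ \to b$ or mixed peak $a \astvdashv \circ \to b$ lies in $\stackrel{*}{\bowtie}$ with ${\bowtie} = {\leftrightarrow \cup \astvdashv}$, so the assumed CRM property factors it as $a \stackrel{*}{\to} \circ \astvdashv \circ \stackrel{*}{\gets} b$, giving both (i$\,'$) and (ii$\,'$) at once.

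For the direction $(\Leftarrow)$, I will observe that the hypotheses stated in the proposition are exactly those of Corollary \ref{cor:ARS-I}: $\vdashv$ is symmetric and $\to$ is well-founded. Condition (i$\,'$) is literally condition (i) of that corollary. For its condition (ii), I will use $\vdashv \subseteq \astvdashv$, so that (ii$\,'$) yields
\[
{\vdashv} \circ {\to}
\subseteq
{\astvdashv} \circ {\to}
\subseteq
{\stackrel{*}{\to}} \circ {\astvdashv} \circ {\stackrel{*}{\gets}},
\]
which is precisely condition (ii) of Corollary \ref{cor:ARS-I}. Applying that corollary then gives that $\to$ is Church-Rosser modulo $\astvdashv$.

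It is worth noting that Corollary \ref{cor:Huet80} is not directly applicable, because it requires the stronger well-foundedness of ${\to} \circ {\astvdashv}$, whereas the present proposition only assumes well-foundedness of $\to$ itself. This is exactly the situation Corollary \ref{cor:ARS-I} is designed to cover. There is therefore no substantive obstacle in the proof: the heavy lifting is done upstream by Lemma \ref{lem:ARS}, and the conceptual point is simply to notice that strengthening the hypothesis (ii) from $\vdashv$ to $\astvdashv$ trivially preserves its content as used in Corollary \ref{cor:ARS-I}. Thus the Huet-style necessary and sufficient criterion is absorbed as a clean corollary of the abstract framework developed in this section.
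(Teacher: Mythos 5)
Your $(\Rightarrow)$ direction is fine, but the $(\Leftarrow)$ direction contains a genuine gap: you have misread the hypotheses of Corollary \ref{cor:ARS-I}. Its conditions (i) and (ii) conclude with ${\stackrel{*}{\to}} \circ {\eqvdashv} \circ {\stackrel{*}{\gets}}$, where $\eqvdashv$ is the \emph{reflexive} closure of $\vdashv$ --- at most one $\vdashv$-step is permitted in the joining sequence. The proposition's conditions (i$\,'$) and (ii$\,'$) only provide ${\stackrel{*}{\to}} \circ {\astvdashv} \circ {\stackrel{*}{\gets}}$, with arbitrarily many $\vdashv$-steps. Since ${\eqvdashv} \subseteq {\astvdashv}$ and not conversely, the inclusion you need goes the wrong way: (i$\,'$) does not imply (i), and your chain ${\vdashv}\circ{\to} \subseteq {\astvdashv}\circ{\to} \subseteq {\stackrel{*}{\to}}\circ{\astvdashv}\circ{\stackrel{*}{\gets}}$ establishes only a weakening of (ii), not (ii) itself. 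This is exactly the distinction the paper draws in item (3) of its comparison between Proposition \ref{prop:suffient and necessary condition of CRsim} and Corollary \ref{cor:ARS-I}, so that corollary cannot be invoked here.

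The obstruction is not cosmetic. The proof of Lemma \ref{lem:ARS} (and hence of Theorem \ref{thm:ARS} and Corollary \ref{cor:ARS-I}) measures a conversion by the multiset of labels $lab(a \vdashv b) = \{a,b\}$, $lab(a \to b) = \{a\}$, and the decreasingness of the replacement sequence depends on the middle part consisting of at most one $\vdashv$-step between elements reachable from the peak. If the joining sequence may contain a long $\astvdashv$-chain, its intermediate elements carry labels that need not lie below the peak in the well-founded order, and the induction breaks. The paper accordingly does not derive this proposition from its abstract lemma at all: it imports it from Huet's Lemmas 2.6 and 2.7, whose proof is a separate combinatorial argument under the weaker assumption that only $\to$ is well-founded (cf.\ the remark following Corollary \ref{cor:Huet80}). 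To repair your proof you would have to either reproduce that argument or first strengthen (i$\,'$) and (ii$\,'$) to the single-$\vdashv$-step form required by Corollary \ref{cor:ARS-I}; neither step is automatic.
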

In this proposition, 
$\to$ is supposed to be well-founded,
similarly to Corollary \ref{cor:ARS-I}.
This proposition and 
Corollary \ref{cor:ARS-I}, however, differ
in the following points.
\begin{enumerate}[(1)]
\item 
Proposition \ref{prop:suffient and necessary condition of CRsim}
gives necessary and sufficient conditions
for \textrm{CRM},
while Corollary \ref{cor:ARS-I} gives only sufficient conditions.
\item 
The condition part of ($ii'$) of
Proposition \ref{prop:suffient and necessary condition of CRsim}
is not localized
(i.e.\ ${\astvdashv} \circ {\to}$ is assumed),
while that of ($ii$) of Corollary \ref{cor:ARS-I} 
is localized  (i.e.\  ${\vdashv} \circ {\to}$ is assumed).
\item 
In the conclusion parts of ($i'$) and ($ii'$) of
Proposition \ref{prop:suffient and necessary condition of CRsim}
an arbitrary number of ${\vdashv}$-steps are allowed,
while 
in those of ($i$) and ($ii$) of Corollary \ref{cor:ARS-I} 
the number of ${\vdashv}$-steps needs to be at most one.
\end{enumerate}


\noindent The decreasing diagram technique \cite{CRbyDD}
is a powerful technique to obtain many confluence criteria.
In \cite{Ohl98}, the technique is extended to
obtain a criterion for CRM.

\begin{prop}[Theorem 14 of \cite{Ohl98}]
\label{prop:Theorem of Ohl 98}
Let $I$ be a set of indexes equipped
with a well-founded order $\succ$.
Let $\to_{\alpha}$ be a relation on a set $A$
for each $\alpha \in I$
and $\vdashv$ a symmetric relation on $A$.
Suppose 
(i) ${\gets}_{\alpha} \circ {\to}_{\beta}
\subseteq 
{\stackrel{*}{\to}}_{\curlyvee \alpha}
\circ
{\stackrel{=}{\to}}_{\beta}
\circ
{\stackrel{*}{\to}}_{\cv{\alpha}{\beta}}
\circ
{\astvdashv}
\circ
{\stackrel{*}{\gets}}_{\cv{\alpha}{\beta}}
\circ
{\stackrel{=}{\gets}}_{\alpha}
\circ
{\stackrel{*}{\gets}}_{\curlyvee \beta}$
and 
(ii) ${\vdashv} \circ {\to}_{\beta}
\subseteq 
{\stackrel{=}{\to}}_{\beta}
\circ
{\stackrel{*}{\to}}_{\curlyvee \beta}
\circ
{\astvdashv}
\circ
{\stackrel{*}{\gets}}_{\curlyvee \beta}$.
Then ${\to}_I$ is Church-Rosser modulo ${\astvdashv}$.
\end{prop}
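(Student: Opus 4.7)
The plan is to adapt van Oostrom's decreasing-diagrams argument to the CRM setting, in the same spirit as the proof of Lemma~\ref{lem:ARS} but with a more refined labeling. Given any conversion $a_0 \bowtie a_1 \bowtie \cdots \bowtie a_n$ (where ${\bowtie} = {\leftrightarrow \cup \vdashv}$), I would label each $\to_\alpha$ or $\gets_\alpha$ step by $\alpha \in I$ and each $\vdashv$ step by a fresh symbol $\bot$ that is ignored by the weight. Define the weight of a conversion as a suitable decreasing-diagrams ordering on the multiset of non-$\bot$ labels with respect to $\succ$, and proceed by well-founded induction on this weight to show that the conversion can be rewritten into a join of the form $a_0 \stackrel{*}{\to}_I \circ \astvdashv \circ \stackrel{*}{\gets}_I a_n$, which is exactly what CRM requires.

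The induction handles three types of local peaks. A peak $a \gets_\alpha x \to_\beta b$ is eliminated using hypothesis~(i); a peak $a \vdashv x \to_\beta b$ using hypothesis~(ii); a peak $a \gets_\alpha x \vdashv b$ is handled symmetrically by applying (ii) to the reversed conversion together with the symmetry of $\vdashv$. In each case the replacement introduces only labels from $\curlyvee\alpha$, $\curlyvee\beta$, and ${\curlyvee\alpha \cup \curlyvee\beta}$, at most one extra copy of $\alpha$ coming from $\stackrel{=}{\gets}_\alpha$ and one of $\beta$ coming from $\stackrel{=}{\to}_\beta$, plus $\bot$-labels from the $\astvdashv$ factor. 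The fact that $\bot$-labels contribute nothing to the weight is precisely what permits an $\astvdashv$ segment to appear in the middle of the final join.

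The main obstacle, as in van Oostrom's original confluence-by-decreasing-diagrams theorem, is that the plain multiset extension of $\succ$ does not by itself justify the decrease when the persistent "middle" $\alpha$- and $\beta$-steps produced by the $\stackrel{=}{\gets}_\alpha$ and $\stackrel{=}{\to}_\beta$ factors are taken into account. I would therefore use van Oostrom's two-sided trace ordering: split each step's label contribution into a left and a right component, compare peaks lexicographically, and verify that each of the three local replacements above strictly decreases the weight of the enclosing peak while fitting into the global induction. Once the strict decrease under this ordering is established, the well-founded induction delivers the desired join and hence ${\to}_I$ is Church-Rosser modulo ${\astvdashv}$.
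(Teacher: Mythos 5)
The paper gives no proof of this proposition: it is imported verbatim as Theorem 14 of \cite{Ohl98} and stated only for comparison with Lemma~\ref{lem:ARS}, so there is no in-paper argument to measure you against. Your plan follows the route of the original source---a decreasing-diagrams argument on conversions in which $\vdashv$-steps carry a weightless label---and the skeleton is right: the three offending local patterns ${\gets_\alpha}\circ{\to_\beta}$, ${\vdashv}\circ{\to_\beta}$ and ${\gets_\alpha}\circ{\vdashv}$ are exactly the ones to eliminate, the third is correctly reduced to (ii) via the symmetry of $\vdashv$, and a conversion containing none of them is already of the shape ${\stackrel{*}{\to}}_I\circ{\astvdashv}\circ{\stackrel{*}{\gets}}_I$ (state this base case explicitly; it needs the same small induction on length as case (4) of the proof of Lemma~\ref{lem:ARS}).

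The genuine gap is that the entire technical content of the theorem is deferred to the sentence ``verify that each of the three local replacements strictly decreases the weight.'' You correctly observe that the plain multiset extension of $\succ$ fails here---replacing the two labels $\alpha,\beta$ by (possibly) $\alpha$ and $\beta$ again together with a multiset of strictly smaller labels is not a multiset decrease---but invoking ``van Oostrom's two-sided trace ordering'' by name is not the same as defining the measure on conversions, proving it well-founded, and checking the decrease for each of the three replacements, including the interaction with the weightless $\astvdashv$ segments. That verification \emph{is} the proof; without it you have a plausible plan, not an argument. It is worth noting that this difficulty is precisely what the paper's own Lemma~\ref{lem:ARS} is engineered to avoid: its hypotheses force the \emph{entire} replacement to lie below $\cv{\alpha}{\beta}$, with no residual ${\stackrel{=}{\to}}_\beta$ or ${\stackrel{=}{\gets}}_\alpha$ factors, which is exactly why a plain multiset induction suffices there and does not suffice here. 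To complete your proof you must either carry out the conversion-measure bookkeeping of \cite{CRbyDD,converted} in full or cite that machinery as a black box.
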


This proposition is given in terms of indexed relations
as in Lemma \ref{lem:ARS}.
This proposition and Lemma \ref{lem:ARS} differ
in the following points.
\begin{enumerate}[(1)]
\item
In Lemma \ref{lem:ARS} $\vdashv\,$-steps
are indexed, while in Proposition \ref{prop:Theorem of Ohl 98}
$\vdashv\,$-steps are not indexed.

\item
In the case ${\succ} = \emptyset$,
Lemma \ref{lem:ARS} is meaningless,
while 
Proposition \ref{prop:Theorem of Ohl 98}
still remains as a simple criterion for CRM.
\end{enumerate}

\begin{rem}
The labelings like $lab(a \to b) = \{ a \}$ and $lab(a \to b) = \{ a, b \}$
used in the proof of Theorem \ref{thm:ARS}
are called source and step labelings in \cite{converted},
which are used to obtain abstract confluence criteria
from the decreasing diagram criterion.
\end{rem}

\section{Confluence criteria}

In this  section,
we develop several confluence criteria
for TRSs that can be partitioned into
terminating TRS $\mathcal{S}$
and reversible TRS $\mathcal{P}$.
After the preliminaries (subsection 1),
we present our first criterion
that works for the case that $\mathcal{S}$ is linear
(subsection 2).
Next we claim another criterion
effective for left-linear $\mathcal{S}$---for this,
we first give a  criterion using the usual notion of 
the critical pairs (subsection 3)
and then extend the criterion using the
notion of parallel critical pairs (subsection 4).
Finally, we give some examples and describe relations
among the given criteria (subsection 5).

\subsection{Preliminaries}

In this subsection, we fix
some notions and notations on term rewriting systems
and after that we present some lemmas
that will be used in later subsections.

Let $\mathcal{F}$ be a set of fixed arity function symbols
and $\mathcal{V}$ be a set of variables.
The set $\mathrm{T}(\mathcal{F},\mathcal{V})$
of terms over $\mathcal{F}$ and $\mathcal{V}$
is defined like this:
(1) $\mathcal{V} \subseteq \mathrm{T}(\mathcal{F},\mathcal{V})$;
(2) if $f \in \mathcal{F}$ has arity $n$
and $t_1,\ldots,t_n \in \mathrm{T}(\mathcal{F},\mathcal{V})$
then $f(t_1,\ldots,t_n) \in \mathrm{T}(\mathcal{F},\mathcal{V})$.
The sets of function symbols and variables occurring in a term $t$
are denoted by $\mathcal{F}(t)$ and $\mathcal{V}(t)$, respectively.
A \textit{linear} term is a term in which any variable
occurs at most once.
\textit{Positions} are finite sequences of positive integers.
The \textit{empty sequence} is denoted by $\epsilon$. 
The set of positions in a term $t$ is denoted by 
$\mathrm{Pos}(t)$.
The \textit{concatenation} of positions $p,q$ is denoted by $p.q$.
We use $\le$ for prefix ordering on positions,
i.e.\ $p \le q$ iff $\exists o.~p.o = q$.
We write $p < q$ iff $p \le q$ and $p \neq q$.
For positions $p, q$ such that $p \le q$,
the position $o$ satisfying $p.o = q$ is denoted by $q/p$.
Positions $p_1,\ldots,p_n$ are \textit{parallel}
if $p_i \not\le p_j$ for any $i \neq j$.
We write $p \parallel q$ if two positions $p,q$ are parallel.
For sets $U,V$ of positions,
we write $U \parallel V$ if $p \parallel q$ holds
for any $p \in U$ and $q \in V$.
If $p$ is a position in a term $t$, 
then the symbol in $t$ at the position $p$ is written as $t(p)$,
the subterm of $t$ at the position $p$ is written as $t/p$,
and the term obtained by replacing the subterm $t/p$ by a term $s$
is written as $t[s]_p$.
For any $P \subseteq \mathrm{Pos}(t)$, we define
$\mathcal{V}_P(t) = \bigcup_{p \in P} \mathcal{V}(t/p)$.
For $X \subseteq \mathcal{F} \cup \mathcal{V}$,
we put $\mathrm{Pos}_X(t) = \{ p \in \mathrm{Pos}(t) \mid t(p) \in X \}$.
For parallel positions $p_1,\ldots,p_n$ in a term $t$,
the term obtained by replacing each subterm $t/p_i$ by a term $s_i$
is written as $t[s_1,\ldots,s_n]_{p_1,\ldots,p_n}$.
A \textit{context} is an expression $t[\,,\ldots,\,]_{p_1,\ldots,p_n}$ 
in which such subterms are dropped.

A map $\sigma$ from $\mathcal{V}$ to $\mathrm{T}(\mathcal{F},\mathcal{V})$ 
is a \textit{substitution} 
if the domain $\mathrm{dom}(\sigma)$ of $\sigma$ is finite
where $\mathrm{dom}(\sigma)  = \{ x \in \mathcal{V} \mid \sigma(x) \neq  x \}$.
As usual, we identify each substitution with its homomorphic extension.
For a substitution $\sigma$ and a term $t$, $\sigma(t)$ is also written as $t\sigma$.
A relation $R$ on $\mathrm{T}(\mathcal{F},\mathcal{V})$
is \textit{stable} if for any terms $s,t \in \mathrm{T}(\mathcal{F},\mathcal{V})$,
$s\:R\:t$ implies $s\theta\:R\:t\theta$ for any substitution $\theta$;
it is \textit{monotone} if 
$s\:R\:t$ implies 
$f(\ldots,s,\ldots) \:R\:f(\ldots,t,\ldots)$ for any $f \in \mathcal{F}$.
A relation $R$ on $\mathrm{T}(\mathcal{F},\mathcal{V})$
is a \textit{rewrite relation} if it is stable and monotone.

For a set $\mathcal{E}$ of equations,
we write $\mathcal{E}^{-1} = \{ r \approx l \mid l \approx r \in \mathcal{E} \}$.
Equations are identified modulo renaming (of variables),
for example, $+(x,y) = +(y,x)$ equals to $+(y,z) = +(z,y)$.
A set $\mathcal{E} = \{ s_1 \approx t_1, \ldots, s_n \approx t_n \}$ of equations
is \textit{unifiable} if there exists a substitution $\sigma$
such that $s_i\sigma = t_i\sigma$ for all $i$;
the substitution $\sigma$ is a \textit{unifier} of $\mathcal{E}$.
The most general unifier is abbreviated as \textit{mgu}.
If there exists an equation $l \approx r \in \mathcal{E}$ 
or $r \approx l \in \mathcal{E}$ 
and a position $p$ in a term $s$
and substitution $\theta$ such that 
$s/p = l\theta$ and $t = s[r\theta]_p$,
then we write $s \leftrightarrow_{\mathcal{E}} t$.

An equation $l \approx r$ is a \textit{rewrite rule} 
if it satisfies the conditions
(1) $l \notin \mathcal{V}$
and (2) $\mathcal{V}(r) \subseteq \mathcal{V}(l)$.
A rewrite rule $l \approx r$ is written as $l \to r$.
A rewrite rule $l \to r$ is \textit{linear} (\textit{left-linear})
if $l$ and $r$ are linear ($l$ is linear, respectively);
it is \textit{bidirectional} if
$r \approx l$ is a rewrite rule.
A \textit{term rewriting system} (\textit{TRS} for short) is
a finite set of rewrite rules.
A TRS is left-linear (linear, bidirectional)
if so are all its rewrite rules.
If a TRS $\mathcal{R}$ is bidirectional
then $\mathcal{R}^{-1} = \{ r \to l \mid l \to r \in \mathcal{R} \}$
is a TRS
and $\mathcal{R} \cup \mathcal{R}^{-1}$
is a bidirectional TRS.
Let $\mathcal{R}$ be a TRS.
If there exists a rewrite rule $l \to r \in \mathcal{R}$ 
and a position $p$ in a term $s$
and substitution $\theta$ such that 
$s/p = l\theta$ and $t = s[r\theta]_p$,
we write $s \to_{p,\mathcal{R}} t$.
If $p$ ($p$ and $\mathcal{R}$) is clear from the context,
$s \to_{p,\mathcal{R}} t$ is 
written as $s \to_{\mathcal{R}} t$ ($s \to  t$, respectively).
We call $s \to_{p,\mathcal{R}} t$ a \textit{rewrite step};
the subterm $s/p$ is the \textit{redex} of this rewrite step.
We say $t$ is obtained by \textit{contracting} the redex $s/p$.
The relation $\to_{\mathcal{R}}$ on $\mathrm{T}(\mathcal{F},\mathcal{V})$
is a rewrite relation
and called \textit{the} rewrite relation of $\mathcal{R}$.
A term $s$ is \textit{normal} 
if $s \to_\mathcal{R} t$ for no term $t$.
The set of normal terms is denoted by $\mathrm{NF}(\mathcal{R})$.
A \textit{normal form} (or \textit{$\mathcal{R}$-normal form})
of a term $s$ is a term $t \in \mathrm{NF}(\mathcal{R})$ 
such that $s \stackrel{*}{\rightarrow}_\mathcal{R} t$.  
Two terms $s$ and $t$ are said to be \textit{joinable}
if $s \stackrel{*}{\to}_{\mathcal{R}} \circ
\stackrel{*}{\gets}_{\mathcal{R}} t$.
A TRS $\mathcal{R}$ is \textit{terminating} 
if $\to_\mathcal{R}$ is well-founded;
$\mathcal{R}$ is confluent if $\to_\mathcal{R}$ is confluent.
A TRS $\mathcal{R}$ is \textit{terminating relative to} 
a TRS $\mathcal{P}$ if $\to_\mathcal{R} \circ
\stackrel{*}{\to}_\mathcal{P}$ is well-founded;
A TRS $\mathcal{R}$ is \textit{terminating modulo} 
a set $\mathcal{E}$ of equations 
if $\to_\mathcal{R} \circ \stackrel{*}{\leftrightarrow}_\mathcal{E}$ is well-founded.

Let $s,t$ be terms whose variables are disjoint.
The term $s$ \textit{overlaps} on $t$ (at a position $p$)
when there exists a non-variable subterm $u = t/p$ of $t$ 
such that $u$ and $s$ are unifiable.
Let $l_1 \to r_1$ and $l_2 \to r_2$ be rewrite rules.
W.l.o.g.\ let their variables be disjoint.
Suppose that $l_1$ overlaps on $l_2$ at a position $p$
and $\sigma$ is the mgu of $l_1$ and $l_2/p$.
Then the term $l_2[l_1]_p\sigma$ yields a \textit{critical pair}
$\langle l_2[r_1]_p\sigma, r_2\sigma \rangle$
obtained by the overlap of $l_1 \to r_1$ on $l_2 \to r_2$ at the position $p$.
In the case of self-overlap 
(i.e.\ when $l_1 \to r_1$ and $l_2 \to r_2$ 
are identical modulo renaming),
we do not consider the case $p = \epsilon$.
We call the critical pair \textit{outer} if $p = \epsilon$
and \textit{inner} if $p > \epsilon$.
The set of outer (inner) critical pairs obtained by the overlaps of 
a rewrite rule from $\mathcal{R}$ on a rewrite rule from $\mathcal{Q}$ 
is denoted by 
$\mathrm{CP}_\textit{out}(\mathcal{R},\mathcal{Q})$
($\mathrm{CP}_\textit{in}(\mathcal{R},\mathcal{Q})$, respectively).
We put
$\mathrm{CP}(\mathcal{R},\mathcal{Q}) = 
\mathrm{CP}_\textit{out}(\mathcal{R},\mathcal{Q})
\cup 
\mathrm{CP}_\textit{in}(\mathcal{R},\mathcal{Q})$.
For a set $C$ of pairs of terms,
we write $C^{-1} = \{ \langle v,u \rangle \mid 
\langle u,v \rangle \in C \}$.
We note that
$\mathrm{CP}_\textit{out}(\mathcal{R},\mathcal{Q})
= \mathrm{CP}_\textit{out}(\mathcal{Q},\mathcal{R})^{-1}$.

\begin{exa}
\label{exp:critical paris of plus-com-assoc}
Let $\mathcal{R}_2 = \{ (\mathsf{add}_1),(\mathsf{add}_2),(C),(A) 
\}$ be the TRS for addition of natural
numbers and AC-rules for plus given in Example \ref{exp:plus-com-assoc}.
Let $\mathcal{S} = \{ (\mathsf{add}_1),(\mathsf{add}_2) \}$
and $\mathcal{P} = \{ (C), (A) \}$.
Then we have $\mathrm{CP}_\textit{in}(\mathcal{P},\mathcal{S})
= \emptyset$, 
\[
\mathrm{CP}_\textit{out}(\mathcal{S},\mathcal{P})
= \mathrm{CP}_\textit{out}(\mathcal{P},\mathcal{S})^{-1}
=
\left\{ 
\begin{array}{l@{\quad}ll@{\,}l}
\langle y, \mathsf{+}(y,\mathsf{0}) \rangle\\
\langle \mathsf{s}(\mathsf{+}(x,y)), \mathsf{+}(y,\mathsf{s}(x)) \rangle\\
\end{array}
\right\}
\]
and
\[
\mathrm{CP}_\textit{in}(\mathcal{S},\mathcal{P})
= 
\left\{ 
\begin{array}{l@{\quad}ll@{\,}l}
\langle \mathsf{+}(y,z), \mathsf{+}(\mathsf{0},\mathsf{+}(y,z)) \rangle\\
\langle \mathsf{+}(\mathsf{s}(\mathsf{+}(x,y)),z), 
        \mathsf{+}(\mathsf{s}(x),\mathsf{+}(y,z)) \rangle\\
\end{array}
\right\}.
\]\medskip
\end{exa}

\noindent The \textit{parallel extension}
$\parto_\mathcal{R}$ of the rewrite relation $\to_\mathcal{R}$ of 
a TRS $\mathcal{R}$ is defined like this:
$s \parto_{\{ p_1,\ldots,p_n \},\mathcal{R}} t$
if  
$p_1,\ldots,p_n$ are parallel positions in the term $s$
and 
there exist rewrite rules $l_1 \to r_1,\ldots,l_n \to r_n \in \mathcal{R}$ 
and substitution $\theta_1,\ldots,\theta_n$ such that 
$s/p_i = l_i\theta_i$ for each $i$
and $t = s[r_1\theta_1,\ldots,r_n\theta_n]_{p_1,\ldots,p_n}$.
If the missing information is clear from the context,
$s \parto_{\{ p_1,\ldots,p_n \},\mathcal{R}} t$
is written as 
$s \parto_\mathcal{R} t$ or $s \parto t$.
We call $s \parto_{\mathcal{R}} t$ a \textit{parallel rewrite step}.
For substitutions $\rho,\rho'$,
we write $\rho \parto_{\mathcal{R}} \rho'$ 
if $\rho(x) \parto_{\mathcal{R}} \rho(x)$ for any $x \in \mathcal{V}$.
We note that
$\parto_\mathcal{R}$ includes
the identity relation,
i.e.\ $t \parto_\mathcal{R} t$ for any term $t$.

In the rest of this subsection,
we present several lemmas that will be
used several times in later subsections.
The first lemma is used to 
connect our abstract criterion for Church-Rosser modulo
with concrete criteria for confluence.
For this, we introduce the notion of reversibility.

\begin{defi}[reversible relation]
A relation $\to$ is said to be
\textit{reversible} if
${\to} \subseteq  {\stackrel{*}{\gets}}$.
A TRS $\mathcal{R}$ is reversible
if $\to_{\mathcal{R}}$ is reversible.
\end{defi}

\begin{lem}[confluence by CRM and reversibility]
\label{lem:reversible + CR modulo}
Let $\mathcal{P},\mathcal{S}$ be TRSs
such that $\mathcal{P}$ is reversible.
If $\to_\mathcal{S}$ is Church-Rosser modulo
$\stackrel{*}{\leftrightarrow}_\mathcal{P}$
then $\mathcal{S} \cup \mathcal{P}$ is confluent.
\end{lem}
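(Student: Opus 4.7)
The plan is to show that any divergence $t_0 \stackrel{*}{\gets}_{\mathcal{S} \cup \mathcal{P}} s \stackrel{*}{\to}_{\mathcal{S} \cup \mathcal{P}} t_1$ can be closed by combining the CRM hypothesis with reversibility of $\mathcal{P}$. The key idea is that CRM gives a joining sequence of the form $\stackrel{*}{\to}_{\mathcal{S}} \circ \stackrel{*}{\leftrightarrow}_\mathcal{P} \circ \stackrel{*}{\gets}_{\mathcal{S}}$, and reversibility lets us orient the middle $\stackrel{*}{\leftrightarrow}_\mathcal{P}$ into a forward $\stackrel{*}{\to}_\mathcal{P}$, producing a genuine common reduct.

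I would first record the basic consequence of reversibility: if ${\to}_\mathcal{P} \subseteq {\stackrel{*}{\gets}}_\mathcal{P}$, then by induction on the length of a $\stackrel{*}{\to}_\mathcal{P}$-sequence, $\stackrel{*}{\to}_\mathcal{P}$ is symmetric, and hence coincides with $\stackrel{*}{\leftrightarrow}_\mathcal{P}$. Next, I would observe the trivial inclusion ${\leftrightarrow}_{\mathcal{S} \cup \mathcal{P}} \subseteq {\leftrightarrow}_\mathcal{S} \cup {\stackrel{*}{\leftrightarrow}}_\mathcal{P} = {\bowtie}$, where $\bowtie$ is the symmetric relation appearing in the definition of Church-Rosser modulo $\stackrel{*}{\leftrightarrow}_\mathcal{P}$. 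Taking the reflexive-transitive closure gives ${\stackrel{*}{\leftrightarrow}}_{\mathcal{S} \cup \mathcal{P}} \subseteq {\stackrel{*}{\bowtie}}$.

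Then, starting from the divergence, we have $t_0 \stackrel{*}{\leftrightarrow}_{\mathcal{S} \cup \mathcal{P}} t_1$, hence $t_0 \stackrel{*}{\bowtie} t_1$. Applying the assumed Church-Rosser modulo property of $\to_\mathcal{S}$ yields terms $v_0, v_1$ with $t_0 \stackrel{*}{\to}_\mathcal{S} v_0 \stackrel{*}{\leftrightarrow}_\mathcal{P} v_1 \stackrel{*}{\gets}_\mathcal{S} t_1$. By the first step, $v_0 \stackrel{*}{\to}_\mathcal{P} v_1$, so we obtain $t_0 \stackrel{*}{\to}_{\mathcal{S} \cup \mathcal{P}} v_1$ and $t_1 \stackrel{*}{\to}_{\mathcal{S} \cup \mathcal{P}} v_1$, witnessing confluence at $v_1$.

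This argument is essentially a bookkeeping exercise; there is no real obstacle. The only point that deserves attention is the orientation of the $\mathcal{P}$-portion in the joining sequence—we need a forward $\mathcal{P}$-reduction from $v_0$ to $v_1$, not merely an equivalence—but this is exactly what reversibility provides once $\stackrel{*}{\to}_\mathcal{P}$ has been shown to be symmetric.
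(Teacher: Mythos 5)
Your proof is correct and follows essentially the same route as the paper's: pass from the divergence to $t_0 \stackrel{*}{\leftrightarrow}_{\mathcal{S}\cup\mathcal{P}} t_1$, apply CRM to get $t_0 \stackrel{*}{\to}_{\mathcal{S}} \circ \stackrel{*}{\leftrightarrow}_{\mathcal{P}} \circ \stackrel{*}{\gets}_{\mathcal{S}} t_1$, and use reversibility to orient the middle segment forward. The only difference is that you make explicit the bookkeeping steps (that $\stackrel{*}{\to}_{\mathcal{P}}$ coincides with $\stackrel{*}{\leftrightarrow}_{\mathcal{P}}$ and that $\stackrel{*}{\leftrightarrow}_{\mathcal{S}\cup\mathcal{P}} \subseteq \stackrel{*}{\bowtie}$) which the paper leaves implicit.
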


\begin{proof}
Suppose $s \stackrel{*}{\to}_{\mathcal{S} \cup \mathcal{P}} t_0$
and 
$s \stackrel{*}{\to}_{\mathcal{S} \cup \mathcal{P}} t_1$.
Then 
$t_0 \stackrel{*}{\leftrightarrow}_{\mathcal{S} \cup \mathcal{P}} t_1$.
Since $\to_\mathcal{S}$ is Church-Rosser modulo
$\stackrel{*}{\leftrightarrow}_\mathcal{P}$,
we have
$t_0 \stackrel{*}{\to}_{\mathcal{S}}
u \stackrel{*}{\leftrightarrow}_{\mathcal{P}}
v \stackrel{*}{\gets}_{\mathcal{S}} t_1$.
By the reversibility of $\mathcal{P}$,
we have 
$u \stackrel{*}{\to}_{\mathcal{P}} v$.
Hence 
$t_0 \stackrel{*}{\to}_{\mathcal{S} \cup \mathcal{P}}
v \stackrel{*}{\gets}_{\mathcal{S}\cup \mathcal{P}} t_1$.
\end{proof}

It is well-known that 
$\mathcal{S}$ is locally confluent,
i.e.\ 
${\gets}_\mathcal{S} \circ {\to}_\mathcal{S}
\subseteq 
{\stackrel{*}{\to}_\mathcal{S}} \circ {\stackrel{*}{\gets}}_\mathcal{S}$,
if $\mathrm{CP}(\mathcal{S},\mathcal{S})
\subseteq 
{\stackrel{*}{\to}_\mathcal{S}} \circ {\stackrel{*}{\gets}}_\mathcal{S}$.
The next lemma parametrized this fact by a rewrite relation $\blacktriangleright$.

\begin{lem}
\label{lem:main-I}
Let $\mathcal{S}$ be a TRS
and ${\blacktriangleright}$
be a rewrite relation.
Suppose that 
$\mathrm{CP}(\mathcal{S},\mathcal{S})
\subseteq {\blacktriangleright} \cap {\blacktriangleleft}$
and 
${\stackrel{*}{\to}}_\mathcal{S} \circ {\stackrel{*}{\gets}}_\mathcal{S}
\subseteq {\blacktriangleright}$.
Then 
${\gets}_\mathcal{S} \circ {\to}_\mathcal{S}
\subseteq 
{\blacktriangleright}$.
\end{lem}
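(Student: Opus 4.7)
The plan is to mimic the standard critical pair lemma, but to close each diagram using the relation $\blacktriangleright$ instead of plain joinability, exploiting that $\blacktriangleright$ is a rewrite relation (hence stable and monotone) and the two hypotheses.

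Concretely, take a peak $s_1 \gets_{p_1,\mathcal{S}} t \to_{p_2,\mathcal{S}} s_2$ with rules $l_1 \to r_1$ and $l_2 \to r_2$ (variables assumed disjoint) and substitutions $\theta_1,\theta_2$ such that $t/p_1 = l_1\theta_1$ and $t/p_2 = l_2\theta_2$. I would split into the three classical cases according to the relative position of $p_1$ and $p_2$:
(1) $p_1 \parallel p_2$; (2) one is a prefix of the other with the inner redex lying in a variable part of the outer left-hand side; (3) a genuine overlap at a non-variable position. In cases (1) and (2), the standard construction gives parallel (or multi-step) $\mathcal{S}$-reductions joining $s_1$ and $s_2$, i.e.\ $s_1 \stackrel{*}{\to}_\mathcal{S} \circ \stackrel{*}{\gets}_\mathcal{S} s_2$; the second hypothesis then immediately yields $s_1 \blacktriangleright s_2$.

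In case (3), by assuming w.l.o.g.\ $p_1 \le p_2$, the overlap of $l_2 \to r_2$ on $l_1 \to r_1$ at position $p_2/p_1$ produces a critical pair $\langle u,v \rangle \in \mathrm{CP}(\mathcal{S},\mathcal{S})$ together with an mgu $\sigma$ and a substitution $\tau$ (extending $\sigma$ along $\theta_1,\theta_2$) and a context $C[\,]_{p_1}$ such that $s_1 = C[u\tau]_{p_1}$ and $s_2 = C[v\tau]_{p_1}$. By the first hypothesis, $u\: \blacktriangleright\: v$, and since $\blacktriangleright$ is a rewrite relation it is closed under substitution and contexts, so $s_1 = C[u\tau]_{p_1} \blacktriangleright C[v\tau]_{p_1} = s_2$. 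The symmetric configuration $p_2 \le p_1$ is handled by the same argument with $\mathcal{S}$ and $\mathcal{S}$ swapped, which is why the hypothesis demands the \emph{symmetric} inclusion $\mathrm{CP}(\mathcal{S},\mathcal{S}) \subseteq {\blacktriangleright} \cap {\blacktriangleleft}$ (so both directions are available regardless of which redex sits above).

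No step is really an obstacle, since the combinatorics on positions are the standard ones from the critical pair lemma; the only point requiring a moment's care is the bookkeeping in case (2), namely that contracting the outer redex $l_1\theta_1$ after contracting the inner redex inside the variable part of $l_1$ yields a term reachable from $s_1$ by possibly several $\mathcal{S}$-steps (one per occurrence of the affected variable in $r_1$), which is exactly what the premise $\stackrel{*}{\to}_\mathcal{S} \circ \stackrel{*}{\gets}_\mathcal{S} \subseteq \blacktriangleright$ is tailored for.
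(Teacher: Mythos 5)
Your proposal is correct and follows essentially the same route as the paper's proof: the same three-way case split on the relative positions of the two redexes, closing the parallel and variable-overlap cases via $\stackrel{*}{\to}_\mathcal{S} \circ \stackrel{*}{\gets}_\mathcal{S} \subseteq \blacktriangleright$, and closing the genuine overlap case via the critical pair hypothesis together with stability and monotonicity of $\blacktriangleright$ (using $\blacktriangleleft$ for the symmetric configuration). No gaps.
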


\begin{proof}
Suppose
$t_0 \gets_{p,\mathcal{S}} s \to_{q,\mathcal{S}} t_1$.
We distinguish the cases by relative positions of $p$ and $q$.
If $p \parallel q$ then
$s = s[l\sigma,l'\rho]_{p,q}$
and 
$t_0 = s[r\sigma,l'\rho]_{p,q} \gets_p s  \to_q s[l\sigma,r'\rho]_{p,q} = t_1$
for some rewrite rules $l \to r, l' \to r' \in \mathcal{S}$
and substitutions $\sigma,\rho$.
Then we have 
$t_0 = s[r\sigma,l'\rho]_{p,q} \to_q s[r\sigma,r'\rho]_{p,q}
\gets_p s[l\sigma,r'\rho]_{p,q} = t_1$.
Thus $t_0 \blacktriangleright t_1$ follows from our assumption
${\stackrel{*}{\to}}_\mathcal{S} \circ {\stackrel{*}{\gets}}_\mathcal{S}
\subseteq {\blacktriangleright}$.
Suppose $q \le p$.
Let $s/q = l\sigma$ and $l \to r \in \mathcal{S}$.
Then either 
(1) $p/q \in \mathrm{Pos}_\mathcal{F}(l)$
or 
(2) there exists $q_x \in \mathrm{Pos}_\mathcal{V}(l)$
such that $l/q_x = x \in \mathcal{V}$ and $q.q_x \le p$.
\begin{enumerate}[(1)]
\item
Then $t_0 = s[u\rho]_q$ 
and $t_1 = s[v\rho]_q$
for some $\langle u, v \rangle \in \mathrm{CP}(\mathcal{S},\mathcal{S})$
and substitution $\rho$.
Thus by assumption $u \blacktriangleright v$.
Then, since $\blacktriangleright$ is a rewrite relation,
we have 
$t_0 = s[u\rho]_q \blacktriangleright s[v\rho]_q = t_1$.

\item
Then $t_1 = s[r\sigma]_q$
and $s = s[l\sigma]_q  \to_{p,\mathcal{S}} t_0 \stackrel{*}{\to}_\mathcal{S} s[l\sigma']_q$ 
for some substitution $\sigma'$
such that $\sigma(x) \to_{p/(q.q_x),\mathcal{S}} \sigma'(x)$
and $\sigma'(y) = \sigma(y)$ for any $y \neq x$.
Thus 
$t_0\stackrel{*}{\to}_\mathcal{S} 
s[l\sigma']_q
\to_\mathcal{S}
s[r\sigma']_q
\stackrel{*}{\gets}_\mathcal{S}
s[r\sigma]_q = t_1$.
The claim follows from our assumption
${\stackrel{*}{\to}}_\mathcal{S} \circ {\stackrel{*}{\gets}}_\mathcal{S}
\subseteq {\blacktriangleright}$.
\end{enumerate}
The case of $p < q$ follows
similarly to the case of $q \le p$
using the condition
$\mathrm{CP}(\mathcal{S},\mathcal{S})
\subseteq {\blacktriangleleft}$.
\end{proof}

%

\begin{lem}
\label{lem:main-II}
Let $\mathcal{Q},\mathcal{R}$ be TRSs
such that $\mathcal{Q}$ is bidirectional,
and let $\blacktriangleright$ be a rewrite relation.
Suppose that 
$\mathrm{CP}(\mathcal{Q},\mathcal{R})
\subseteq {\blacktriangleright}$,
$\mathrm{CP}(\mathcal{R},\mathcal{Q})
\subseteq {\blacktriangleleft}$.
(1) If $\mathcal{R}$ is linear 
and ${\stackrel{+}{\to}_\mathcal{R}}
\circ {\stackrel{=}{\gets}_\mathcal{Q}}
\circ {\stackrel{*}{\gets}_\mathcal{R}}
\subseteq {\blacktriangleright}$
then 
${\gets}_\mathcal{Q} \circ {\to}_\mathcal{R}
\subseteq {\blacktriangleright}$.
(2) If $\mathcal{R}$ is left-linear 
and ${\stackrel{+}{\to}_\mathcal{R}}
\circ {\pargets_\mathcal{Q}}
\circ {\stackrel{*}{\gets}_\mathcal{R}}
\subseteq {\blacktriangleright}$
then 
${\gets}_\mathcal{Q} \circ {\to}_\mathcal{R}
\subseteq {\blacktriangleright}$.
\end{lem}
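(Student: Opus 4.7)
The approach is to mimic the standard critical-pair argument, but with two separate TRSs in the peak. Fix $t_0 \gets_{q,\mathcal{Q}} s \to_{p,\mathcal{R}} t_1$ with rules $l' \to r' \in \mathcal{Q}$, $l \to r \in \mathcal{R}$ and substitutions $\rho,\sigma$ satisfying $s/q = l'\rho$ and $s/p = l\sigma$. I split on the relative position of $p$ and $q$: (a) $p \parallel q$, (b) $p \le q$, (c) $q < p$. In cases (b) and (c) a further split distinguishes whether the lower position meets a function or a variable position of the outer left-hand side.

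Case (a) closes in one step on each side: $t_0 \to_{p,\mathcal{R}} s[r\sigma, r'\rho]_{p,q} \gets_{q,\mathcal{Q}} t_1$, which fits both premises. The function-overlap sub-case of (b) yields an instance of a critical pair from $\mathrm{CP}(\mathcal{Q},\mathcal{R}) \subseteq {\blacktriangleright}$; stability plus monotonicity of the rewrite relation $\blacktriangleright$ then deliver $t_0 \blacktriangleright t_1$. The symmetric function-overlap sub-case of (c) uses $\mathrm{CP}(\mathcal{R},\mathcal{Q}) \subseteq {\blacktriangleleft}$, giving $t_1 \blacktriangleleft t_0$ directly.

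For the variable-overlap sub-case of (b), left-linearity of $\mathcal{R}$ makes the variable position $q_x \le q/p$ of $l$ unique; setting $\sigma^*(x) := \sigma(x)[r'\rho]_{(q/p)/q_x}$ and $\sigma^* := \sigma$ elsewhere, we have $t_0 = s[l\sigma^*]_p \to_{\mathcal{R}} s[r\sigma^*]_p$, and the gap to $t_1 = s[r\sigma]_p$ is closed by rewriting the $x$-occurrences in $r$: at most one $\stackrel{=}{\gets}_\mathcal{Q}$-step when $\mathcal{R}$ is linear (premise of (1)), or a single $\pargets_\mathcal{Q}$-step when $\mathcal{R}$ is only left-linear (premise of (2)). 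The variable-overlap sub-case of (c) is more delicate since $l'$ need not be linear. Here the unique variable position $q_y \le p/q$ of $l'$, say $l'(q_y) = y$, may be accompanied by $k$ further occurrences of $y$ in $l'$; set $\rho^*(y) := \rho(y)[r\sigma]_{(p/q)/q_y}$ and $\rho^* := \rho$ elsewhere. Rewriting those $k$ further copies of $\rho(y)$ inside $l'\rho$ yields $t_1 \stackrel{*}{\to}_\mathcal{R} s[l'\rho^*]_q$, and one $\mathcal{Q}$-step gives $s[l'\rho^*]_q \to_\mathcal{Q} s[r'\rho^*]_q$. Crucially, bidirectionality of $\mathcal{Q}$ forces $\mathcal{V}(l') = \mathcal{V}(r')$, so $y$ occurs at least once in $r'$ and we obtain $t_0 = s[r'\rho]_q \stackrel{+}{\to}_\mathcal{R} s[r'\rho^*]_q$. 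The resulting diagram $t_0 \stackrel{+}{\to}_\mathcal{R} \circ \gets_\mathcal{Q} \circ \stackrel{*}{\gets}_\mathcal{R} t_1$ matches the premises of both (1) and (2).

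The main obstacle is precisely this variable-overlap sub-case of (c): since the outer rule comes from $\mathcal{Q}$, which is not assumed left-linear, the number of $\mathcal{R}$-steps closing the gap on the $t_0$-side equals the number of $y$-occurrences in $r'$. Without a lower bound on that number one would obtain only $\stackrel{*}{\to}_\mathcal{R}$ and fail the $\stackrel{+}{\to}_\mathcal{R}$ demanded by the premise. Bidirectionality of $\mathcal{Q}$, via $\mathcal{V}(l') = \mathcal{V}(r')$, is precisely the hypothesis that removes this obstruction.
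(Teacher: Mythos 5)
Your proof is correct and follows essentially the same route as the paper's: the same case analysis on the relative positions (parallel, function-overlap giving a critical pair, variable-overlap), with left-linearity of $\mathcal{R}$ handling the case where the $\mathcal{R}$-redex is outer and bidirectionality of $\mathcal{Q}$ (via $\mathcal{V}(l')=\mathcal{V}(r')$) supplying the needed $\stackrel{+}{\to}_\mathcal{R}$ when the $\mathcal{Q}$-redex is outer. Your closing remark on why bidirectionality is indispensable matches exactly the point where the paper invokes it.
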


\proof
Below we present a proof for (1).
Any difference to the proof of (2)
will be mentioned in the proof.
Suppose
$t_0 \gets_{p,\mathcal{Q}} s \to_{q,\mathcal{R}} t_1$.
We distinguish the cases by relative positions of $p$ and $q$.
\begin{enumerate}[(i)]
\item $p \parallel q$.
Then 
$s = s[l\sigma,l'\rho]_{p,q}$
and 
$t_0 = s[r\sigma,l'\rho] \gets_{p,\mathcal{Q}} s  
\to_{q,\mathcal{R}} C[l\sigma,r'\rho]_{p,q} = t_1$
for some $l \to r \in \mathcal{Q}, l' \to r' \in \mathcal{R}$.
Then we have 
$t_0 = s[r\sigma,l'\rho]_{p,q} \to_{q,\mathcal{R}} s[r\sigma,r'\rho]_{p,q}
\gets_{p,\mathcal{Q}} s[l\sigma,r'\rho]_{p,q} = t_1$.
Since 
${\stackrel{+}{\to}_\mathcal{R}}
\circ {\stackrel{=}{\gets}_\mathcal{Q}}
\circ {\stackrel{*}{\gets}_\mathcal{R}}
\subseteq {\blacktriangleright}$ 
(${\stackrel{+}{\to}_\mathcal{R}}
\circ {\pargets}_\mathcal{Q}
\circ {\stackrel{*}{\gets}_\mathcal{R}}
\subseteq {\blacktriangleright}$ in the proof of (2))
by our assumption,
$t_0 \blacktriangleright t_1$.

\item $p \le q$.
Let $s/p = l\sigma$ and $l \to r \in \mathcal{Q}$.
Then either 
(a) $ p/q  \in \mathrm{Pos}_\mathcal{F}(l)$
or 
(b) there exists $p_x \in \mathrm{Pos}_\mathcal{V}(l)$
such that $l/p_x = x \in \mathcal{V}$ and $p.p_x \le q$.

\begin{enumerate}[(a)]
\item
Then $t_0 = s[v\rho]_p$ 
and $t_1 = s[u\rho]_p$
for some $\langle u, v \rangle \in \mathrm{CP}(\mathcal{R},\mathcal{Q})$
and substitution $\rho$.
Thus by assumption 
$v \blacktriangleright u$.
Since $\blacktriangleright$ is a rewrite relation,
we have $t_0 = s[v\rho]_p \blacktriangleright s[u\rho]_p = t_1$.

\item
Then $t_0 = s[r\sigma]_p$
and we have $t_1 \stackrel{*}{\to}_\mathcal{R} s[l\sigma']_p$ 
for some substitution $\sigma'$
such that $\sigma(x) \to_{ q/(p.p_x),\mathcal{R}} \sigma'(x)$
and $\sigma(y) = \sigma'(y)$ for any $y \neq x$.
Since $\mathcal{Q}$ is bidirectional,
we have $\mathcal{V}(l) = \mathcal{V}(r)$.
Thus 
$t_0 = s[r\sigma]_p
\stackrel{+}{\to}_\mathcal{R} s[r\sigma']_p
\gets_\mathcal{Q}
s[l\sigma']_p 
\stackrel{*}{\gets}_\mathcal{R} t_1$.
Since ${\stackrel{+}{\to}_\mathcal{R}}
\circ {\stackrel{=}{\gets}_\mathcal{Q}}
\circ {\stackrel{*}{\gets}_\mathcal{R}}
\subseteq {\blacktriangleright}$
(${\stackrel{+}{\to}_\mathcal{R}}
\circ {\pargets}_\mathcal{Q}
\circ {\stackrel{*}{\gets}_\mathcal{R}}
\subseteq {\blacktriangleright}$ in the proof of (2))
by our assumption,
$t_0 \blacktriangleright t_1$.

\end{enumerate}

\item $p > q$.
Let $s/q = l'\rho$ and $l' \to r' \in \mathcal{R}$.
Then either 
(a) $p/ q \in \mathrm{Pos}_\mathcal{F}(l')$
or 
(b) there exists $q_x \in \mathrm{Pos}_\mathcal{V}(l')$
such that $l'/q_x = x \in \mathcal{V}$ and $q.q_x \le p$.

\begin{enumerate}[(a)]
\item
Then $t_0 = s[u\sigma]_q$ 
and $t_1 = s[v\sigma]_q$
for some $\langle u, v \rangle \in \mathrm{CP}(\mathcal{Q},\mathcal{R})$
and substitution $\sigma$.
By assumption 
$u \blacktriangleright v$.
Since $\blacktriangleright$ is a rewrite relation,
$t_0 = s[u\sigma]_q \blacktriangleright s[v\sigma]_q = t_1$.

\item
Then $t_1 = s[r'\rho]_q$, 
and by the left-linearity of $\mathcal{R}$,
$t_0 = s[l'\rho']_q$ for some substitution $\rho'$
such that $\rho(x) \to_{p /(q.q_x),\mathcal{Q}} \rho'(x)$
and $\rho(y) = \rho'(y)$ for any $y \neq x$.
Furthermore, by the right-linearity of $\mathcal{R}$,
$s[r'\rho]_q \stackrel{=}{\to}_\mathcal{Q} s[r'\rho']_q$.
(In the proof of (2),
we have $s[r'\rho]_q \parto_\mathcal{Q} s[r'\rho']_q$.)
Thus, 
$t_0 = s[l'\rho']_q
\to_{q,\mathcal{R}} s[r'\rho']_q
\stackrel{=}{\gets}_\mathcal{Q}
s[r'\rho]_q = t_1$
($t_0 = s[l'\rho']_q
\to_{q,\mathcal{R}} s[r'\rho']_q
\pargets_\mathcal{Q}
s[r'\rho]_q = t_1$ in the proof of (2)).
Since ${\stackrel{+}{\to}_\mathcal{R}}
\circ {\stackrel{=}{\gets}_\mathcal{Q}}
\circ {\stackrel{*}{\gets}_\mathcal{R}}
\subseteq {\blacktriangleright}$ 
(${\stackrel{+}{\to}_\mathcal{R}}
\circ {\pargets}_\mathcal{Q}
\circ {\stackrel{*}{\gets}_\mathcal{R}}
\subseteq {\blacktriangleright}$ in the proof of (2))
by our assumption,
$t_0 \blacktriangleright t_1$.\qed
\end{enumerate}
\end{enumerate}

\subsection{Confluence criterion for linear TRSs}

In this subsection, we give a 
confluence criterion for TRSs
that can be partitioned into 
linear terminating TRS $\mathcal{S}$
and reversible TRS $\mathcal{P}$.
We then discuss the possibility of
relaxing linearity condition
to left-linearity.

The next two lemmas are corollaries of 
Lemmas \ref{lem:main-I} and \ref{lem:main-II},
respectively.

\begin{lem}
\label{lem:linear-I}
Let $\mathcal{P},\mathcal{S},\mathcal{P}'$ be TRSs
such that $\mathcal{P}$ is bidirectional.
Suppose $\mathrm{CP}(\mathcal{S},\mathcal{S})
\subseteq 
{\stackrel{*}{\to}}_{\mathcal{S}\cup\mathcal{P}'}
\circ {\stackrel{=}{\gets}}_{\mathcal{P} \cup \mathcal{P}^{-1}} 
\circ {\stackrel{*}{\gets}}_{\mathcal{S}\cup\mathcal{P}'}$.
Then 
${\gets}_\mathcal{S} \circ {\to}_\mathcal{S}
\subseteq 
{\stackrel{*}{\to}}_{\mathcal{S}\cup\mathcal{P}'}
\circ {\stackrel{=}{\gets}}_{\mathcal{P} \cup \mathcal{P}^{-1}} 
\circ {\stackrel{*}{\gets}}_{\mathcal{S}\cup\mathcal{P}'}$.
\end{lem}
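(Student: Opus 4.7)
The plan is to instantiate Lemma~\ref{lem:main-I} with the rewrite relation
${\blacktriangleright} := {\stackrel{*}{\to}}_{\mathcal{S}\cup\mathcal{P}'} \circ {\stackrel{=}{\gets}}_{\mathcal{P} \cup \mathcal{P}^{-1}} \circ {\stackrel{*}{\gets}}_{\mathcal{S}\cup\mathcal{P}'}$,
so that the conclusion of Lemma~\ref{lem:main-I} becomes exactly the desired inclusion. Once this reduction is set up, the proof amounts to verifying the three ingredients demanded by Lemma~\ref{lem:main-I}.

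First I would observe that $\blacktriangleright$ is a rewrite relation: the relations $\to_{\mathcal{S}\cup\mathcal{P}'}$ and $\to_{\mathcal{P}\cup\mathcal{P}^{-1}}$ are rewrite relations, stability and monotonicity are preserved under reflexive/transitive closures and under relational composition. Next I would check the hypothesis $\stackrel{*}{\to}_\mathcal{S} \circ \stackrel{*}{\gets}_\mathcal{S} \subseteq {\blacktriangleright}$: since $\to_\mathcal{S} \subseteq \to_{\mathcal{S}\cup\mathcal{P}'}$ and $\stackrel{=}{\gets}_{\mathcal{P}\cup\mathcal{P}^{-1}}$ contains the identity, any peak $\stackrel{*}{\to}_\mathcal{S} \circ \stackrel{*}{\gets}_\mathcal{S}$ already sits in $\blacktriangleright$ via an empty middle $\mathcal{P}\cup\mathcal{P}^{-1}$-step. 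One direction of the critical-pair hypothesis, $\mathrm{CP}(\mathcal{S},\mathcal{S}) \subseteq {\blacktriangleright}$, is given outright.

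The only step that uses the bidirectionality of $\mathcal{P}$ is the verification $\mathrm{CP}(\mathcal{S},\mathcal{S}) \subseteq {\blacktriangleleft}$. The key observation is that when $\mathcal{P}$ is bidirectional, $(\mathcal{P}\cup\mathcal{P}^{-1})^{-1} = \mathcal{P}\cup\mathcal{P}^{-1}$, so $\to_{\mathcal{P}\cup\mathcal{P}^{-1}}$ is symmetric, hence $\stackrel{=}{\to}_{\mathcal{P}\cup\mathcal{P}^{-1}} = \stackrel{=}{\gets}_{\mathcal{P}\cup\mathcal{P}^{-1}}$. Taking the converse of $\blacktriangleright$ then yields $\blacktriangleleft = \blacktriangleright$, so $\mathrm{CP}(\mathcal{S},\mathcal{S}) \subseteq {\blacktriangleright}$ immediately gives $\mathrm{CP}(\mathcal{S},\mathcal{S}) \subseteq {\blacktriangleleft}$. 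With all three hypotheses of Lemma~\ref{lem:main-I} verified, the conclusion ${\gets}_\mathcal{S} \circ {\to}_\mathcal{S} \subseteq {\blacktriangleright}$ is exactly the desired statement.

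There is no real obstacle here; the lemma is essentially a packaged instance of Lemma~\ref{lem:main-I}. The only subtlety worth highlighting is recognising that the bidirectionality of $\mathcal{P}$ is precisely what converts the single-sided hypothesis on $\mathrm{CP}(\mathcal{S},\mathcal{S})$ into the two-sided condition $\mathrm{CP}(\mathcal{S},\mathcal{S}) \subseteq {\blacktriangleright} \cap {\blacktriangleleft}$ required by Lemma~\ref{lem:main-I}.
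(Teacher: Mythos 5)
Your proposal is correct and follows exactly the paper's own argument: the paper likewise instantiates Lemma~\ref{lem:main-I} with ${\blacktriangleright} := {\stackrel{*}{\to}}_{\mathcal{S}\cup\mathcal{P}'} \circ {\stackrel{=}{\gets}}_{\mathcal{P} \cup \mathcal{P}^{-1}} \circ {\stackrel{*}{\gets}}_{\mathcal{S}\cup\mathcal{P}'}$, notes that ${\blacktriangleright}\cap{\blacktriangleleft}={\blacktriangleright}$ (your symmetry observation via bidirectionality of $\mathcal{P}$), and checks ${\stackrel{*}{\to}}_\mathcal{S}\circ{\stackrel{*}{\gets}}_\mathcal{S}\subseteq{\blacktriangleright}$. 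Your write-up merely spells out these verifications in slightly more detail.
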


\begin{proof}
Take ${\blacktriangleright} := 
{\stackrel{*}{\to}}_{\mathcal{S}\cup\mathcal{P}'}
\circ {\stackrel{=}{\gets}}_{\mathcal{P} \cup \mathcal{P}^{-1}} 
\circ {\stackrel{*}{\gets}}_{\mathcal{S}\cup\mathcal{P}'}$.
Then $\mathrm{CP}(\mathcal{S},\mathcal{S})
\subseteq {\blacktriangleright} \cap {\blacktriangleleft}$
as ${\blacktriangleright} \cap {\blacktriangleleft}
= {\blacktriangleright}$.
Furthermore, we have ${\stackrel{*}{\to}}_\mathcal{S}
\circ {\stackrel{*}{\gets}}_\mathcal{S}
\subseteq {\blacktriangleright}$.
Hence the claim follows from Lemma~\ref{lem:main-I}.
\end{proof}

\begin{lem}
\label{lem:linear-III}
Let $\mathcal{P},\mathcal{S},\mathcal{P}'$ be TRSs
such that $\mathcal{P}$ is bidirectional
and $\mathcal{S}$ is linear.
Let ${\blacktriangleright}
=
(
{\stackrel{=}{\gets}}_{\mathcal{P} \cup \mathcal{P}^{-1}} 
\circ 
{\stackrel{*}{\gets}}_{\mathcal{S} \cup \mathcal{P}'})
\cup
({\to}_{\mathcal{S}}
\circ 
{\stackrel{*}{\to}}_{\mathcal{S} \cup \mathcal{P}'}
\circ 
{\stackrel{=}{\gets}}_{\mathcal{P} \cup \mathcal{P}^{-1}} 
\circ 
{\stackrel{*}{\gets}}_{\mathcal{S} \cup \mathcal{P}'})
$.
Suppose 
$\mathrm{CP}(\mathcal{P} \cup \mathcal{P}^{-1},\mathcal{S})
\subseteq {\blacktriangleright}$ and
$\mathrm{CP}(\mathcal{S},\mathcal{P} \cup \mathcal{P}^{-1})
\subseteq {\blacktriangleleft}$.
Then 
${\gets}_{\mathcal{P}\cup \mathcal{P}^{-1}}
 \circ {\to}_\mathcal{S}
\subseteq {\blacktriangleright}$.
\end{lem}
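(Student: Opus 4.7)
The plan is to apply Lemma~\ref{lem:main-II}(1) with $\mathcal{Q} := \mathcal{P} \cup \mathcal{P}^{-1}$ and $\mathcal{R} := \mathcal{S}$. Since $\mathcal{P}$ is bidirectional, each rule in $\mathcal{P} \cup \mathcal{P}^{-1}$ is a bidirectional rewrite rule, so $\mathcal{Q}$ is bidirectional; and $\mathcal{S}$ is linear by assumption, which matches the linearity hypothesis of part~(1). The two critical pair assumptions of the present lemma then read precisely as $\mathrm{CP}(\mathcal{Q}, \mathcal{R}) \subseteq {\blacktriangleright}$ and $\mathrm{CP}(\mathcal{R}, \mathcal{Q}) \subseteq {\blacktriangleleft}$ after this substitution.

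Two auxiliary facts remain to be verified. First, that $\blacktriangleright$ is a rewrite relation: each of $\to_\mathcal{S}$, $\to_{\mathcal{P} \cup \mathcal{P}^{-1}}$, $\to_{\mathcal{S} \cup \mathcal{P}'}$ and their inverses is a rewrite relation (stability and monotonicity pass to inverses), and the class of rewrite relations is closed under union, composition, reflexive closure, and reflexive-transitive closure; hence the compound expression defining $\blacktriangleright$ is a rewrite relation. Second, the inclusion
$${\stackrel{+}{\to}_\mathcal{S}} \circ {\stackrel{=}{\gets}_{\mathcal{P} \cup \mathcal{P}^{-1}}} \circ {\stackrel{*}{\gets}_\mathcal{S}} \subseteq {\blacktriangleright}$$
must hold: writing ${\stackrel{+}{\to}_\mathcal{S}} = {\to}_\mathcal{S} \circ {\stackrel{*}{\to}_\mathcal{S}}$ and using the obvious inclusions ${\stackrel{*}{\to}_\mathcal{S}} \subseteq {\stackrel{*}{\to}_{\mathcal{S} \cup \mathcal{P}'}}$ and ${\stackrel{*}{\gets}_\mathcal{S}} \subseteq {\stackrel{*}{\gets}_{\mathcal{S} \cup \mathcal{P}'}}$, the left-hand side is contained in the second disjunct of $\blacktriangleright$.

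With these ingredients in place, Lemma~\ref{lem:main-II}(1) yields ${\gets}_{\mathcal{P} \cup \mathcal{P}^{-1}} \circ {\to}_\mathcal{S} \subseteq {\blacktriangleright}$, which is exactly the conclusion sought. I do not foresee any serious obstacle; the statement is essentially a specialisation of Lemma~\ref{lem:main-II}(1) arranged so that the shape of $\blacktriangleright$ contains the form ${\stackrel{+}{\to}_\mathcal{R}} \circ {\stackrel{=}{\gets}_\mathcal{Q}} \circ {\stackrel{*}{\gets}_\mathcal{R}}$ demanded by that lemma, modulo the enlargement of the rewrite relations to include $\mathcal{P}'$.
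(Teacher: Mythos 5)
Your proposal is correct and matches the paper's own proof: both instantiate Lemma~\ref{lem:main-II}(1) with $\mathcal{Q} := \mathcal{P} \cup \mathcal{P}^{-1}$ and $\mathcal{R} := \mathcal{S}$, observe that the critical pair hypotheses transfer directly, and check that ${\stackrel{+}{\to}_\mathcal{S}} \circ {\stackrel{=}{\gets}_{\mathcal{P} \cup \mathcal{P}^{-1}}} \circ {\stackrel{*}{\gets}_\mathcal{S}}$ lands in the second disjunct of $\blacktriangleright$. Your extra verification that $\blacktriangleright$ is a rewrite relation is a point the paper leaves implicit, and it is correctly argued.
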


\begin{proof}
Take $\mathcal{Q} := \mathcal{P} \cup  \mathcal{P}^{-1}$,
which is a bidirectional TRS by bidirectionality of $\mathcal{P}$,
and $\mathcal{R} := \mathcal{S}$ in Lemma~\ref{lem:main-II}.
Then by the condition,
$\mathrm{CP}(\mathcal{P} \cup \mathcal{P}^{-1},\mathcal{S})
\subseteq {\blacktriangleright}$ and
$\mathrm{CP}(\mathcal{S},\mathcal{P} \cup \mathcal{P}^{-1})
\subseteq {\blacktriangleleft}$.
Furthermore,
${\stackrel{+}{\to}_\mathcal{R}}
\circ {\stackrel{=}{\gets}_\mathcal{Q}}
\circ {\stackrel{*}{\gets}_\mathcal{R}}
=
{\stackrel{+}{\to}}_{\mathcal{S}}
\circ 
{\stackrel{=}{\gets}}_{\mathcal{P} \cup \mathcal{P}^{-1}}
\circ 
{\stackrel{*}{\gets}}_{\mathcal{S}}
\subseteq {\blacktriangleright}$.
Hence the claim follows from Lemma~\ref{lem:main-II}~(1).
\end{proof}

We arrive at our first criterion for confluence.

\begin{thm}[confluence criterion for linear $\mathcal{S}$]
\label{thm:linear}
Let $\mathcal{P},\mathcal{S},\mathcal{P}'$ be TRSs
such that $\mathcal{S}$ is linear,
$\mathcal{P}$ is reversible,
$\mathcal{P}' \subseteq \mathcal{P} \cup \mathcal{P}^{-1}$
and 
$\mathcal{S}$ is terminating relative to $\mathcal{P}'$.
Suppose
(i) $\mathrm{CP}(\mathcal{S},\mathcal{S})
\subseteq 
{\stackrel{*}{\to}}_{\mathcal{S}\cup \mathcal{P}'} 
\circ 
{\stackrel{=}{\gets}}_{\mathcal{P} \cup \mathcal{P}^{-1}}
\circ
{\stackrel{*}{\gets}}_{\mathcal{S}\cup\mathcal{P'}}$,
(ii) 
$\mathrm{CP}(\mathcal{P} \cup \mathcal{P}^{-1},\mathcal{S})
\subseteq 
(
{\stackrel{=}{\gets}}_{\mathcal{P} \cup \mathcal{P}^{-1}}
\circ 
{\stackrel{*}{\gets}}_{\mathcal{S} \cup \mathcal{P}'})
\cup
({\to}_{\mathcal{S}}
\circ 
{\stackrel{*}{\to}}_{\mathcal{S} \cup \mathcal{P}'}
\circ 
{\stackrel{=}{\gets}}_{\mathcal{P} \cup \mathcal{P}^{-1}}
\circ 
{\stackrel{*}{\gets}}_{\mathcal{S} \cup \mathcal{P}'})
$ 
and
(iii) $\mathrm{CP}(\mathcal{S},\mathcal{P} \cup \mathcal{P}^{-1})
\subseteq 
({\stackrel{*}{\to}}_{\mathcal{S} \cup \mathcal{P}'}
\circ 
{\stackrel{=}{\to}}_{\mathcal{P} \cup \mathcal{P}^{-1}}
)
\cup
({\stackrel{*}{\to}}_{\mathcal{S} \cup \mathcal{P}'}
\circ 
{\stackrel{=}{\to}}_{\mathcal{P} \cup \mathcal{P}^{-1}}
\circ 
{\stackrel{*}{\gets}}_{\mathcal{S} \cup \mathcal{P}'}
\circ 
{\gets}_{\mathcal{S}})
$.
Then 
$\mathcal{S}\cup \mathcal{P}$ is confluent.
\end{thm}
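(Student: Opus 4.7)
The plan is to reduce confluence of $\mathcal{S} \cup \mathcal{P}$ to a Church--Rosser-modulo property for $\to_\mathcal{S}$ and then invoke the abstract criterion of Theorem~\ref{thm:ARS}. By Lemma~\ref{lem:reversible + CR modulo}, since $\mathcal{P}$ is reversible, it suffices to show that $\to_\mathcal{S}$ is Church--Rosser modulo $\stackrel{*}{\leftrightarrow}_\mathcal{P}$. I would instantiate Theorem~\ref{thm:ARS} with ${\vdashv} := {\leftrightarrow}_{\mathcal{P}\cup\mathcal{P}^{-1}}$, ${\to} := {\to}_\mathcal{S}$, and ${\leadsto} := {\to}_{\mathcal{P}'}$, so that ${\Rightarrow} = {\to}_{\mathcal{S}\cup\mathcal{P}'}$ and ${\astvdashv} = \stackrel{*}{\leftrightarrow}_{\mathcal{P}\cup\mathcal{P}^{-1}} = \stackrel{*}{\leftrightarrow}_\mathcal{P}$. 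The preconditions are immediate: $\vdashv$ is symmetric; ${\leadsto}\subseteq{\vdashv}$ follows from $\mathcal{P}'\subseteq\mathcal{P}\cup\mathcal{P}^{-1}$; and ${\to}\circ\stackrel{*}{\leadsto} = {\to}_\mathcal{S}\circ\stackrel{*}{\to}_{\mathcal{P}'}$ is well-founded because $\mathcal{S}$ is terminating relative to $\mathcal{P}'$.

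For condition~(i) of Theorem~\ref{thm:ARS}, namely ${\gets}_\mathcal{S}\circ{\to}_\mathcal{S} \subseteq {\stackrel{*}{\Rightarrow}}\circ{\eqvdashv}\circ{\stackrel{*}{\Leftarrow}}$, I would apply Lemma~\ref{lem:linear-I}: assumption~(i) of the theorem is precisely its hypothesis, and its conclusion sits inside the required right-hand side since $\stackrel{=}{\gets}_{\mathcal{P}\cup\mathcal{P}^{-1}} \subseteq \eqvdashv$ and $\stackrel{*}{\to}_{\mathcal{S}\cup\mathcal{P}'}\subseteq\stackrel{*}{\Rightarrow}$. For condition~(ii) I would apply Lemma~\ref{lem:linear-III}. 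Because $\mathcal{P}\cup\mathcal{P}^{-1}$ is symmetric as a rewriting system we have ${\vdashv}\circ{\to}_\mathcal{S} = {\gets}_{\mathcal{P}\cup\mathcal{P}^{-1}}\circ{\to}_\mathcal{S}$, which matches the left-hand side of the conclusion of Lemma~\ref{lem:linear-III}. The relation $\blacktriangleright$ defined there coincides with the disjunction $({\eqvdashv}\circ{\stackrel{*}{\Leftarrow}})\cup({\to}\circ{\stackrel{*}{\Rightarrow}}\circ{\eqvdashv}\circ{\stackrel{*}{\Leftarrow}})$ appearing in condition~(ii) of Theorem~\ref{thm:ARS}, and its inverse $\blacktriangleleft$ is verbatim the right-hand side of assumption~(iii). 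Thus assumptions~(ii) and~(iii) of the theorem furnish the two critical-pair hypotheses required by Lemma~\ref{lem:linear-III}, whose conclusion is exactly condition~(ii) of Theorem~\ref{thm:ARS}. Combining everything, Theorem~\ref{thm:ARS} gives $\to_\mathcal{S}$ is Church--Rosser modulo $\stackrel{*}{\leftrightarrow}_\mathcal{P}$, and Lemma~\ref{lem:reversible + CR modulo} then yields confluence of $\mathcal{S}\cup\mathcal{P}$.

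The work is mostly bookkeeping rather than a deep step. The key design choice is ${\leadsto} := {\to}_{\mathcal{P}'}$: this makes ${\Rightarrow}$ rich enough to carry the auxiliary $\mathcal{P}'$-steps that appear inside the joinability conditions, while relative termination keeps ${\to}\circ\stackrel{*}{\leadsto}$ well-founded. The main obstacle I expect is checking carefully that the tailor-made relation $\blacktriangleright$ of Lemma~\ref{lem:linear-III} matches the disjunctive shape of condition~(ii) of Theorem~\ref{thm:ARS} on the nose, and that the reversed shape is precisely what appears in assumption~(iii). The restriction to linear (not merely left-linear) $\mathcal{S}$ enters through Lemma~\ref{lem:main-II}(1), feeding Lemma~\ref{lem:linear-III}: right-linearity is what lets one propagate a substitution refinement through the right-hand side of a $\mathcal{P}\cup\mathcal{P}^{-1}$-step in at most a single step rather than via a parallel rewrite.
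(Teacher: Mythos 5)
Your proposal is correct and follows essentially the same route as the paper: reduce to Church--Rosser modulo via Lemma~\ref{lem:reversible + CR modulo}, establish the two local conditions by Lemmas~\ref{lem:linear-I} and~\ref{lem:linear-III}, and instantiate Theorem~\ref{thm:ARS} with ${\vdashv} := {\leftrightarrow}_\mathcal{P}$, ${\leadsto} := {\to}_{\mathcal{P}'}$, ${\to} := {\to}_\mathcal{S}$, using relative termination for well-foundedness of ${\to}\circ\stackrel{*}{\leadsto}$. Your identification of $\blacktriangleright$ with the disjunctive right-hand side of condition~(ii) of Theorem~\ref{thm:ARS}, and of $\blacktriangleleft$ with assumption~(iii), is exactly the bookkeeping the paper performs.
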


\begin{proof}
From Lemma~\ref{lem:linear-I} and our assumption (i),
we have 
(a) ${\gets}_\mathcal{S} \circ {\to}_\mathcal{S}
\subseteq 
{\stackrel{*}{\to}}_{\mathcal{S}\cup \mathcal{P}'} 
\circ 
{\stackrel{=}{\gets}}_{\mathcal{P} \cup \mathcal{P}^{-1}}
\circ
{\stackrel{*}{\gets}}_{\mathcal{S}\cup\mathcal{P'}}$,
From Lemma~\ref{lem:linear-III} and our assumptions (ii), (iii),
we have
(b) ${\gets}_{\mathcal{P}\cup \mathcal{P}^{-1}} \circ {\to}_\mathcal{S}
\subseteq 
(
{\stackrel{=}{\gets}}_{\mathcal{P} \cup \mathcal{P}^{-1}}
\circ 
{\stackrel{*}{\gets}}_{\mathcal{S} \cup \mathcal{P}'})
\cup
({\to}_{\mathcal{S}}
\circ 
{\stackrel{*}{\to}}_{\mathcal{S} \cup \mathcal{P}'}
\circ 
{\stackrel{=}{\gets}}_{\mathcal{P} \cup \mathcal{P}^{-1}}
\circ 
{\stackrel{*}{\gets}}_{\mathcal{S} \cup \mathcal{P}'})
$.
Note that, by the definition of rewrite rules,
reversible TRSs are bidirectional.
Take ${\vdashv} := {\leftrightarrow}_\mathcal{P}
= {\gets}_{\mathcal{P}\cup \mathcal{P}^{-1}}$, 
${\leadsto} := {\to}_\mathcal{P'}$ and ${\to} := {\to}_\mathcal{S}$.
Then since $\mathcal{S}$ is terminating relative to $\mathcal{P'}$,
the relation ${\to} \circ \stackrel{*}{\leadsto}$ is well-founded.
Thus one can apply Theorem \ref{thm:ARS}
so as to prove ${\to}_\mathcal{S}$ is Church-Rosser modulo
$\stackrel{*}{\leftrightarrow}_\mathcal{P}$.
Since $\mathcal{P}$ is reversible,
${\to}_{\mathcal{S} \cup \mathcal{P}}$ is confluent
by Lemma \ref{lem:reversible + CR modulo}.
\end{proof}

\begin{rem}
Conditions like (i)--(iii) are referred to as 
\textit{critical pair conditions} in the sequel.
\end{rem}

By taking $\mathcal{P}' = \emptyset$ in 
Theorem~\ref{thm:linear},
we obtain the next corollary.

\begin{cor}[Theorem 1 of \cite{JKR83}]
\label{cor:linear}
Let $\mathcal{P},\mathcal{S}$ be TRSs
such that $\mathcal{S}$ is linear,
$\mathcal{P}$ is reversible
and 
$\mathcal{S}$ is terminating.
Suppose
(i) $\mathrm{CP}(\mathcal{S},\mathcal{S})
\subseteq 
{\stackrel{*}{\to}}_{\mathcal{S}}
\circ 
{\stackrel{=}{\gets}}_{\mathcal{P} \cup \mathcal{P}^{-1}}
\circ
{\stackrel{*}{\gets}}_{\mathcal{S}}$,
(ii) 
$\mathrm{CP}(\mathcal{P} \cup \mathcal{P}^{-1},\mathcal{S})
\subseteq 
{\stackrel{*}{\to}}_{\mathcal{S}}
\circ 
{\stackrel{=}{\gets}}_{\mathcal{P} \cup \mathcal{P}^{-1}}
\circ 
{\stackrel{*}{\gets}}_{\mathcal{S}}
$ 
and
(iii) $\mathrm{CP}(\mathcal{S},\mathcal{P} \cup \mathcal{P}^{-1})
\subseteq 
{\stackrel{*}{\to}}_{\mathcal{S}}
\circ 
{\stackrel{=}{\to}}_{\mathcal{P} \cup \mathcal{P}^{-1}}
\circ 
{\stackrel{*}{\gets}}_{\mathcal{S}}
$.
Then 
$\mathcal{S}\cup \mathcal{P}$ is confluent.
\end{cor}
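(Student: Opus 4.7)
The plan is to derive this corollary as an immediate instantiation of Theorem~\ref{thm:linear} by taking $\mathcal{P}' := \emptyset$, as suggested in the paragraph preceding the statement. So I would not prove anything from scratch; instead I would verify that, under this choice of $\mathcal{P}'$, every hypothesis of Theorem~\ref{thm:linear} follows from the hypotheses of the corollary, and every conclusion of the corollary follows from the conclusion of Theorem~\ref{thm:linear}.

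First I would discharge the side conditions on $\mathcal{P}'$. The inclusion $\mathcal{P}' \subseteq \mathcal{P} \cup \mathcal{P}^{-1}$ is trivial since $\emptyset$ is a subset of every set. The requirement that $\mathcal{S}$ be terminating relative to $\mathcal{P}'$ asks that $\to_\mathcal{S} \circ \stackrel{*}{\to}_{\mathcal{P}'}$ be well-founded; but with $\mathcal{P}' = \emptyset$ we have $\stackrel{*}{\to}_{\mathcal{P}'} = \mathrm{id}$, so this relation is just $\to_\mathcal{S}$, whose well-foundedness is exactly the assumed termination of $\mathcal{S}$. The hypotheses on $\mathcal{S}$ being linear and $\mathcal{P}$ being reversible are shared verbatim.

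Next I would check the three critical pair conditions. With $\mathcal{P}' = \emptyset$, the rewrite relation $\to_{\mathcal{S} \cup \mathcal{P}'}$ collapses to $\to_{\mathcal{S}}$, so condition~(i) of Theorem~\ref{thm:linear} becomes exactly condition~(i) of the corollary. For condition~(ii) of Theorem~\ref{thm:linear}, the two joinability shapes simplify to $\stackrel{=}{\gets}_{\mathcal{P} \cup \mathcal{P}^{-1}} \circ \stackrel{*}{\gets}_{\mathcal{S}}$ and $\to_{\mathcal{S}} \circ \stackrel{*}{\to}_{\mathcal{S}} \circ \stackrel{=}{\gets}_{\mathcal{P} \cup \mathcal{P}^{-1}} \circ \stackrel{*}{\gets}_{\mathcal{S}}$, whose union is $\stackrel{*}{\to}_{\mathcal{S}} \circ \stackrel{=}{\gets}_{\mathcal{P} \cup \mathcal{P}^{-1}} \circ \stackrel{*}{\gets}_{\mathcal{S}}$, matching condition~(ii) of the corollary. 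Analogously, condition~(iii) of Theorem~\ref{thm:linear} simplifies to $\stackrel{*}{\to}_{\mathcal{S}} \circ \stackrel{=}{\to}_{\mathcal{P} \cup \mathcal{P}^{-1}} \circ \stackrel{*}{\gets}_{\mathcal{S}}$, which is condition~(iii) of the corollary.

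Having matched all hypotheses, Theorem~\ref{thm:linear} delivers confluence of $\mathcal{S} \cup \mathcal{P}$, which is the conclusion. There is essentially no obstacle here; the only thing to keep in mind is the bookkeeping that $\stackrel{*}{\to}_{\mathcal{S} \cup \emptyset} = \stackrel{*}{\to}_{\mathcal{S}}$ and that the two disjuncts in conditions~(ii) and~(iii) of Theorem~\ref{thm:linear} fuse into a single expression once the extra $\mathcal{P}'$-steps disappear.
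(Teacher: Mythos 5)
Your proposal is correct and is exactly the paper's proof: the paper obtains this corollary by instantiating Theorem~\ref{thm:linear} with $\mathcal{P}' = \emptyset$, and your verification that the relative-termination hypothesis collapses to plain termination and that the two disjuncts in conditions (ii) and (iii) fuse into the single expressions of the corollary is the (routine) bookkeeping the paper leaves implicit.
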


To generalize Theorem \ref{thm:linear}
for (possibly not right-linear) 
left-linear $\mathcal{S}$,
we have to use Lemma \ref{lem:main-II} (2)
instead of Lemma \ref{lem:main-II} (1).
For this, we needs condition 
${\stackrel{+}{\to}_\mathcal{R}}
\circ {\pargets_\mathcal{Q}}
\circ {\stackrel{*}{\gets}_\mathcal{R}}
\subseteq {\blacktriangleright}$.
However, this fact reduces the application of Theorem \ref{thm:ARS}
to just an application of Corollary \ref{cor:ARS-II'}
(and hence that of Corollary \ref{cor:ARS-III}),
because the $\eqvdashv$-step in the condition does not contribute
and hence we have to take ${\leadsto} := {\vdashv}$.
In the rest of this subsection,
we sketch how a proof 
analogous to the linear case
can be applied to obtain a confluence criterion
based on the Corollary \ref{cor:ARS-III}.

\begin{lem}
\label{lem:relative-I}
Let $\mathcal{P},\mathcal{S}$ be TRSs
such that $\mathcal{P}$ is bidirectional.
Suppose $\mathrm{CP}(\mathcal{S},\mathcal{S})
\subseteq 
{\stackrel{*}{\to}}_{\mathcal{S}}
\circ 
{\stackrel{*}{\gets}}_{\mathcal{P} \cup \mathcal{P}^{-1}}
\circ 
{\stackrel{*}{\gets}}_{\mathcal{S}}$.
Then 
${\gets}_\mathcal{S} \circ {\to}_\mathcal{S}
\subseteq 
{\stackrel{*}{\to}}_{\mathcal{S}}
\circ 
{\stackrel{*}{\gets}}_{\mathcal{P} \cup \mathcal{P}^{-1}}
\circ 
{\stackrel{*}{\gets}}_{\mathcal{S}}$.
\end{lem}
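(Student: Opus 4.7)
My plan is to derive this lemma as a direct application of Lemma~\ref{lem:main-I}, following the same strategy used in the proof of Lemma~\ref{lem:linear-I}. Concretely, I would instantiate the abstract rewrite relation of Lemma~\ref{lem:main-I} by
\[
{\blacktriangleright} := {\stackrel{*}{\to}}_{\mathcal{S}} \circ {\stackrel{*}{\gets}}_{\mathcal{P} \cup \mathcal{P}^{-1}} \circ {\stackrel{*}{\gets}}_{\mathcal{S}}.
\]
Each component is a rewrite relation, and since stability, monotonicity, transitive--reflexive closure, and relational composition all respect these properties, $\blacktriangleright$ is itself a rewrite relation, so the prerequisite of Lemma~\ref{lem:main-I} is met.

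What remains is to verify the two inclusions demanded by Lemma~\ref{lem:main-I}, namely ${\stackrel{*}{\to}}_{\mathcal{S}} \circ {\stackrel{*}{\gets}}_{\mathcal{S}} \subseteq {\blacktriangleright}$ and $\mathrm{CP}(\mathcal{S},\mathcal{S}) \subseteq {\blacktriangleright} \cap {\blacktriangleleft}$. The first inclusion is immediate: insert the reflexive (i.e.\ empty) ${\stackrel{*}{\gets}}_{\mathcal{P} \cup \mathcal{P}^{-1}}$ step in the middle. For the second, the hypothesis already provides $\mathrm{CP}(\mathcal{S},\mathcal{S}) \subseteq {\blacktriangleright}$, so I only need ${\blacktriangleright} = {\blacktriangleleft}$. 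This is the one point where bidirectionality of $\mathcal{P}$ is used: as $\mathcal{P}$ is bidirectional, so is $\mathcal{P} \cup \mathcal{P}^{-1}$, and hence ${\to}_{\mathcal{P} \cup \mathcal{P}^{-1}} = {\gets}_{\mathcal{P} \cup \mathcal{P}^{-1}}$, making ${\stackrel{*}{\gets}}_{\mathcal{P} \cup \mathcal{P}^{-1}}$ symmetric. Inverting each component of $\blacktriangleright$ and reversing the order of composition then recovers $\blacktriangleright$ itself.

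With both hypotheses verified, Lemma~\ref{lem:main-I} delivers ${\gets}_{\mathcal{S}} \circ {\to}_{\mathcal{S}} \subseteq {\blacktriangleright}$, which is precisely the desired conclusion. I do not foresee any real obstacle: the argument is essentially a verbatim adaptation of the proof of Lemma~\ref{lem:linear-I}, with ${\stackrel{=}{\gets}}_{\mathcal{P} \cup \mathcal{P}^{-1}}$ relaxed to ${\stackrel{*}{\gets}}_{\mathcal{P} \cup \mathcal{P}^{-1}}$ and the auxiliary TRS $\mathcal{P}'$ dropped. The only subtle step worth stating explicitly is that bidirectionality collapses $\blacktriangleright$ and $\blacktriangleleft$ into a single relation.
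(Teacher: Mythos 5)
Your proposal is correct and matches the paper's own proof, which is exactly the one-line instantiation ${\blacktriangleright} := {\stackrel{*}{\to}}_{\mathcal{S}} \circ {\stackrel{*}{\gets}}_{\mathcal{P} \cup \mathcal{P}^{-1}} \circ {\stackrel{*}{\gets}}_{\mathcal{S}}$ in Lemma~\ref{lem:main-I}. Your additional verifications (that $\blacktriangleright$ is a rewrite relation, that the symmetry of ${\stackrel{*}{\gets}}_{\mathcal{P}\cup\mathcal{P}^{-1}}$ collapses $\blacktriangleright$ and $\blacktriangleleft$) are exactly the details the paper leaves implicit, mirroring its proof of Lemma~\ref{lem:linear-I}.
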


\begin{proof}
Take ${\blacktriangleright} := 
{\stackrel{*}{\to}}_{\mathcal{S}}
\circ 
{\stackrel{*}{\gets}}_{\mathcal{P} \cup \mathcal{P}^{-1}}
\circ 
{\stackrel{*}{\gets}}_{\mathcal{S}}$
in Lemma~\ref{lem:main-I}.
\end{proof}

\begin{lem}
\label{lem:relative-II}
Let $\mathcal{P},\mathcal{S}$ be TRSs
such that $\mathcal{P}$ is bidirectional
and $\mathcal{S}$ is left-linear.
Suppose 
(i) $\mathrm{CP}(\mathcal{P}\cup\mathcal{P}^{-1},\mathcal{S})
\subseteq 
{\stackrel{+}{\to}}_{\mathcal{S}}
\circ 
{\stackrel{*}{\gets}}_{\mathcal{P} \cup \mathcal{P}^{-1}}
\circ 
{\stackrel{*}{\gets}}_{\mathcal{S}}$
and
(ii) $\mathrm{CP}(\mathcal{S},\mathcal{P}\cup\mathcal{P}^{-1})
\subseteq 
{\stackrel{*}{\to}}_{\mathcal{S}}
\circ 
{\stackrel{*}{\to}}_{\mathcal{P} \cup \mathcal{P}^{-1}}
\circ 
{\stackrel{+}{\gets}}_{\mathcal{S}}$.
Then 
${\gets}_{\mathcal{P} \cup \mathcal{P}^{-1}}  \circ {\to}_\mathcal{S}
\subseteq 
{\stackrel{+}{\to}}_{\mathcal{S}}
\circ 
{\stackrel{*}{\gets}}_{\mathcal{P} \cup \mathcal{P}^{-1}}
\circ 
{\stackrel{*}{\gets}}_{\mathcal{S}}$.
\end{lem}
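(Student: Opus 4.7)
The proof is an instantiation of Lemma~\ref{lem:main-II}(2), entirely analogous to the derivation of Lemma~\ref{lem:linear-III} from Lemma~\ref{lem:main-II}(1). The plan is to set $\mathcal{Q} := \mathcal{P} \cup \mathcal{P}^{-1}$, which is bidirectional because $\mathcal{P}$ is, and $\mathcal{R} := \mathcal{S}$, which is left-linear by hypothesis, and to take as the witness rewrite relation
\[
\blacktriangleright \ :=\ {\stackrel{+}{\to}}_{\mathcal{S}} \circ {\stackrel{*}{\gets}}_{\mathcal{P} \cup \mathcal{P}^{-1}} \circ {\stackrel{*}{\gets}}_{\mathcal{S}} .
\]
Its converse is $\blacktriangleleft \ =\ {\stackrel{*}{\to}}_{\mathcal{S}} \circ {\stackrel{*}{\to}}_{\mathcal{P} \cup \mathcal{P}^{-1}} \circ {\stackrel{+}{\gets}}_{\mathcal{S}}$. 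I would begin by noting that $\blacktriangleright$ is indeed a rewrite relation: composition, union, and reflexive/transitive closure all preserve stability and monotonicity, so all three factors and their composition inherit these properties.

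Next, I would discharge the two critical-pair hypotheses of Lemma~\ref{lem:main-II}(2). The inclusion $\mathrm{CP}(\mathcal{P} \cup \mathcal{P}^{-1}, \mathcal{S}) \subseteq \blacktriangleright$ is exactly assumption~(i), and $\mathrm{CP}(\mathcal{S}, \mathcal{P} \cup \mathcal{P}^{-1}) \subseteq \blacktriangleleft$ is precisely assumption~(ii), read through the explicit form of $\blacktriangleleft$ computed above. It remains to verify the auxiliary hypothesis of part~(2), namely
\[
{\stackrel{+}{\to}_{\mathcal{S}}} \circ {\pargets_{\mathcal{P} \cup \mathcal{P}^{-1}}} \circ {\stackrel{*}{\gets}_{\mathcal{S}}} \ \subseteq\ \blacktriangleright .
\]
This is immediate since a parallel step is a sequence of ordinary steps, i.e.\ ${\pargets}_{\mathcal{P} \cup \mathcal{P}^{-1}} \subseteq {\stackrel{*}{\gets}}_{\mathcal{P} \cup \mathcal{P}^{-1}}$. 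An application of Lemma~\ref{lem:main-II}(2) then delivers ${\gets}_{\mathcal{P} \cup \mathcal{P}^{-1}} \circ {\to}_{\mathcal{S}} \subseteq \blacktriangleright$, which is exactly the desired conclusion.

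There is no essential obstacle here; the lemma is effectively bookkeeping on top of Lemma~\ref{lem:main-II}(2). The only care required is to match the ``$+$'' factor in assumption~(ii) with the ``$+$'' factor on the $\mathcal{S}$-side of $\blacktriangleleft$, and to recognise that the parallel-step factor ${\pargets}_{\mathcal{P} \cup \mathcal{P}^{-1}}$ demanded by Lemma~\ref{lem:main-II}(2) can be absorbed into the ordinary reflexive-transitive closure ${\stackrel{*}{\gets}}_{\mathcal{P} \cup \mathcal{P}^{-1}}$ used inside the chosen $\blacktriangleright$. Once the choice of $\blacktriangleright$ is fixed, every verification is a one-line inclusion.
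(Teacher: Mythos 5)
Your proposal is correct and coincides with the paper's own (one-line) proof: the paper likewise instantiates Lemma~\ref{lem:main-II}~(2) with ${\blacktriangleright} := {\stackrel{+}{\to}}_{\mathcal{S}} \circ {\stackrel{*}{\gets}}_{\mathcal{P} \cup \mathcal{P}^{-1}} \circ {\stackrel{*}{\gets}}_{\mathcal{S}}$, leaving the bookkeeping you spell out implicit. Your verification of the hypotheses, including absorbing ${\pargets}_{\mathcal{P}\cup\mathcal{P}^{-1}}$ into ${\stackrel{*}{\gets}}_{\mathcal{P}\cup\mathcal{P}^{-1}}$, is accurate.
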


\begin{proof}
Take ${\blacktriangleright} := 
{\stackrel{+}{\to}}_{\mathcal{S}}
\circ 
{\stackrel{*}{\gets}}_{\mathcal{P} \cup \mathcal{P}^{-1}}
\circ 
{\stackrel{*}{\gets}}_{\mathcal{S}}$
in Lemma~\ref{lem:main-II} (2).
\end{proof}

\begin{prop}[Theorem 3.3 of \cite{Hue80}]
\label{prop:relative}
Let $\mathcal{P},\mathcal{S}$ be TRSs
such that $\mathcal{S}$ is left-linear,
$\mathcal{P}$ is reversible and
$\mathcal{S}$ is terminating relative to $\mathcal{P}$.
Suppose
(i) $\mathrm{CP}(\mathcal{S},\mathcal{S})
\subseteq 
{\stackrel{*}{\to}}_{\mathcal{S}}
\circ 
{\stackrel{*}{\gets}}_{\mathcal{P} \cup \mathcal{P}^{-1}}
\circ 
{\stackrel{*}{\gets}}_{\mathcal{S}}$
(ii) $\mathrm{CP}(\mathcal{P} \cup \mathcal{P}^{-1},\mathcal{S})
\subseteq 
{\stackrel{+}{\to}}_{\mathcal{S}}
\circ 
{\stackrel{*}{\gets}}_{\mathcal{P} \cup \mathcal{P}^{-1}}
\circ 
{\stackrel{*}{\gets}}_{\mathcal{S}}$
and
(iii) $\mathrm{CP}(\mathcal{S},\mathcal{P}\cup\mathcal{P}^{-1})
\subseteq 
{\stackrel{*}{\to}}_{\mathcal{S}}
\circ 
{\stackrel{*}{\to}}_{\mathcal{P} \cup \mathcal{P}^{-1}}
\circ 
{\stackrel{+}{\gets}}_{\mathcal{S}}$.
Then 
$\mathcal{S}\cup \mathcal{P}$ is confluent.
\end{prop}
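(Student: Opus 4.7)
The plan is to mirror the proof of Theorem~\ref{thm:linear} but to apply Corollary~\ref{cor:ARS-III} in place of Theorem~\ref{thm:ARS}, as foreshadowed by the paragraph opening this subsection. The left-linearity (instead of linearity) of $\mathcal{S}$ forces us into clause~(2) of Lemma~\ref{lem:main-II}, whose joining condition already includes a $\pargets_{\mathcal{Q}}$-step; this means the ``$\eqvdashv$'' refinement in Theorem~\ref{thm:ARS} buys nothing, so one can only hope to land on the weaker packaging provided by Corollary~\ref{cor:ARS-III}.

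Concretely, I would first invoke Lemma~\ref{lem:relative-I} with condition~(i) to obtain
\[
{\gets}_{\mathcal{S}}\circ{\to}_{\mathcal{S}}
\ \subseteq\
{\stackrel{*}{\to}}_{\mathcal{S}}
\circ{\stackrel{*}{\gets}}_{\mathcal{P}\cup\mathcal{P}^{-1}}
\circ{\stackrel{*}{\gets}}_{\mathcal{S}},
\]
and then invoke Lemma~\ref{lem:relative-II} with conditions~(ii) and~(iii) to obtain
\[
{\gets}_{\mathcal{P}\cup\mathcal{P}^{-1}}\circ{\to}_{\mathcal{S}}
\ \subseteq\
{\stackrel{+}{\to}}_{\mathcal{S}}
\circ{\stackrel{*}{\gets}}_{\mathcal{P}\cup\mathcal{P}^{-1}}
\circ{\stackrel{*}{\gets}}_{\mathcal{S}}.
\]
Set ${\vdashv}:={\leftrightarrow}_{\mathcal{P}}$ and ${\to}:={\to}_{\mathcal{S}}$; then ${\gets}_{\mathcal{P}\cup\mathcal{P}^{-1}}={\vdashv}$, and the reversibility of $\mathcal{P}$ gives $\astvdashv={\stackrel{*}{\leftrightarrow}}_{\mathcal{P}}={\stackrel{*}{\to}}_{\mathcal{P}}$, so that the hypothesis that $\mathcal{S}$ is terminating relative to $\mathcal{P}$ is precisely the well-foundedness of ${\to}_{\mathcal{S}}\circ\astvdashv$ required by Corollary~\ref{cor:ARS-III}. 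Combined with the trivial inclusion ${\stackrel{*}{\gets}}_{\mathcal{P}\cup\mathcal{P}^{-1}}\subseteq\astvdashv$, the two displays above become exactly conditions~(i) and~(ii) of that corollary, yielding that ${\to}_{\mathcal{S}}$ is Church-Rosser modulo $\astvdashv={\stackrel{*}{\leftrightarrow}}_{\mathcal{P}}$. A final application of Lemma~\ref{lem:reversible + CR modulo} converts this Church-Rosser-modulo property into confluence of $\mathcal{S}\cup\mathcal{P}$.

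The only real subtlety is arranging the shapes so that Corollary~\ref{cor:ARS-III} applies: Lemma~\ref{lem:relative-II} produces $\stackrel{+}{\to}_{\mathcal{S}}$ (not $\stackrel{*}{\to}_{\mathcal{S}}$) on the left of its right-hand side, and this strict-plus is precisely what clause~(ii) of Corollary~\ref{cor:ARS-III} demands---which in turn is what necessitates well-foundedness of ${\to}_{\mathcal{S}}\circ\astvdashv$, matching the given relative-termination hypothesis exactly. Beyond this bookkeeping the proof introduces no new idea; it is an orchestration of Lemmas~\ref{lem:relative-I} and~\ref{lem:relative-II}, Corollary~\ref{cor:ARS-III}, and Lemma~\ref{lem:reversible + CR modulo}.
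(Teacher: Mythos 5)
Your proposal is correct and follows essentially the same route as the paper: the paper's proof likewise combines Lemmas~\ref{lem:relative-I} and \ref{lem:relative-II} with Corollary~\ref{cor:ARS-III} (using reversibility of $\mathcal{P}$ to identify relative termination with termination modulo $\mathcal{P}\cup\mathcal{P}^{-1}$) and then concludes via Lemma~\ref{lem:reversible + CR modulo}. Your additional remarks about why the $\eqvdashv$-refinement of Theorem~\ref{thm:ARS} buys nothing here match the paper's own discussion preceding the proposition.
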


\begin{proof}
Since $\mathcal{P}$ is reversible,
$\mathcal{S}$ is terminating modulo
$\mathcal{P} \cup \mathcal{P}^{-1}$.
By Lemmas~\ref{lem:relative-I} and \ref{lem:relative-II}
and Corollary~\ref{cor:ARS-III},
${\to}_\mathcal{S}$ is Church-Rosser modulo
$\stackrel{*}{\leftrightarrow}_\mathcal{P}$.
Since $\mathcal{P}$ is reversible,
${\to}_{\mathcal{S} \cup \mathcal{P}}$ is confluent
by Lemma \ref{lem:reversible + CR modulo}.
\end{proof}

\begin{rem}
It is straightforward to modify
Lemmas~\ref{lem:relative-I} and \ref{lem:relative-II}
and use either Corollary~\ref{cor:ARS-II'} (or Corollary~\ref{cor:Huet80})
to replace conditions (i)--(iii) 
of Proposition~\ref{prop:relative}
with 
(i) $\mathrm{CP}(\mathcal{S},\mathcal{S})
\subseteq 
{\stackrel{*}{\to}}_{\mathcal{S}\cup \mathcal{P} \cup \mathcal{P}^{-1}}
\circ 
{\stackrel{*}{\gets}}_{\mathcal{S}\cup \mathcal{P} \cup \mathcal{P}^{-1}}$
(ii) $\mathrm{CP}(\mathcal{P} \cup \mathcal{P}^{-1},\mathcal{S})
\subseteq 
{\to}_\mathcal{S}
\circ 
{\stackrel{*}{\to}}_{\mathcal{S}\cup \mathcal{P} \cup \mathcal{P}^{-1}}
\circ 
{\stackrel{*}{\gets}}_{\mathcal{S}\cup \mathcal{P} \cup \mathcal{P}^{-1}}$,
and
(iii) $\mathrm{CP}(\mathcal{S},\mathcal{P}\cup\mathcal{P}^{-1})
\subseteq 
{\stackrel{*}{\to}}_{\mathcal{S}\cup \mathcal{P} \cup \mathcal{P}^{-1}}
\circ 
{\stackrel{*}{\gets}}_{\mathcal{S}\cup \mathcal{P} \cup \mathcal{P}^{-1}}
\circ 
{\gets}_\mathcal{S}$, respectively
(or with
(i) $\mathrm{CP}(\mathcal{S},\mathcal{S})
\subseteq 
{\stackrel{*}{\to}}_{\mathcal{S}}
\circ 
{\stackrel{*}{\gets}}_{\mathcal{P} \cup \mathcal{P}^{-1}}
\circ 
{\stackrel{*}{\gets}}_{\mathcal{S}}$,
(ii) $\mathrm{CP}(\mathcal{P} \cup \mathcal{P}^{-1},\mathcal{S})
\subseteq 
{\stackrel{*}{\to}}_{\mathcal{S}}
\circ 
{\stackrel{*}{\gets}}_{\mathcal{P} \cup \mathcal{P}^{-1}}
\circ 
{\stackrel{*}{\gets}}_{\mathcal{S}}$,
and
(iii) $\mathrm{CP}(\mathcal{S},\mathcal{P}\cup\mathcal{P}^{-1})
\subseteq 
{\stackrel{*}{\to}}_{\mathcal{S}}
\circ 
{\stackrel{*}{\to}}_{\mathcal{P} \cup \mathcal{P}^{-1}}
\circ 
{\stackrel{*}{\gets}}_{\mathcal{S}}$, respectively).
Similar to the abstract case, 
such replacements do not strengthen or weaken
the applicability of the proposition
(c.f.\ Remark~\ref{rem:Differnce with Huet}).
\end{rem}

\subsection{Confluence criterion based on parallel rewrite steps}

As discussed in the previous subsection,
if we put ${\vdashv} := {\leftrightarrow_\mathcal{P}}
= {\gets_{\mathcal{P}\cup\mathcal{P}^{-1}}}$,
the application of our abstract criterion (Theorem \ref{thm:ARS}) 
to the left-linear case reduces to the application of Corollary \ref{cor:ARS-III}.
In this subsection, we relax the linear condition
of the $\mathcal{S}$-part to left-linear by
considering ${\vdashv} := {\pargets_{\mathcal{P}\cup \mathcal{P}^{-1}}}$
instead of ${\vdashv} := {\gets_{\mathcal{P}\cup\mathcal{P}^{-1}}}$.
This allows us to partially recover the applicability of Theorem \ref{thm:ARS}.

The next lemma is analogous to Lemma \ref{lem:linear-I},
which is obtained from Lemma \ref{lem:main-I} again.

\begin{lem}
\label{lem:parallel-I}
Let $\mathcal{P},\mathcal{S},\mathcal{P}'$ be TRSs
such that $\mathcal{P}$ is bidirectional.
Suppose that 
$\mathrm{CP}(\mathcal{S},\mathcal{S})
\subseteq 
{\stackrel{*}{\to}}_{\mathcal{S} \cup \mathcal{P}'} 
\circ {\pargets}_{\mathcal{P} \cup \mathcal{P}^{-1}}
\circ {\stackrel{*}{\gets}}_{\mathcal{S} \cup \mathcal{P}'}$.
Then 
${\gets}_\mathcal{S} \circ {\to}_\mathcal{S}
\subseteq 
{\stackrel{*}{\to}}_{\mathcal{S} \cup \mathcal{P}'}
\circ
{\pargets}_{\mathcal{P} \cup \mathcal{P}^{-1}}
\circ 
{\stackrel{*}{\gets}}_{\mathcal{S} \cup \mathcal{P}'}$.
\end{lem}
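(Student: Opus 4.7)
The plan is to invoke Lemma~\ref{lem:main-I} with the rewrite relation
\[
{\blacktriangleright} :=
{\stackrel{*}{\to}}_{\mathcal{S} \cup \mathcal{P}'}
\circ {\pargets}_{\mathcal{P} \cup \mathcal{P}^{-1}}
\circ {\stackrel{*}{\gets}}_{\mathcal{S} \cup \mathcal{P}'},
\]
exactly mirroring the proof of Lemma~\ref{lem:linear-I} but with a parallel $\mathcal{P}\cup\mathcal{P}^{-1}$-step in the middle instead of an at-most-one step. Since each constituent of the composition is stable and monotone, and these properties are preserved under composition, $\blacktriangleright$ is indeed a rewrite relation, so Lemma~\ref{lem:main-I} is applicable in principle.

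Then I would verify the two hypotheses of Lemma~\ref{lem:main-I}. For $\mathrm{CP}(\mathcal{S},\mathcal{S}) \subseteq {\blacktriangleright} \cap {\blacktriangleleft}$, the inclusion into $\blacktriangleright$ is precisely the assumption of the lemma, and the inclusion into $\blacktriangleleft$ follows once one observes that $\blacktriangleright$ is symmetric: since $\mathcal{P}$ is bidirectional, $\mathcal{P}\cup\mathcal{P}^{-1}$ equals its own inverse as a TRS, and so by reversing each of the parallel rule applications one obtains $\pargets_{\mathcal{P}\cup\mathcal{P}^{-1}} = \parto_{\mathcal{P}\cup\mathcal{P}^{-1}}$; the two outer $\mathcal{S}\cup\mathcal{P}'$-reduction sequences are already each other's inverse by definition, so ${\blacktriangleright} = {\blacktriangleleft}$. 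The remaining hypothesis ${\stackrel{*}{\to}}_\mathcal{S} \circ {\stackrel{*}{\gets}}_\mathcal{S} \subseteq {\blacktriangleright}$ is immediate by inserting the identity parallel step in the middle (recall that $\parto$ contains the identity) and using $\mathcal{S} \subseteq \mathcal{S}\cup\mathcal{P}'$.

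With both hypotheses in place, Lemma~\ref{lem:main-I} delivers ${\gets}_\mathcal{S} \circ {\to}_\mathcal{S} \subseteq {\blacktriangleright}$, which is exactly the desired conclusion. The only step requiring any thought is the symmetry observation for $\pargets_{\mathcal{P}\cup\mathcal{P}^{-1}}$; without bidirectionality of $\mathcal{P}$ one could not pass from the hypothesis to the second half of the critical-pair condition, so this is the single ingredient that does real work. Everything else is a mechanical unfolding of the abstraction already provided by Lemma~\ref{lem:main-I}.
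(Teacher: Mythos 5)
Your proposal is correct and is essentially the paper's own proof: the paper likewise takes ${\blacktriangleright} := {\stackrel{*}{\to}}_{\mathcal{S} \cup \mathcal{P}'} \circ {\pargets}_{\mathcal{P} \cup \mathcal{P}^{-1}} \circ {\stackrel{*}{\gets}}_{\mathcal{S} \cup \mathcal{P}'}$ and appeals to Lemma~\ref{lem:main-I}. You merely spell out the details the paper leaves implicit (that $\blacktriangleright$ is a symmetric rewrite relation, so ${\blacktriangleright} \cap {\blacktriangleleft} = {\blacktriangleright}$, and that the identity parallel step handles ${\stackrel{*}{\to}}_\mathcal{S} \circ {\stackrel{*}{\gets}}_\mathcal{S} \subseteq {\blacktriangleright}$), and these details are all accurate.
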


\begin{proof}
Take ${\blacktriangleright} := 
{\stackrel{*}{\to}}_{\mathcal{S} \cup \mathcal{P}'} 
\circ {\pargets}_{\mathcal{P} \cup \mathcal{P}^{-1}}
\circ {\stackrel{*}{\gets}}_{\mathcal{S} \cup \mathcal{P}'}$.
Then $\mathrm{CP}(\mathcal{S},\mathcal{S})
\subseteq {\blacktriangleright} \cap {\blacktriangleleft}$ and
${\stackrel{*}{\to}}_\mathcal{S}
\circ {\stackrel{*}{\gets}}_\mathcal{S}
\subseteq {\blacktriangleright}$.
Hence the claim follows from Lemma~\ref{lem:main-I}.
\end{proof}

To present an analogy of Lemma \ref{lem:linear-III},
we first present the parametrized version of the lemma
in the same spirit as Lemma \ref{lem:main-II}.

\begin{lem}
\label{lem:parallel-II}
Let $\mathcal{Q},\mathcal{R}$ be TRSs
such that $\mathcal{Q}$ is bidirectional
and $\mathcal{R}$ is left-linear.
Let $\blacktriangleright$ be a rewrite relation
such that 
$s_i \blacktriangleright t_i$ and
$s_j \gets_\mathcal{Q} t_j$ for any $j \in \{ 1,\ldots,n \} \setminus \{ i \}$
implies
$C[s_1,\ldots,s_n] \blacktriangleright C[t_1,\ldots,t_n]$.
Suppose
(i) $\mathrm{CP}_\mathit{in}(\mathcal{Q},\mathcal{R})
= \emptyset$ and
(ii) 
$\mathrm{CP}(\mathcal{R},\mathcal{Q})
\subseteq {\blacktriangleleft}$.
If 
${\stackrel{+}{\to}}_\mathcal{R}
\circ {\pargets}_\mathcal{Q} \circ
{\stackrel{*}{\gets}}_\mathcal{R}
\subseteq {\blacktriangleright}$
then 
${\pargets}_\mathcal{Q} \circ {\to}_\mathcal{R}
\subseteq 
{\blacktriangleright}$.
\end{lem}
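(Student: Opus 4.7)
The plan is to adapt the case analysis of Lemma~\ref{lem:main-II} to a parallel $\mathcal{Q}$-step. Let $P = \{p_1,\ldots,p_n\}$ be the positions of the parallel step $t_0 \pargets_\mathcal{Q} s$ and let $q$ be the position of the $\mathcal{R}$-step $s \to_{q,\mathcal{R}} t_1$. Since the $p_i$ are pairwise parallel, at most one of them can be comparable to $q$, which splits the analysis into two principal cases.

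In the first case, every $p_i$ is either parallel to $q$ or strictly below $q$. For those $p_i > q$, I will argue that $p_i/q$ must lie at or below a variable position of $l$ (the LHS of the $\mathcal{R}$-rule used at $q$): otherwise the overlap would produce an inner critical pair in $\mathrm{CP}_\mathit{in}(\mathcal{Q},\mathcal{R}) = \emptyset$. Left-linearity of $\mathcal{R}$ then guarantees that each variable of $l$ occurs at a unique position, so the $p_i > q$ group by variable and give rise to a single substitution update $\sigma \parto_\mathcal{Q} \sigma'$. Applying the $\mathcal{R}$-rule to $t_0$ at $q$ produces a term $w$ with $w/q = r\sigma'$, while a single parallel $\mathcal{Q}$-step from $t_1$ -- rewriting every copy of each relevant $\sigma(y)$ inside $r\sigma$ as well as the $p_j$ parallel to $q$ -- also reaches $w$. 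Thus $t_0 \to_\mathcal{R} w \pargets_\mathcal{Q} t_1$, and the hypothesis $\stackrel{+}{\to}_\mathcal{R} \circ \pargets_\mathcal{Q} \circ \stackrel{*}{\gets}_\mathcal{R} \subseteq \blacktriangleright$ yields $t_0 \blacktriangleright t_1$.

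In the second case, there is a unique $p_i$ with $p_i \le q$, and every other $p_j$ is parallel to $q$, so $t_0/p_j \gets_\mathcal{Q} t_1/p_j$. It then suffices to establish $t_0/p_i \blacktriangleright t_1/p_i$, after which the special closure property assumed of $\blacktriangleright$ (applied to the multi-hole context at $p_1, \ldots, p_n$) lifts it to $t_0 \blacktriangleright t_1$. I distinguish three sub-cases: (2a) $p_i = q$ gives an outer critical pair in $\mathrm{CP}_\mathit{out}(\mathcal{Q},\mathcal{R}) = \mathrm{CP}_\mathit{out}(\mathcal{R},\mathcal{Q})^{-1}$, hence in $\blacktriangleleft$ by~(ii); (2b) $p_i < q$ with $q/p_i$ at a function symbol of $l_i$ gives an inner critical pair in $\mathrm{CP}(\mathcal{R},\mathcal{Q}) \subseteq \blacktriangleleft$; and (2c) $p_i < q$ with $q/p_i$ at or below a variable position of $l_i$, where I exploit the bidirectionality of $\mathcal{Q}$ (which ensures that the relevant variable occurs in $r_i$ as well as in $l_i$) to build a chain $t_0/p_i = r_i\sigma_i \stackrel{+}{\to}_\mathcal{R} r_i\sigma_i' \gets_\mathcal{Q} l_i\sigma_i' \stackrel{*}{\gets}_\mathcal{R} t_1/p_i$ lying inside $\stackrel{+}{\to}_\mathcal{R} \circ \pargets_\mathcal{Q} \circ \stackrel{*}{\gets}_\mathcal{R} \subseteq \blacktriangleright$.

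The main technical obstacle will be the positional bookkeeping in the first case: verifying that the sub-positions of $t_1$ at which the $\mathcal{Q}$-rewrites take place -- inside the possibly duplicated copies of $\sigma(y)$ in $r\sigma$ and at the $p_j$ parallel to $q$ -- are pairwise parallel and can be labelled with the correct rules and substitutions. The left-linearity of $\mathcal{R}$ is essential here in order to make $\sigma'$ well-defined, since if $l$ contained two occurrences of the same variable $y$ then $p_i$'s targeting the two copies of $\sigma(y)$ inside $l\sigma$ could demand incompatible rewrites.
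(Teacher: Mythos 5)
Your proposal is correct and follows essentially the same route as the paper's proof: the same dichotomy on whether some $p_i \le q$, the same use of $\mathrm{CP}_\mathit{in}(\mathcal{Q},\mathcal{R}) = \emptyset$ to force the below-$q$ redexes under variable positions of $l'$, left-linearity to define $\sigma'$, bidirectionality to get at least one $\mathcal{R}$-step on $r_i\sigma_i$, and condition (ii) plus stability for the function-symbol overlaps. The only cosmetic difference is that in the variable-overlap subcase with $p_i \le q$ you establish $\blacktriangleright$ on the subterm and lift via the assumed closure property, whereas the paper builds the $\stackrel{+}{\to}_\mathcal{R} \circ \pargets_\mathcal{Q} \circ \stackrel{*}{\gets}_\mathcal{R}$ chain on the whole terms directly; both are fine.
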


\proof
Suppose
$t_0 \pargets_{U,\mathcal{Q}} s \to_{q,\mathcal{R}} t_1$.
Let $U = \{ p_1,\ldots, p_n \}$ where $p_1,\ldots, p_n$ are positions
from left to right,
$s/p_i = l_i\sigma_i$ for $l_i \to r_i \in \mathcal{Q}$ 
and substitutions $\sigma_i$
($1 \le i \le n$)
and $s/q = l'\rho$ for $l' \to r' \in \mathcal{R}$ and a substitution $\rho$.
We distinguish two cases:
(1) the case that $\exists p \in U.~p \le q$ and
(2) the case that $\forall p \in U.~p \not\le q$.
\begin{enumerate}[(1)]
\item Suppose $p_i \in U$ and $p_i \le q$.
Then either 
(a) $q/p_i  \in \mathrm{Pos}_\mathcal{F}(l_i)$
or 
(b) there exists $p_x \in \mathrm{Pos}_\mathcal{V}(l_i)$
such that $l_i/p_x = x \in \mathcal{V}$ and $p_i.p_x \le q$.
\begin{enumerate}[(a)]
\item
Then $t_0/p_i = v\rho$ and $t_1/p_i = u\rho$
for some $\langle u, v \rangle \in \mathrm{CP}(\mathcal{R},\mathcal{Q})$
and substitution $\rho$.
Then, from our assumption (ii),
we have 
$v \blacktriangleright u$.
Since $\blacktriangleright$ is a rewrite relation,
$t_0/p_i = v\rho \blacktriangleright u\rho = t_1/p_i$.
By the assumption on $\blacktriangleright$,
since for any $j \neq i$,
$s/p_j = r_j\sigma_j \gets_\mathcal{Q} 
l_j\sigma_j = t_1/p_j$, we have
$t_0 = s[r_1\sigma_1,\ldots,t_0/p_i,\ldots,r_n\sigma_n]_{p_1,\ldots,p_i,\ldots,p_n}
\blacktriangleright
s[l_1\sigma_1,\ldots,t_1/p_i,\ldots,l_n\sigma_n]_{p_1,\ldots,p_i,\ldots,p_n}
= t_1$.

\item
Then we have $t_0/p_i = r_i\sigma_i$ and 
$t_1/p_i \stackrel{*}{\to}_\mathcal{R} l_i\sigma_i'$ 
for some substitution $\sigma_i'$ 
such that $\sigma_i(x) \to_{q/(p_i.p_x),\mathcal{R}} \sigma_i'(x)$
and $\sigma_i'(y) = \sigma_i(y)$ for any $y \neq x$.
Since $\mathcal{Q}$ is bidirectional,
we have $\mathcal{V}(l_i) = \mathcal{V}(r_i)$.
Thus we have
\[
\begin{array}{lllll}
t_0 
& =& C[r_1\sigma_1,\ldots,r_i\sigma_i,\ldots,r_n\sigma_n]_{p_1,\ldots,p_i,\ldots,p_n}\\
&\stackrel{+}{\to}_\mathcal{R}&
C[r_1\sigma_1,\ldots,r_i\sigma_i',\ldots,r_n\sigma_n]_{p_1,\ldots,p_i,\ldots,p_n}\\
&\pargets_\mathcal{Q}&
C[l_1\sigma_1,\ldots,l_i\sigma_i',\ldots,l_n\sigma_n]_{p_1,\ldots,p_i,\ldots,p_n}\\
& \stackrel{*}{\gets}_\mathcal{R}&
C[l_1\sigma_1,\ldots,t_1/p_i,\ldots,l_n\sigma_n]_{p_1,\ldots,p_i,\ldots,p_n}
= t_1.
\end{array}
\]
From our assumption that
${\stackrel{+}{\to}}_\mathcal{R}
\circ {\pargets}_\mathcal{Q} \circ 
{\stackrel{*}{\gets}}_\mathcal{R}
\subseteq {\blacktriangleright}$
it follows $t_0 \blacktriangleright t_1$.

\end{enumerate}

\item Suppose $\forall p \in U.~p \not\le q$.
Let $U' = \{ p_i \in U \mid q < p_i \} = \{ p_l, \ldots, p_k \}$,
$q_i =  p_i/q$ for $l \le i \le k$,
and thus $l'\rho = l'\rho[l_l\sigma_l,\ldots,l_k\sigma_k]_{q_l,\ldots,q_k}$.
By our assumption (i),
for each $p_i \in U'$
there exists $q_x \in \mathrm{Pos}_\mathcal{V}(l')$
such that $l'/q_x = x \in \mathcal{V}$ and $q.q_x \le p_i$.
Thus, $s/q = l'\rho = l'\rho[l_l\sigma_l,\ldots,l_k\sigma_k]_{q_l,\ldots,q_k}
\to_\mathcal{R}
r'\rho = r'\rho[l_{j_1}\sigma_{j_1},\ldots,l_{j_m}\sigma_{j_m}]_{o_1,\ldots,o_m}
= t_1/q$
for some positions $o_1,\cdots,o_m$ and $j_1,\ldots,j_m \in \{l,\ldots, k\}$.
Furthermore, by the left-linearity of $\mathcal{R}$, we have
$l'\rho[r_l\sigma_l,\ldots,r_k\sigma_k]_{q_l,\ldots,q_k}
\to_\mathcal{R}
r'\rho[r_{j_1}\sigma_{j_1},\ldots,r_{j_m}\sigma_{j_m}]_{o_1,\ldots,o_m}$.
Thus, 
\[
\begin{array}{lllll}
t_0 & = &
s[r_1\sigma_1,\ldots,
              l'\rho[r_l\sigma_l,\ldots,r_k\sigma_k]_{q_l,\ldots,q_k},
              \ldots,r_n\sigma_n]_{p_1,\ldots,q,\ldots,p_n}\\
&\to_\mathcal{R}&
s[r_1\sigma_1,\ldots,
              r'\rho[r_{j_1}\sigma_{j_1},\ldots,r_{j_m}\sigma_{j_m}]_{o_1,\ldots,o_m},
              \ldots,r_n\sigma_n]_{p_1,\ldots,q,\ldots,p_n}\\
&\pargets_\mathcal{Q}&
s[l_1\sigma_1,\ldots,
              r'\rho[l_{j_1}\sigma_{j_1},\ldots,l_{j_m}\sigma_{j_m}]_{o_1,\ldots,o_m},
              \ldots,l_n\sigma_n]_{p_1,\ldots,q,\ldots,p_n} = t_1.
\end{array}
\]
From our assumption that
${\stackrel{+}{\to}}_\mathcal{R}
\circ {\pargets}_\mathcal{Q} \circ
{\stackrel{*}{\gets}}_\mathcal{R}
\subseteq {\blacktriangleright}$
follows $t_0 \blacktriangleright t_1$.\qed\smallskip
\end{enumerate}

\noindent The next lemma is an analogy of Lemma \ref{lem:linear-III}
based on parallel steps.

\begin{lem}
\label{lem:parallel-III}
Let $\mathcal{P},\mathcal{S},\mathcal{P}'$ be TRSs
such that $\mathcal{P}$ is bidirectional
and $\mathcal{S}$ is left-linear.
Let ${\blacktriangleright}
=
(
{\pargets}_{\mathcal{P} \cup \mathcal{P}^{-1}}
\circ 
{\stackrel{*}{\gets}}_{\mathcal{S} \cup \mathcal{P}'})
\cup
({\to}_{\mathcal{S}}
\circ 
{\stackrel{*}{\to}}_{\mathcal{S} \cup \mathcal{P}'}
\circ 
{\pargets}_{\mathcal{P} \cup \mathcal{P}^{-1}}
\circ 
{\stackrel{*}{\gets}}_{\mathcal{S} \cup \mathcal{P}'})
$.
Suppose 
$\mathrm{CP}_\mathit{in}(\mathcal{P} \cup \mathcal{P}^{-1},\mathcal{S})
= \emptyset$ and
$\mathrm{CP}(\mathcal{S},\mathcal{P} \cup \mathcal{P}^{-1})
\subseteq {\blacktriangleleft}$.
Then 
${\pargets}_{\mathcal{P}\cup \mathcal{P}^{-1}}
\circ {\to}_\mathcal{S}
\subseteq 
{\blacktriangleright}$.
\end{lem}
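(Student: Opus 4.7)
The plan is to derive this as a direct application of Lemma~\ref{lem:parallel-II} with $\mathcal{Q} := \mathcal{P} \cup \mathcal{P}^{-1}$ and $\mathcal{R} := \mathcal{S}$. Since $\mathcal{P}$ is bidirectional, so is $\mathcal{P}\cup\mathcal{P}^{-1}$, and $\mathcal{S}$ is left-linear by assumption, so the structural hypotheses of Lemma~\ref{lem:parallel-II} are met. The critical-pair conditions $\mathrm{CP}_\mathit{in}(\mathcal{P}\cup\mathcal{P}^{-1},\mathcal{S}) = \emptyset$ and $\mathrm{CP}(\mathcal{S},\mathcal{P}\cup\mathcal{P}^{-1}) \subseteq {\blacktriangleleft}$ are precisely conditions (i) and (ii) of that lemma, so they can be passed through directly.

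What remains is to verify the two abstract hypotheses imposed on $\blacktriangleright$ itself: that it is a rewrite relation with the stated parallel-closure property, and that ${\stackrel{+}{\to}}_\mathcal{S} \circ {\pargets}_{\mathcal{P}\cup\mathcal{P}^{-1}} \circ {\stackrel{*}{\gets}}_\mathcal{S} \subseteq {\blacktriangleright}$. The last inclusion is immediate: since $\mathcal{S} \subseteq \mathcal{S}\cup\mathcal{P}'$, any such composition is contained in ${\to}_\mathcal{S} \circ {\stackrel{*}{\to}}_{\mathcal{S}\cup\mathcal{P}'} \circ {\pargets}_{\mathcal{P}\cup\mathcal{P}^{-1}} \circ {\stackrel{*}{\gets}}_{\mathcal{S}\cup\mathcal{P}'}$, which is the second disjunct in the definition of $\blacktriangleright$. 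Stability and monotonicity of $\blacktriangleright$ are inherited from its constituents, since each of $\pargets_{\mathcal{P}\cup\mathcal{P}^{-1}}$, $\stackrel{*}{\gets}_{\mathcal{S}\cup\mathcal{P}'}$, $\to_\mathcal{S}$ and $\stackrel{*}{\to}_{\mathcal{S}\cup\mathcal{P}'}$ is stable and monotone, and these properties are closed under composition and union.

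The main technical step is checking the parallel-closure property: given $s_i \blacktriangleright t_i$ and $s_j \gets_{\mathcal{P}\cup\mathcal{P}^{-1}} t_j$ for all $j \ne i$, I must show $C[s_1,\ldots,s_n]_{p_1,\ldots,p_n} \blacktriangleright C[t_1,\ldots,t_n]_{p_1,\ldots,p_n}$. I would proceed by case analysis on which disjunct of $\blacktriangleright$ witnesses $s_i \blacktriangleright t_i$. In each case, the key observation is that ${\gets}_{\mathcal{P}\cup\mathcal{P}^{-1}} \subseteq {\pargets}_{\mathcal{P}\cup\mathcal{P}^{-1}}$, so the sibling backward steps can be absorbed into the central $\pargets_{\mathcal{P}\cup\mathcal{P}^{-1}}$-step of the witness; any preceding $\to_\mathcal{S}$- and $\stackrel{*}{\to}_{\mathcal{S}\cup\mathcal{P}'}$-steps (when the second disjunct applies) take place at position $p_i$ while the other subterms remain unchanged, so they commute trivially with the parallel sibling positions. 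With all the hypotheses of Lemma~\ref{lem:parallel-II} verified, the desired inclusion ${\pargets}_{\mathcal{P}\cup\mathcal{P}^{-1}} \circ {\to}_\mathcal{S} \subseteq {\blacktriangleright}$ follows immediately. The only real obstacle is bookkeeping the indices in this gluing argument; no genuinely new idea beyond the pattern used in Lemma~\ref{lem:linear-III} is required.
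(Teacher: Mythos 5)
Your proposal is correct and follows essentially the same route as the paper: instantiate Lemma~\ref{lem:parallel-II} with $\mathcal{Q} := \mathcal{P}\cup\mathcal{P}^{-1}$ and $\mathcal{R} := \mathcal{S}$, pass the critical-pair hypotheses through, and check that $\blacktriangleright$ satisfies the inclusion ${\stackrel{+}{\to}}_\mathcal{S}\circ{\pargets}_{\mathcal{P}\cup\mathcal{P}^{-1}}\circ{\stackrel{*}{\gets}}_\mathcal{S}\subseteq{\blacktriangleright}$ and the parallel-closure condition. Your explicit verification of the closure condition (absorbing the sibling $\gets_{\mathcal{Q}}$-steps into the central parallel step) is exactly the argument the paper leaves implicit.
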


\begin{proof}
Take $\mathcal{Q} := \mathcal{P} \cup  \mathcal{P}^{-1}$,
which is a bidirectional TRS by bidirectionality of $\mathcal{P}$,
and $\mathcal{R} := \mathcal{S}$ in Lemma~\ref{lem:parallel-II}.
Then by the condition,
$\mathrm{CP}_\mathit{in}(\mathcal{Q},\mathcal{R}) = \emptyset$ and
$\mathrm{CP}(\mathcal{R},\mathcal{Q})
\subseteq {\blacktriangleleft}$.
Furthermore,
${\stackrel{+}{\to}}_\mathcal{R}
\circ {\pargets}_\mathcal{Q}
\circ {\stackrel{*}{\gets}}_\mathcal{R}
\subseteq 
{\to}_{\mathcal{S}}
\circ 
{\stackrel{*}{\to}}_{\mathcal{S} \cup \mathcal{P}'}
\circ 
{\pargets}_{\mathcal{P} \cup \mathcal{P}^{-1}}
\circ 
{\stackrel{*}{\gets}}_{\mathcal{S} \cup \mathcal{P}'}
\subseteq {\blacktriangleright}$
and 
$s_i \blacktriangleright t_i$ and
$s_j \gets_\mathcal{Q} t_j$ for any $j \in \{ 1,\ldots,n \} \setminus \{ i \}$
implies
$C[s_1,\ldots,s_n] \blacktriangleright C[t_1,\ldots,t_n]$.
Hence the claim follows from Lemma~\ref{lem:parallel-II}.
\end{proof}

We arrive at our second criterion for confluence.

\begin{thm}[confluence criterion based on parallel rewrite steps]
\label{thm:parallel}
Let $\mathcal{P},\mathcal{S},\mathcal{P}'$ be TRSs
such that $\mathcal{S}$ is left-linear,
$\mathcal{P}$ is reversible,
$\mathcal{P}' \subseteq \mathcal{P} \cup \mathcal{P}^{-1}$,
and 
$\mathcal{S}$ is terminating relative to $\mathcal{P}'$.
Suppose
(i) $\mathrm{CP}(\mathcal{S},\mathcal{S})
\subseteq 
{\stackrel{*}{\to}}_{\mathcal{S}\cup \mathcal{P}'} 
\circ 
{\pargets}_{\mathcal{P} \cup \mathcal{P}^{-1}}
\circ
{\stackrel{*}{\gets}}_{\mathcal{S}\cup\mathcal{P'}}$,
(ii) 
$\mathrm{CP}_\mathit{in}(\mathcal{P} \cup \mathcal{P}^{-1},\mathcal{S})
= \emptyset$
and
(iii) $\mathrm{CP}(\mathcal{S},\mathcal{P} \cup \mathcal{P}^{-1})
\subseteq 
(
{\stackrel{*}{\to}}_{\mathcal{S} \cup \mathcal{P}'}
\circ 
{\parto}_{\mathcal{P} \cup \mathcal{P}^{-1}}
)
\cup
({\stackrel{*}{\to}}_{\mathcal{S} \cup \mathcal{P}'}
\circ 
{\parto}_{\mathcal{P} \cup \mathcal{P}^{-1}}
\circ 
{\stackrel{*}{\gets}}_{\mathcal{S} \cup \mathcal{P}'}
\circ 
{\gets}_{\mathcal{S}}
)
$.
Then 
$\mathcal{S}\cup \mathcal{P}$ is confluent.
\end{thm}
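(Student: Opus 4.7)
The plan is to mimic the proof of Theorem~\ref{thm:linear}, but using Lemmas~\ref{lem:parallel-I} and~\ref{lem:parallel-III} in place of Lemmas~\ref{lem:linear-I} and~\ref{lem:linear-III}, and applying the abstract criterion Theorem~\ref{thm:ARS} with ${\vdashv} := {\parto}_{\mathcal{P} \cup \mathcal{P}^{-1}}$ rather than ${\vdashv} := {\gets}_{\mathcal{P} \cup \mathcal{P}^{-1}}$.

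First I would observe that, since $\mathcal{P}$ is reversible and hence bidirectional, the rule set $\mathcal{P} \cup \mathcal{P}^{-1}$ is closed under reversal. Consequently ${\parto}_{\mathcal{P} \cup \mathcal{P}^{-1}}$ is symmetric (it coincides with ${\pargets}_{\mathcal{P} \cup \mathcal{P}^{-1}}$), and its reflexive--transitive closure equals $\stackrel{*}{\leftrightarrow}_{\mathcal{P}}$. Next, Lemma~\ref{lem:parallel-I} applied with assumption (i) yields
\[
{\gets}_\mathcal{S} \circ {\to}_\mathcal{S}
\subseteq
{\stackrel{*}{\to}}_{\mathcal{S} \cup \mathcal{P}'}
\circ {\pargets}_{\mathcal{P} \cup \mathcal{P}^{-1}}
\circ {\stackrel{*}{\gets}}_{\mathcal{S} \cup \mathcal{P}'},
\]
while Lemma~\ref{lem:parallel-III} together with assumptions (ii) and (iii) delivers
\[
{\pargets}_{\mathcal{P} \cup \mathcal{P}^{-1}} \circ {\to}_\mathcal{S}
\subseteq
({\pargets}_{\mathcal{P} \cup \mathcal{P}^{-1}} \circ {\stackrel{*}{\gets}}_{\mathcal{S} \cup \mathcal{P}'})
\cup
({\to}_{\mathcal{S}} \circ {\stackrel{*}{\to}}_{\mathcal{S} \cup \mathcal{P}'} \circ {\pargets}_{\mathcal{P} \cup \mathcal{P}^{-1}} \circ {\stackrel{*}{\gets}}_{\mathcal{S} \cup \mathcal{P}'}).
\]

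I would then invoke Theorem~\ref{thm:ARS} with ${\vdashv} := {\parto}_{\mathcal{P} \cup \mathcal{P}^{-1}}$, ${\leadsto} := {\to}_{\mathcal{P}'}$, and ${\to} := {\to}_\mathcal{S}$. The inclusion ${\leadsto} \subseteq {\vdashv}$ holds because $\mathcal{P}' \subseteq \mathcal{P} \cup \mathcal{P}^{-1}$ and any single rewrite step is a parallel step at one position; well-foundedness of ${\to}_\mathcal{S} \circ \stackrel{*}{\to}_{\mathcal{P}'}$ is precisely termination of $\mathcal{S}$ relative to $\mathcal{P}'$. Since ${\parto}_{\mathcal{P} \cup \mathcal{P}^{-1}}$ already contains the identity, the reflexive closure ${\eqvdashv}$ collapses to ${\vdashv}$, so the two displayed inclusions above coincide verbatim with conditions (i) and (ii) of Theorem~\ref{thm:ARS}. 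It follows that $\to_\mathcal{S}$ is Church-Rosser modulo ${\astvdashv} = \stackrel{*}{\leftrightarrow}_{\mathcal{P}}$, and confluence of $\mathcal{S} \cup \mathcal{P}$ then follows from Lemma~\ref{lem:reversible + CR modulo}.

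The main subtle point, which distinguishes this argument from the linear case, is the choice of $\vdashv$ as the parallel-step relation rather than a single reverse step: this is exactly what Lemma~\ref{lem:main-II}(2) (invoked inside Lemma~\ref{lem:parallel-III}) demands in order to absorb duplicated copies of a variable assignment under a non-right-linear $\mathcal{S}$-rule, and so it is the essential ingredient that lets us weaken linearity of $\mathcal{S}$ to left-linearity. Beyond this substitution and the elementary verification of symmetry and reflexivity of $\parto_{\mathcal{P} \cup \mathcal{P}^{-1}}$, no new combinatorial work beyond the proof of Theorem~\ref{thm:linear} should be required.
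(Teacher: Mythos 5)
Your proposal is correct and follows essentially the same route as the paper's own proof: Lemma~\ref{lem:parallel-I} with hypothesis (i) gives the local peak condition, Lemma~\ref{lem:parallel-III} with (ii) and (iii) gives the cliff condition, and Theorem~\ref{thm:ARS} is instantiated with ${\vdashv} := {\pargets}_{\mathcal{P}\cup\mathcal{P}^{-1}}$, ${\leadsto} := {\to}_{\mathcal{P}'}$, ${\to} := {\to}_{\mathcal{S}}$, followed by Lemma~\ref{lem:reversible + CR modulo}. Your extra remarks on symmetry and reflexivity of the parallel step relation and on the collapse of the reflexive closure ${\eqvdashv}$ are accurate and merely make explicit what the paper leaves implicit.
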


\begin{proof}
By our assumption (i) and Lemma~\ref{lem:parallel-I},
we have (a) ${\gets}_\mathcal{S} \circ {\to}_\mathcal{S}
\subseteq 
{\stackrel{*}{\to}}_{\mathcal{S}\cup \mathcal{P}'} 
\circ 
{\pargets}_{\mathcal{P} \cup \mathcal{P}^{-1}}
\circ
{\stackrel{*}{\gets}}_{\mathcal{S}\cup\mathcal{P'}}$.
From our assumptions (ii) and (iii), 
it follows that
(b) 
${\pargets}_{\mathcal{P}\cup \mathcal{P}^{-1}} \circ {\to}_\mathcal{S}
\subseteq 
(
{\pargets}_{\mathcal{P} \cup \mathcal{P}^{-1}}
\circ 
{\stackrel{*}{\gets}}_{\mathcal{S} \cup \mathcal{P}'}
)
\cup
({\to}_{\mathcal{S}}
\circ 
{\stackrel{*}{\to}}_{\mathcal{S} \cup \mathcal{P}'}
\circ 
{\pargets}_{\mathcal{P} \cup \mathcal{P}^{-1}}
\circ 
{\stackrel{*}{\gets}}_{\mathcal{S} \cup \mathcal{P}'}
)
$ by Lemma~\ref{lem:parallel-III}.
Take ${\vdashv} := {\pargets}_{\mathcal{P}\cup\mathcal{P}^{-1}}$,
${\to} := {\to}_\mathcal{S}$
and 
${\leadsto} := {\to}_{\mathcal{P}'}$.
Then, by the termination of $\mathcal{S}$ relative to 
$\mathcal{P}'$, 
${\to} \circ {\stackrel{*}{\leadsto}}$ is well-founded.
Thus one can apply Theorem \ref{thm:ARS}
so as to prove ${\to}_\mathcal{S}$ is Church-Rosser modulo
$\stackrel{*}{\pargets}_{\mathcal{P}\cup\mathcal{P}^{-1}}$.
Since 
${\stackrel{*}{\pargets}}_{\mathcal{P}\cup\mathcal{P}^{-1}}
= {\stackrel{*}{\leftrightarrow}}_\mathcal{P}$,
it follows that ${\to}_\mathcal{S}$ is Church-Rosser modulo
$\stackrel{*}{\leftrightarrow}_\mathcal{P}$.
Hence, since $\to_\mathcal{P}$ is reversible,
${\to}_{\mathcal{S} \cup \mathcal{P}}$ is confluent
by Lemma \ref{lem:reversible + CR modulo}.
\end{proof}

Comparing to our first criterion for confluence (Theorem \ref{thm:linear}),
we impose the condition
$\mathrm{CP}_\mathit{in}(\mathcal{P} \cup \mathcal{P}^{-1},\mathcal{S})
= \emptyset$
while relaxing the linearity condition of $\mathcal{S}$ to left-linearity.
Hence Theorems \ref{thm:linear} and \ref{thm:parallel}
are incomparable.

By taking $\mathcal{P}' = \emptyset$ in 
Theorem~\ref{thm:parallel},
we obtain the next corollary.

\begin{cor}
\label{cor:parallel}
Let $\mathcal{P},\mathcal{S}$ be TRSs
such that $\mathcal{S}$ is left-linear,
$\mathcal{P}$ is reversible,
and 
$\mathcal{S}$ is terminating.
Suppose
(i) $\mathrm{CP}(\mathcal{S},\mathcal{S})
\subseteq 
{\stackrel{*}{\to}}_{\mathcal{S}} 
\circ 
{\pargets}_{\mathcal{P} \cup \mathcal{P}^{-1}}
\circ
{\stackrel{*}{\gets}}_{\mathcal{S}}$,
(ii) 
$\mathrm{CP}_\mathit{in}(\mathcal{P} \cup \mathcal{P}^{-1},\mathcal{S})
= \emptyset$
and
(iii) $\mathrm{CP}(\mathcal{S},\mathcal{P} \cup \mathcal{P}^{-1})
\subseteq 
{\stackrel{*}{\to}}_{\mathcal{S}}
\circ 
{\parto}_{\mathcal{P} \cup \mathcal{P}^{-1}}
\circ 
{\stackrel{*}{\gets}}_{\mathcal{S}}
$.
Then 
$\mathcal{S}\cup \mathcal{P}$ is confluent.
\end{cor}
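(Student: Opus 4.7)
The plan is to derive this corollary as a direct instantiation of Theorem~\ref{thm:parallel}, taking $\mathcal{P}' := \emptyset$. So the real work is just to check that each hypothesis of the theorem is implied by the (stronger) hypotheses of the corollary after this substitution.

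First I would verify the side-conditions on $\mathcal{P}'$. The inclusion $\mathcal{P}' \subseteq \mathcal{P} \cup \mathcal{P}^{-1}$ holds trivially since $\mathcal{P}' = \emptyset$. The relative termination of $\mathcal{S}$ with respect to $\mathcal{P}'$ collapses to the well-foundedness of $\to_{\mathcal{S}}$ alone (because $\stackrel{*}{\to}_{\emptyset}$ is the identity), which is exactly the assumption that $\mathcal{S}$ is terminating. The reversibility and left-linearity assumptions carry over verbatim.

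Next I would match the three critical pair conditions. Since $\mathcal{S} \cup \mathcal{P}' = \mathcal{S}$ when $\mathcal{P}' = \emptyset$, condition (i) of the corollary is exactly condition (i) of the theorem, and condition (ii) is identical in both. For condition (iii), the corollary requires
\[
\mathrm{CP}(\mathcal{S},\mathcal{P} \cup \mathcal{P}^{-1})
\subseteq
{\stackrel{*}{\to}}_{\mathcal{S}}
\circ {\parto}_{\mathcal{P}\cup\mathcal{P}^{-1}}
\circ {\stackrel{*}{\gets}}_{\mathcal{S}}.
\]
I would observe that any pair $\langle u,v\rangle$ lying in this composition either ends with zero $\mathcal{S}$-backward steps---placing it in the first component ${\stackrel{*}{\to}}_{\mathcal{S}} \circ {\parto}_{\mathcal{P}\cup\mathcal{P}^{-1}}$ of the theorem's disjunction---or ends with at least one such step, placing it in the second component ${\stackrel{*}{\to}}_{\mathcal{S}} \circ {\parto}_{\mathcal{P}\cup\mathcal{P}^{-1}} \circ {\stackrel{*}{\gets}}_{\mathcal{S}} \circ {\gets}_{\mathcal{S}}$. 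Hence the corollary's (iii) implies the theorem's (iii).

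There is no real obstacle: the proof is a pure specialization, and the only point worth stating explicitly is the case split on whether the trailing $\stackrel{*}{\gets}_{\mathcal{S}}$ sequence is empty, needed to match the theorem's disjunctive form. After these observations, Theorem~\ref{thm:parallel} immediately yields the confluence of $\mathcal{S} \cup \mathcal{P}$.
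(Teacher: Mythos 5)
Your proof is correct and is exactly the paper's approach: the paper obtains Corollary~\ref{cor:parallel} by setting $\mathcal{P}' = \emptyset$ in Theorem~\ref{thm:parallel}, which is precisely your specialization. Your explicit case split on whether the trailing $\stackrel{*}{\gets}_{\mathcal{S}}$ sequence is empty correctly reconciles the corollary's single composition with the theorem's disjunctive condition (iii), a detail the paper leaves implicit.
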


\subsection{Confluence criterion based on parallel critical pairs}

In this subsection, we relax the 
condition (ii) 
$\mathrm{CP}_\mathit{in}(\mathcal{P}\cup \mathcal{P}^{-1},\mathcal{S})
= \emptyset$
of the previous theorem using the
notion of parallel critical pairs \cite{Gramlich:PCP,Toy81}.

\begin{defi}[parallel critical pairs \cite{Gramlich:PCP,Toy81}]
Let $s_1,\ldots,s_n,t$ ($n \ge 1$) be terms whose variables are disjoint.
The terms $s_1,\ldots,s_n$ \textit{parallel-overlap} on $t$ 
(at parallel positions $p_1,\ldots,p_n$)
if $t/p_i \notin \mathcal{V}$ for any $1 \le i \le n$
and 
$\{ s_1 \approx t/p_1,\ldots,
s_n \approx t/p_n \}$ is unifiable.
Let $l_1 \to r_1,\ldots,l_n \to r_n$ and $l' \to r'$ be rewrite rules.
W.l.o.g.\ let their variables be mutually disjoint.
Suppose that $l_1,\ldots,l_n$ parallel-overlap on $l'$ at parallel positions
$p_1,\ldots,p_n$
and $\sigma$ is the mgu of 
$\{ l_1 \approx l'/p_1,\ldots, l_n \approx l'/p_n \}$.
Then the term $l'[l_1,\ldots,l_n]_{p_1,\ldots,p_n}\sigma$ yields a 
\textit{parallel critical pair}
$\langle l'[r_1,\ldots,r_n]_{p_1,\ldots,p_n}\sigma, r'\sigma \rangle$
obtained by the parallel-overlap of $l_1 \to r_1,\ldots,l_n \to r_n$ 
on $l' \to r'$ at positions $p_1,\ldots,p_n$.
In the case of self-overlap 
(i.e.\ when $n = 1$ and $l_1 \to r_1$ and $l' \to r'$ 
are identical modulo renaming),
we do not consider the case $p_1 = \epsilon$.
We write $\langle l'[r_1,\ldots,r_n]_{p_1,\ldots,p_n}\sigma, r'\sigma \rangle_X$
if $X = \mathcal{V}_{\{ p_1,\ldots, p_n \}}(l'\sigma)$.
We call the parallel critical pair \textit{outer} if $n = 1$ and $p_1 = \epsilon$,
and \textit{inner} if $p_i > \epsilon$ for all $i$.
The set of outer (inner) parallel critical pairs obtained by the parallel-overlaps of 
rewrite rules from $\mathcal{R}$ on a rewrite rule from $\mathcal{Q}$
is denoted by 
$\mathrm{PCP}_\textit{out}(\mathcal{R}, \mathcal{Q})$
($\mathrm{PCP}_\textit{in}(\mathcal{R}, \mathcal{Q})$, respectively).
(Note, however, 
that $\mathrm{PCP}_\textit{out}(\mathcal{R}, \mathcal{Q})
= \mathrm{CP}_\textit{out}(\mathcal{R}, \mathcal{Q})$.)
We put
$\mathrm{PCP}(\mathcal{R},\mathcal{Q}) = 
\mathrm{PCP}_\textit{out}(\mathcal{R}, \mathcal{Q})
\cup
\mathrm{PCP}_\textit{in}(\mathcal{R}, \mathcal{Q})$.
\end{defi}

\begin{exa}[parallel critical pairs]
Let $\mathcal{R} = \{ 
\mathsf{f}(\mathsf{g}(x), \mathsf{g}(y)) \to 
\mathsf{h}(\mathsf{g}(x)) \}$
and $\mathcal{Q} = \{ 
\mathsf{g}(x) \to \mathsf{h}(x) \}$.
Then we have
$\mathrm{PCP}(\mathcal{R}, \mathcal{Q}) = 
\mathrm{PCP}_\mathit{out}(\mathcal{Q}, \mathcal{R}) = 
\emptyset$
and 
$\mathrm{PCP}_\mathit{in}(\mathcal{Q}, \mathcal{R}) = 
\{ 
\langle \mathsf{f}(\mathsf{h}(x), \mathsf{h}(y)),\;  
       \mathsf{h}(\mathsf{g}(x)) \rangle_{\{ x,y \}},
\langle \mathsf{f}(\mathsf{g}(x), \mathsf{h}(y)),\;  
       \mathsf{h}(\mathsf{g}(x)) \rangle_{\{ y \}},
\langle \mathsf{f}(\mathsf{h}(x), \mathsf{g}(y)),\;  
       \mathsf{h}(\mathsf{g}(x)) \rangle_{\{ x \}}
\}$.
\end{exa}

We first present a property of parallel rewrite steps
that will be used to prove a key lemma below.

\begin{lem}
\label{lem:paralle step}
Let $s,t$ terms such that $s \parto_{V} t$
and $\rho,\rho'$ substitutions such that $\rho \parto \rho'$.
If $\mathrm{dom}(\rho) \cap  \mathcal{V}_V(t) = \emptyset$
then $s\rho \parto t\rho'$.
\end{lem}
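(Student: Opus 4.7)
The plan is to proceed by induction on the structure of the term $s$, tracking how the parallel positions in $V$ interact with the variable positions at which $\rho$ acts.

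For the base case, when $s$ is a variable, no position of $s$ can be a redex (since no rewrite rule has a variable left-hand side), so $V = \emptyset$ and $t = s$. The goal $s\rho \parto t\rho'$ then reduces to $\rho(s) \parto \rho'(s)$, which is immediate from the assumption $\rho \parto \rho'$.

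For the inductive step with $s = f(s_1,\ldots,s_k)$ I split on whether $\epsilon \in V$. If $\epsilon \in V$, then necessarily $V = \{\epsilon\}$; writing $s = l\sigma$ and $t = r\sigma$ for the appropriate rule $l \to r$ and substitution $\sigma$, a small preliminary observation is needed: since $\rho \parto \rho'$ holds pointwise and a variable admits no rewrite at its root, $\rho(x) = x$ forces $\rho'(x) = x$, so $\mathrm{dom}(\rho') \subseteq \mathrm{dom}(\rho)$. Together with the hypothesis (which here reads $\mathrm{dom}(\rho) \cap \mathcal{V}(r\sigma) = \emptyset$), this gives $r\sigma\rho = r\sigma = r\sigma\rho'$. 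Hence the single rewrite step $l\sigma\rho = l(\sigma\rho) \to r(\sigma\rho) = r\sigma = t\rho'$ is already a parallel step of the desired form. If $\epsilon \notin V$, I partition $V$ by first index, letting $V_j = \{p' : j.p' \in V\}$, so that $s_j \parto_{V_j} t_j$ for each $j$ and $t = f(t_1,\ldots,t_k)$. The domain condition transfers to each coordinate because $\mathcal{V}_{V_j}(t_j) \subseteq \mathcal{V}_V(t)$, so the inductive hypothesis applies and yields $s_j\rho \parto t_j\rho'$ for every $j$. Stacking these parallel steps beneath the head symbol $f$ then gives $s\rho = f(s_1\rho,\ldots,s_k\rho) \parto f(t_1\rho',\ldots,t_k\rho') = t\rho'$.

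The main obstacle is the bookkeeping in the subcase $\epsilon \in V$: one must carry the disjointness condition from $\rho$ over to $\rho'$, and the inclusion $\mathrm{dom}(\rho') \subseteq \mathrm{dom}(\rho)$ is not stated up front but is precisely what makes $r\sigma\rho' = r\sigma$, and hence $t\rho' = r\sigma$, drop out. Everything else is a routine combination of positional and substitution-level parallel rewrites.
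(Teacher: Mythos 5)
Your proof is correct, but it takes a genuinely different route from the paper's. The paper argues globally rather than by induction: it factors the desired step as $s\rho \parto_{V} t\rho \parto_{W} t\rho'$, where $W$ lies below the occurrences in $t$ of variables in $\mathrm{dom}(\rho)$, observes that the hypothesis $\mathrm{dom}(\rho) \cap \mathcal{V}_V(t) = \emptyset$ forces every such occurrence --- and hence all of $W$ --- to be parallel to $V$, and then merges two consecutive parallel steps at mutually parallel position sets into a single parallel step. You instead induct on the structure of $s$, splitting on whether the root is a contracted position, and push the substitutions through the rule instance in the root case. The two proofs isolate the same essential content in different places: in the paper it appears as $\mathrm{Pos}_{\mathrm{dom}(\rho)}(t) \parallel V$, in yours as $r\sigma\rho = r\sigma = r\sigma\rho'$; and your explicit observation that $\mathrm{dom}(\rho') \subseteq \mathrm{dom}(\rho)$ (because a variable admits only the empty parallel step) is also used, tacitly, when the paper asserts that $W$ lies only below $\mathrm{dom}(\rho)$-positions of $t$. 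Your induction is more elementary in that it never needs the merging of two parallel steps into one, at the cost of more case analysis; the paper's argument is shorter and makes the role of the variable condition more immediately visible.
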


\begin{proof}
We have $s\rho \parto_{V} t\rho
\parto_{W} t\rho'$ for some $W$
such that for any $q \in W$ there exists 
$q' \in \mathrm{Pos}_{\mathrm{dom}(\rho)}(t)$
such that $q' \le q$.
Since $\mathrm{dom}(\rho) \cap  \mathcal{V}_V(t) = \emptyset$,
$p \parallel q'$ holds for any $p \in V$
and $q' \in \mathrm{Pos}_{\mathrm{dom}(\rho)}(t)$.
Thus $p \parallel q$ for any $p \in V$ and $q \in W$.
Hence $s\rho \parto t\rho'$.
\end{proof}

The following lemma is a key lemma which shows
the preservation of parallel rewrite steps
via substitutions.
For this lemma, a variable condition
on parallel critical pairs 
and parallel rewrite steps is essential.

\begin{lem}
\label{lem:PCP-I}
Let $\mathcal{S},\mathcal{P}$ be TRSs,
$\langle u,v \rangle_X 
\in \mathrm{PCP}_\mathit{in}(\mathcal{P},\mathcal{S})$ and
$\rho,\rho'$ substitutions such that $\rho \parto_\mathcal{P} \rho'$
and $\mathrm{dom}(\rho) \cap X = \emptyset$.
If $u \stackrel{*}{\to} u' \pargets_{V,\mathcal{P}} v' 
\stackrel{*}{\gets} v$ and
$\mathcal{V}_V(u') \subseteq X$
then
$u\rho' \stackrel{*}{\to} u' \rho' \pargets_{\mathcal{P}}
v'\rho \stackrel{*}{\gets} v\rho$,
where 
$u\rho' \stackrel{*}{\to} u' \rho'$
and $v'\rho \stackrel{*}{\gets} v\rho$
are the obvious instances of $u \stackrel{*}{\to} u'$
and $v' \stackrel{*}{\gets} v$, respectively.
\end{lem}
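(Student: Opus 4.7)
The plan is to split the sought sequence $u\rho' \stackrel{*}{\to} u'\rho' \pargets_\mathcal{P} v'\rho \stackrel{*}{\gets} v\rho$ into its three natural pieces and treat each separately. The two outer $\stackrel{*}{\to}$-pieces are entirely routine and rely only on stability of rewrite relations under substitution. The real content lies in producing the central parallel $\mathcal{P}$-step, which I would obtain as a direct application of Lemma~\ref{lem:paralle step}.

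For the outer pieces, because $\to$ is a rewrite relation and hence closed under substitutions, applying $\rho'$ pointwise to each step of the given sequence $u \stackrel{*}{\to} u'$ yields $u\rho' \stackrel{*}{\to} u'\rho'$, which is exactly the ``obvious instance'' named in the statement. Symmetrically, applying $\rho$ to the sequence $v \stackrel{*}{\to} v'$ gives $v\rho \stackrel{*}{\to} v'\rho$, that is, $v'\rho \stackrel{*}{\gets} v\rho$. Notice that the choice of $\rho'$ on the left and $\rho$ on the right is forced by the shape of the desired conclusion.

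For the central step I would reverse orientation and read $u' \pargets_{V,\mathcal{P}} v'$ as $v' \parto_{V,\mathcal{P}} u'$, then instantiate Lemma~\ref{lem:paralle step} with $s := v'$, $t := u'$, the set $V$ of parallel positions, and the substitution reduction $\rho \parto_\mathcal{P} \rho'$. The hypothesis of that lemma demands $\mathrm{dom}(\rho) \cap \mathcal{V}_V(u') = \emptyset$, and this is immediate from the two given assumptions $\mathcal{V}_V(u') \subseteq X$ and $\mathrm{dom}(\rho) \cap X = \emptyset$. The conclusion yields $v'\rho \parto_\mathcal{P} u'\rho'$, i.e., $u'\rho' \pargets_\mathcal{P} v'\rho$, which is the missing middle link. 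The main delicate point is purely bookkeeping: the conclusion is asymmetric in $\rho$ and $\rho'$, so the $\pargets_\mathcal{P}$-step must absorb exactly the difference $\rho \parto \rho'$ \emph{outside} the subterms at positions $V$. The variable condition $\mathcal{V}_V(u') \subseteq X$ together with $\mathrm{dom}(\rho) \cap X = \emptyset$ is precisely what rules out $\rho$ and $\rho'$ disagreeing inside those contracted subterms, so that all the auxiliary $\mathcal{P}$-contractions introduced by $\rho \parto \rho'$ remain parallel to $V$ and fold cleanly into a single parallel step rather than corrupting the redex pattern of the inner parallel critical pair.
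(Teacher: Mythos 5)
Your proof is correct and follows essentially the same route as the paper: the outer $\stackrel{*}{\to}$-pieces are handled by stability of rewriting under substitution, and the central step $u'\rho' \pargets_{\mathcal{P}} v'\rho$ is obtained by instantiating Lemma~\ref{lem:paralle step} with $s := v'$, $t := u'$ after deducing $\mathrm{dom}(\rho) \cap \mathcal{V}_V(u') = \emptyset$ from $\mathcal{V}_V(u') \subseteq X$ and $\mathrm{dom}(\rho) \cap X = \emptyset$. Your closing remarks on why the variable condition keeps the $\rho \parto \rho'$ contractions parallel to $V$ accurately describe the content of that auxiliary lemma.
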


\begin{proof}
It is clear from our assumption that 
$u\rho' \stackrel{*}{\to} u' \rho' 
\pargets_{\mathcal{P}}
u'\rho
\pargets_{V,\mathcal{P}}
v'\rho 
\stackrel{*}{\gets} v\rho$.
Thus it remains to show 
$u'\rho' \pargets_{\mathcal{P}} v'\rho$.
Since $\mathcal{V}_V(u') \subseteq X$
and $\mathrm{dom}(\rho) \cap X = \emptyset$,
$\mathrm{dom}(\rho) \cap \mathcal{V}_V(u') = \emptyset$.
Thus, since $u' \pargets_{V,\mathcal{P}} v'$, 
it follows that $u'\rho' \pargets_{\mathcal{P}} v'\rho$
by Lemma \ref{lem:paralle step}.
\end{proof}

\begin{exa}
Suppose we have TRSs
$\mathcal{S} = 
\{ +(\mathsf{0},y) \to y,
+(\mathsf{s}(x),y)\to \mathsf{s}(+(x,y))
\}$
and 
$\mathcal{P} = 
\{ \mathsf{s}(\mathsf{s}(x)) \to \mathsf{s}(x) \}$.
Then we have 
$\langle +(\mathsf{s}(x),y),\;  \mathsf{s}(+(\mathsf{s}(x),y) \rangle_{\{ x \}}
\in \mathrm{PCP}_{\mathit{in}}(\mathcal{P},\mathcal{S})$.
This critical pair is joinable as
$+(\mathsf{s}(x),y)
\to_{\mathcal{S}}
\mathsf{s}(+(x,y)) = u'
\pargets_{\{ \epsilon \},\mathcal{P}}
v' = \mathsf{s}(\mathsf{s}(+(x,y)))
\gets_{\mathcal{S}}
\mathsf{s}(+(\mathsf{s}(x),y))$.
However, since 
$\mathcal{V}_V(u') = \mathcal{V}_{\{ \epsilon \}} (\mathsf{s}(+(x,y)))
= \{ x,y \} \nsubseteq X = \{ x  \}$,
the variable condition 
of Lemma \ref{lem:PCP-I} is not satisfied.
Take $\rho,\rho'$ such that 
$\mathrm{dom}(\rho) = \mathrm{dom}(\rho') =  \{ y \}$,
$\rho(y) = \mathsf{s}(\mathsf{s}(z))$ and
$\rho'(y) = \mathsf{s}(z)$.
Then we have 
$\rho \parto_{\mathcal{P}} \rho'$.
Now, we can construct a rewrite sequence
$+(\mathsf{s}(x),y)\rho'
= +(\mathsf{s}(x),\mathsf{s}(z))
\to_{\mathcal{S}}
\mathsf{s}(+(x,\mathsf{s}(z)))
\stackrel{*}{\gets}_{\mathcal{P}}
\mathsf{s}(\mathsf{s}(+(x,\mathsf{s}(\mathsf{s}(z)))))
\gets_{\mathcal{S}}
\mathsf{s}(+(\mathsf{s}(x),\mathsf{s}(\mathsf{s}(z))))
=
\mathsf{s}(+(\mathsf{s}(x),y))\rho$.
However, $\mathsf{s}(+(x,\mathsf{s}(z))
\pargets_{\mathcal{P}}
\mathsf{s}(\mathsf{s}(+(x,\mathsf{s}(\mathsf{s}(z)))))$
does not hold, as the two redex occurrences
are not parallel.
\end{exa}

\begin{rem}
The variable condition 
$\mathcal{V}_V(u') \subseteq X$
in Lemma \ref{lem:PCP-I}
is not equivalent to $\mathcal{V}_V(v') \subseteq X$
if there exists $l \to r \in \mathcal{P}$
such that $\mathcal{V}(r) \subsetneq \mathcal{V}(l)$.
In the sequel, however, 
$\mathcal{V}(r) = \mathcal{V}(l)$ for all 
$l \to r \in \mathcal{P}$ holds
whenever we apply the lemma.
\end{rem}

We extend Lemma~\ref{lem:parallel-II} as follows.

\begin{lem}
\label{lem:PCP-II}
Let $\mathcal{Q},\mathcal{R}$ be TRSs
such that $\mathcal{Q}$ is bidirectional
and $\mathcal{R}$ is left-linear.
Let $\blacktriangleright$ be rewrite relations
such that 
$s_i \blacktriangleright t_i$ and
$s_j \gets_\mathcal{Q} t_j$ for any $j \in \{ 1,\ldots,n \} \setminus \{ i \}$
implies
$C[s_1,\ldots,s_n] \blacktriangleright C[t_1,\ldots,t_n]$.
Suppose 
(i) for any $\langle u,v \rangle_X 
\in \mathrm{PCP}_\mathit{in}(\mathcal{Q},\mathcal{R})$
and substitutions $\rho,\rho'$ 
such that $\rho \parto_\mathcal{Q} \rho'$
and $\mathrm{dom}(\rho) \cap X = \emptyset$,
we have 
$u\rho' \blacktriangleright v\rho$ and 
(ii) 
$\mathrm{CP}(\mathcal{R},\mathcal{Q})
\subseteq {\blacktriangleleft}$.
If 
${\stackrel{+}{\to}}_\mathcal{R}
\circ {\pargets}_\mathcal{Q} \circ
{\stackrel{*}{\gets}}_\mathcal{R}
\subseteq {\blacktriangleright}$
then 
${\pargets}_\mathcal{Q} \circ {\to}_\mathcal{R}
\subseteq 
{\blacktriangleright}$.
\end{lem}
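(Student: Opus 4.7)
The plan is to mirror the case analysis in the proof of Lemma \ref{lem:parallel-II}, modifying only the subcase in which every parallel $\mathcal{Q}$-contracted redex sits strictly below the $\mathcal{R}$-redex: the appeal to $\mathrm{PCP}_\mathit{in}(\mathcal{Q},\mathcal{R}) = \emptyset$ used there is now replaced by an application of the lifting hypothesis (i) to the parallel inner critical pair that can arise, in the style of Lemma~\ref{lem:PCP-I}.

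Setting up exactly as in Lemma~\ref{lem:parallel-II}, suppose $t_0 \pargets_{U,\mathcal{Q}} s \to_{q,\mathcal{R}} t_1$ with $U = \{p_1,\ldots,p_n\}$, $s/p_i = l_i\sigma_i$ for $l_i \to r_i \in \mathcal{Q}$, and $s/q = l'\rho$ for $l' \to r' \in \mathcal{R}$. The case in which some $p \in U$ satisfies $p \le q$ is handled verbatim as in Case~(1) of Lemma~\ref{lem:parallel-II}: the function-position sub-subcase goes through the assumed inclusion $\mathrm{CP}(\mathcal{R},\mathcal{Q}) \subseteq \blacktriangleleft$; the variable-position sub-subcase uses bidirectionality of $\mathcal{Q}$, left-linearity of $\mathcal{R}$, and the closure assumption ${\stackrel{+}{\to}}_\mathcal{R} \circ {\pargets}_\mathcal{Q} \circ {\stackrel{*}{\gets}}_\mathcal{R} \subseteq \blacktriangleright$.

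In the remaining case, where every $p \in U$ is parallel to $q$ or strictly below $q$, partition the positions $p_i \in U$ with $q < p_i$ into $U_1$ (those with $q_i := p_i/q$ at a function-symbol position of $l'$) and $U_2$ (those with $q_i$ lying inside a variable position of $l'$). If $U_1 = \emptyset$, Case~(2) of Lemma~\ref{lem:parallel-II} applies unchanged, as only left-linearity of $\mathcal{R}$ is needed to propagate the $U_2$-overlaps through $r'$. When $U_1 \neq \emptyset$, the rules $l_i \to r_i$ with $p_i \in U_1$ parallel-overlap $l' \to r'$ at the positions $\{q_i : p_i \in U_1\}$, producing a parallel inner critical pair $\langle u,v \rangle_X \in \mathrm{PCP}_\mathit{in}(\mathcal{Q},\mathcal{R})$, where $\tau$ is the mgu of $\{l_i \approx l'/q_i : p_i \in U_1\}$ and $X = \mathcal{V}_{\{q_i : p_i \in U_1\}}(l'\tau)$. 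Letting $\rho_0$ be the substitution obtained from $\rho$ by contracting all $U_2$-redexes coordinatewise, I build $\bar\rho$ (respectively $\bar\rho'$) so that it agrees with $\rho$ (respectively $\rho_0$) on the variables of $l'$ and with $\sigma_i$ on the variables of $l_i$ for each $p_i \in U_1$, after the usual renaming-apart between $l' \to r'$ and the $l_i \to r_i$. Then $\bar\rho \parto_\mathcal{Q} \bar\rho'$ (the $V(l')$-coordinates contract the $U_2$-redexes, the remaining coordinates are identities), and by renaming-apart one arranges $\mathrm{dom}(\bar\rho) \cap X = \emptyset$. With this setup, $u\bar\rho'$ equals $t_0/q$ and $v\bar\rho$ equals $r'\rho = t_1/q$, so condition~(i) yields $t_0/q \blacktriangleright t_1/q$. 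The contextual closure of $\blacktriangleright$ assumed in the hypothesis (on configurations $s_i \blacktriangleright t_i$ with $s_j \gets_\mathcal{Q} t_j$) then lifts this to $t_0 \blacktriangleright t_1$ by reassembling the context above $q$ and handling the positions in $U$ parallel to $q$.

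The principal obstacle is the construction of $\bar\rho$ and $\bar\rho'$ satisfying simultaneously (a) $\bar\rho \parto_\mathcal{Q} \bar\rho'$ with each non-identity coordinate an honest $\mathcal{Q}$-step, (b) $\mathrm{dom}(\bar\rho) \cap X = \emptyset$ so that hypothesis~(i) is applicable, and (c) $u\bar\rho' = t_0/q$ and $v\bar\rho = t_1/q$. Each of these follows from the standard renaming-apart and mgu conventions used in critical pair arguments, but the delicate point is keeping the interaction between $U_1$ and $U_2$ overlaps correctly recorded in the single pair $(\bar\rho,\bar\rho')$.
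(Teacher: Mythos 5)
Your proposal is correct and follows essentially the same route as the paper's proof: reuse the case analysis of Lemma~\ref{lem:parallel-II} verbatim except when no $\mathcal{Q}$-redex position lies above $q$, and there split the positions strictly below $q$ into those at function positions of $l'$ (which assemble into a parallel inner critical pair handled via hypothesis~(i) and a substitution pair $\bar\rho \parto_{\mathcal{Q}} \bar\rho'$ with $\mathrm{dom}(\bar\rho)\cap X=\emptyset$, exactly as the paper does using the linearity of $l'$ and variable-disjointness) and those below variable positions of $l'$ (absorbed into the substitutions). Your explicit separation of the subcase with no function-position overlaps, and your flagging of the substitution construction as the delicate point, only make explicit what the paper leaves implicit.
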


\begin{proof}
Suppose
$t_0 \pargets_{U,\mathcal{Q}} s \to_{q,\mathcal{R}} t_1$.
Let $U = \{ p_1,\ldots, p_n \}$ where $p_1,\ldots, p_n$ are positions
from left to right,
$s/p_i = l_i\sigma_i$ for $l_i \to r_i \in \mathcal{Q}$ 
and substitutions $\sigma_i$
($1 \le i \le n$)
and $s/q = l'\rho$ for $l' \to r' \in \mathcal{R}$ and substitution $\rho$.
The same proof as in Lemma~\ref{lem:parallel-II} applies 
other than the case of
$\forall p \in U.~p \not\le q$.
Let $\{ p_k,\ldots,p_m \} = \{ p_i \in U \mid q \le p_i \} $.
For each $p_i$ ($k \le i \le m$)
either $p_i/q \in \mathrm{Pos}_\mathcal{F}(l')$
or 
there exists $q_x \in \mathrm{Pos}_\mathcal{V}(l')$
such that $q.q_x \le p_i$.
W.l.o.g.\ let
$\{ p_k,\ldots,p_l \} = \{ p_i \mid 
p_i/q \in \mathrm{Pos}_\mathcal{F}(l') \}$
and 
$\{ p_{l+1},\ldots,p_m \} = \{ p_i \mid 
\exists q_x \in \mathrm{Pos}_\mathcal{V}(l').~q.q_x \le p_i
\}$.
Then there exists 
a parallel critical pair $\langle u,v \rangle_X$
obtained from overlaps of 
$l_{k} \to r_{k},\ldots,l_{l} \to r_{l}$
on $l' \to r'$ at 
$p_{k}/q,\ldots,p_{l}/q$.
Since $l'$ is linear
(and $\mathcal{V}(l'),\mathcal{V}(l_1),\ldots,\mathcal{V}(l_m)$
are mutually disjoint),
$t_0/q = u\rho'$
and 
$t_1/q = v\rho$
for some substitutions $\rho,\rho'$
such that $\rho \parto_\mathcal{Q} \rho'$
and $\mathrm{dom}(\rho) \cap X = \emptyset$.
Hence by our assumption
$u\rho' \blacktriangleright v\rho$.
Thus by our assumption on $\blacktriangleright$,
it follows that
$t_0 = s[r_1\sigma_1,\ldots,r_{k-1}\sigma_{k-1},u\rho',
r_{m+1}\sigma_{m+1},\ldots,r_{n}\sigma_{n}]_{p_1,\ldots,p_{k-1},q,p_{l+1},\ldots,p_{n}}
\blacktriangleright
s[l_1\sigma_1,\ldots,l_{k-1}\sigma_{k-1},v\rho,
l_{m+1}\sigma_{m+1},\ldots,l_{n}\sigma_{n}]_{p_1,\ldots,p_{k-1},q,p_{l+1},\ldots,p_{n}}
= t_1 $.
\end{proof}

The following lemma
is analogous to 
Lemmas \ref{lem:linear-III}, \ref{lem:parallel-III}.

\begin{lem}
\label{lem:PCP-III}
Let $\mathcal{P},\mathcal{S},\mathcal{P}'$ be TRSs
such that $\mathcal{P}$ is bidirectional
and $\mathcal{S}$ is left-linear.
Let ${\blacktriangleright}
=
(
{\pargets}_{\mathcal{P} \cup \mathcal{P}^{-1}}
\circ 
{\stackrel{*}{\gets}}_{\mathcal{S} \cup \mathcal{P}'})
\cup
({\to}_{\mathcal{S}}
\circ 
{\stackrel{*}{\to}}_{\mathcal{S} \cup \mathcal{P}'}
\circ 
{\pargets}_{\mathcal{P} \cup \mathcal{P}^{-1}}
\circ 
{\stackrel{*}{\gets}}_{\mathcal{S} \cup \mathcal{P}'})
$.
Suppose 
(i) for all $\langle u,v \rangle_X
\in \mathrm{PCP}_\mathit{in}(\mathcal{P} \cup \mathcal{P}^{-1},\mathcal{S})$,
either
$u 
=
u'
\pargets_{V,\mathcal{P}\cup \mathcal{P}^{-1}}
\circ
\stackrel{*}{\gets}_{\mathcal{S} \cup \mathcal{P}'}
v$
or
$u 
\to_{\mathcal{S}}
\circ
\stackrel{*}{\to}_{\mathcal{S} \cup \mathcal{P}'}
u'
\pargets_{V,\mathcal{P}\cup \mathcal{P}^{-1}}
\circ
\stackrel{*}{\gets}_{\mathcal{S} \cup \mathcal{P}'}
v$
for some $u'$ and $V$ satisfying 
$\mathcal{V}_V(u') \subseteq X$
and
(ii)
$\mathrm{CP}(\mathcal{S},\mathcal{P} \cup \mathcal{P}^{-1})
\subseteq {\blacktriangleleft}$.
Then 
${\pargets}_{\mathcal{P}\cup \mathcal{P}^{-1}}
\circ {\to}_\mathcal{S}
\subseteq 
{\blacktriangleright}$.
\end{lem}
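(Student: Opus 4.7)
The plan is to apply Lemma~\ref{lem:PCP-II} with $\mathcal{Q} := \mathcal{P}\cup \mathcal{P}^{-1}$ (bidirectional since $\mathcal{P}$ is) and $\mathcal{R} := \mathcal{S}$ (left-linear by hypothesis), in direct analogy with the proofs of Lemmas~\ref{lem:linear-III} and \ref{lem:parallel-III}. This reduces the statement to verifying the four premises of Lemma~\ref{lem:PCP-II}: the context-closure property required of $\blacktriangleright$; the PCP-condition on $\mathrm{PCP}_\mathit{in}(\mathcal{Q},\mathcal{R})$; the CP-condition $\mathrm{CP}(\mathcal{R},\mathcal{Q}) \subseteq {\blacktriangleleft}$; and the joinability inclusion ${\stackrel{+}{\to}}_{\mathcal{S}} \circ {\pargets}_{\mathcal{P}\cup\mathcal{P}^{-1}} \circ {\stackrel{*}{\gets}}_{\mathcal{S}} \subseteq {\blacktriangleright}$.

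Two of these come for free: the joinability inclusion follows immediately from $\mathcal{S} \subseteq \mathcal{S}\cup\mathcal{P}'$ together with the second disjunct in the definition of $\blacktriangleright$, and the CP-condition is precisely hypothesis~(ii). The context-closure property calls for a case split on which disjunct of $\blacktriangleright$ witnesses $s_i \blacktriangleright t_i$; in either case, using the bidirectionality of $\mathcal{Q}$, each single step $s_j \gets_\mathcal{Q} t_j$ can also be read as $s_j \to_\mathcal{Q} t_j$ and merged in parallel with the $\pargets_\mathcal{Q}$ step at the core of $s_i \blacktriangleright t_i$ (the hole positions $p_j$ are parallel to $p_i$ and hence to the rewrite positions lying inside $s_i$). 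The surrounding $\stackrel{*}{\gets}_{\mathcal{S}\cup\mathcal{P}'}$ tail, and in the second disjunct the initial $\to_\mathcal{S}\circ\stackrel{*}{\to}_{\mathcal{S}\cup\mathcal{P}'}$ prefix, act only at position $p_i$ and lift through the context $C$ unchanged.

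The substantive step is the PCP-condition. Given $\langle u,v\rangle_X \in \mathrm{PCP}_\mathit{in}(\mathcal{Q},\mathcal{R})$ and substitutions $\rho \parto_\mathcal{Q} \rho'$ with $\mathrm{dom}(\rho)\cap X = \emptyset$, I have to show $u\rho' \blacktriangleright v\rho$. Hypothesis~(i) provides a joining sequence of exactly the shape required to invoke Lemma~\ref{lem:PCP-I} --- namely $u \stackrel{*}{\to} u' \pargets_{V,\mathcal{Q}} v' \stackrel{*}{\gets} v$, with the prefix either empty or headed by a genuine $\to_\mathcal{S}$ step --- and it also supplies the crucial variable side-condition $\mathcal{V}_V(u') \subseteq X$. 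Lemma~\ref{lem:PCP-I} then yields the instantiated diagram $u\rho' \stackrel{*}{\to} u'\rho' \pargets_\mathcal{Q} v'\rho \stackrel{*}{\gets} v\rho$; depending on whether the original prefix was empty or $\to_\mathcal{S}$-headed, this matches the first or the second disjunct of $\blacktriangleright$, as required.

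The main obstacle, I expect, is the bookkeeping inside the context-closure verification: one must check carefully that the parallel $\mathcal{Q}$-step assembled from the various $s_j \gets_\mathcal{Q} t_j$ contributions is genuinely parallel to the rewrite positions of the embedded $\pargets_\mathcal{Q}$ step, and that in the second disjunct the initial $\to_\mathcal{S}$ step at $p_i$ is threaded correctly with the simultaneous $\gets_\mathcal{Q}$ steps at the other $p_j$'s. Once this is settled, Lemma~\ref{lem:PCP-II} delivers the desired inclusion ${\pargets}_{\mathcal{P}\cup \mathcal{P}^{-1}} \circ {\to}_\mathcal{S} \subseteq {\blacktriangleright}$.
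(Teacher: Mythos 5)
Your proposal is correct and follows essentially the same route as the paper's proof: instantiate Lemma~\ref{lem:PCP-II} with $\mathcal{Q} := \mathcal{P}\cup\mathcal{P}^{-1}$ and $\mathcal{R} := \mathcal{S}$, discharge the CP-condition by hypothesis (ii) and the joinability inclusion via the second disjunct of $\blacktriangleright$, and obtain the PCP-condition by feeding hypothesis (i) into Lemma~\ref{lem:PCP-I}. The context-closure verification you flag as the main bookkeeping obstacle is exactly the step the paper asserts directly "by the definition of $\blacktriangleright$," and your account of merging the parallel $\gets_{\mathcal{P}\cup\mathcal{P}^{-1}}$ steps at the other hole positions into the central $\pargets_{\mathcal{P}\cup\mathcal{P}^{-1}}$ step is the intended justification.
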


\begin{proof}
Take $\mathcal{Q} := \mathcal{P} \cup  \mathcal{P}^{-1}$,
which is a bidirectional TRS by bidirectionality of $\mathcal{P}$,
and $\mathcal{R} := \mathcal{S}$ in Lemma~\ref{lem:PCP-II}.
Then by the condition (ii)
we have $\mathrm{CP}(\mathcal{R},\mathcal{Q})
\subseteq {\blacktriangleleft}$.
Furthermore,
by the condition (i) and Lemma~\ref{lem:PCP-I},
for any $\langle u,v \rangle_X 
\in \mathrm{PCP}_\mathit{in}(\mathcal{Q},\mathcal{R})$
and substitutions $\rho,\rho'$ 
such that  $\rho \parto_\mathcal{Q} \rho'$
and $\mathrm{dom}(\rho) \cap X = \emptyset$,
we have 
either
$u \rho'
= u' \rho'
\pargets_{\mathcal{P}\cup \mathcal{P}^{-1}}
\circ
\stackrel{*}{\gets}_{\mathcal{S} \cup \mathcal{P}'}
v\rho$
or
$u \rho'
\to_{\mathcal{S}}
\circ
\stackrel{*}{\to}_{\mathcal{S} \cup \mathcal{P}'}
u' \rho'
\pargets_{\mathcal{P}\cup \mathcal{P}^{-1}}
\circ
\stackrel{*}{\gets}_{\mathcal{S} \cup \mathcal{P}'}
v\rho$
and hence 
$u \rho' \blacktriangleright v\rho$.
We also have 
${\stackrel{+}{\to}}_\mathcal{R}
\circ {\pargets}_\mathcal{Q}
\circ {\stackrel{*}{\gets}}_\mathcal{R}
\subseteq 
{\to}_{\mathcal{S}}
\circ 
{\stackrel{*}{\to}}_{\mathcal{S} \cup \mathcal{P}'}
\circ 
{\pargets}_{\mathcal{P} \cup \mathcal{P}^{-1}}
\circ 
{\stackrel{*}{\gets}}_{\mathcal{S} \cup \mathcal{P}'}
\subseteq {\blacktriangleright}$.
By the definition of ${\blacktriangleright}$,
$s_i \blacktriangleright t_i$ and
$s_j \gets_{\mathcal{P}\cup\mathcal{P}^{-1}} t_j$ for any $j \in \{ 1,\ldots,n \} \setminus \{ i \}$
implies
$C[s_1,\ldots,s_n] \blacktriangleright C[t_1,\ldots,t_n]$.
Hence the claim follows from Lemma~\ref{lem:PCP-II}.
\end{proof}

The next theorem strengthens Theorem \ref{thm:parallel}.

\begin{thm}
\label{thm:PCP}
Let $\mathcal{P},\mathcal{S},\mathcal{P}'$ be TRSs
such that $\mathcal{S}$ is left-linear,
$\mathcal{P}$ is reversible,
$\mathcal{P}' \subseteq \mathcal{P} \cup \mathcal{P}^{-1}$
and 
$\mathcal{S}$ is terminating relative to $\mathcal{P}'$.
Suppose
(i) $\mathrm{CP}(\mathcal{S},\mathcal{S})
\subseteq 
{\stackrel{*}{\to}}_{\mathcal{S}\cup \mathcal{P}'} 
\circ 
{\pargets}_{\mathcal{P} \cup \mathcal{P}^{-1}}
\circ
{\stackrel{*}{\gets}}_{\mathcal{S}\cup\mathcal{P'}}$,
(ii) for all $\langle u,v \rangle_X
\in \mathrm{PCP}_\mathit{in}(\mathcal{P} \cup \mathcal{P}^{-1},\mathcal{S})$,
either
$u 
=
u'
\pargets_{V,\mathcal{P}\cup \mathcal{P}^{-1}}
\circ
\stackrel{*}{\gets}_{\mathcal{S} \cup \mathcal{P}'}
v$
or
$u 
\to_{\mathcal{S}}
\circ
\stackrel{*}{\to}_{\mathcal{S} \cup \mathcal{P}'}
u'
\pargets_{V,\mathcal{P}\cup \mathcal{P}^{-1}}
\circ
\stackrel{*}{\gets}_{\mathcal{S} \cup \mathcal{P}'}
v$
for some $u'$ and $V$ satisfying 
$\mathcal{V}_V(u') \subseteq X$
and
(iii) $\mathrm{CP}(\mathcal{S},\mathcal{P} \cup \mathcal{P}^{-1})
\subseteq 
({\stackrel{*}{\to}}_{\mathcal{S} \cup \mathcal{P}'}
\circ 
{\parto}_{\mathcal{P} \cup \mathcal{P}^{-1}}
)
\cup
({\stackrel{*}{\to}}_{\mathcal{S} \cup \mathcal{P}'}
\circ 
{\parto}_{\mathcal{P} \cup \mathcal{P}^{-1}}
\circ 
{\stackrel{*}{\gets}}_{\mathcal{S} \cup \mathcal{P}'}
\circ 
{\gets}_{\mathcal{S}})
$.
Then 
$\mathcal{S}\cup \mathcal{P}$ is confluent.
\end{thm}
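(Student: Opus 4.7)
The plan is to follow the same scaffold as the proof of Theorem~\ref{thm:parallel}, replacing the appeal to Lemma~\ref{lem:parallel-III} by the strengthened Lemma~\ref{lem:PCP-III}. Conditions (i) and (iii) here match exactly those of Theorem~\ref{thm:parallel}, so the corresponding diagrams for $\gets_\mathcal{S}\circ\to_\mathcal{S}$ and one direction of $\pargets_{\mathcal{P}\cup\mathcal{P}^{-1}}\circ\to_\mathcal{S}$ will come out unchanged. Only the treatment of inner overlaps of $\mathcal{P}\cup\mathcal{P}^{-1}$ on $\mathcal{S}$, which previously required the emptiness condition $\mathrm{CP}_\mathit{in}(\mathcal{P}\cup\mathcal{P}^{-1},\mathcal{S})=\emptyset$, is replaced by the parallel critical pair condition (ii).

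First I would apply Lemma~\ref{lem:parallel-I} to condition (i) to obtain
\[
{\gets}_\mathcal{S}\circ{\to}_\mathcal{S}
\subseteq
{\stackrel{*}{\to}}_{\mathcal{S}\cup \mathcal{P}'}
\circ {\pargets}_{\mathcal{P} \cup \mathcal{P}^{-1}}
\circ {\stackrel{*}{\gets}}_{\mathcal{S}\cup\mathcal{P}'}.
\]
Next, setting
\[
{\blacktriangleright}
=
(
{\pargets}_{\mathcal{P} \cup \mathcal{P}^{-1}}
\circ
{\stackrel{*}{\gets}}_{\mathcal{S} \cup \mathcal{P}'})
\cup
({\to}_{\mathcal{S}}
\circ
{\stackrel{*}{\to}}_{\mathcal{S} \cup \mathcal{P}'}
\circ
{\pargets}_{\mathcal{P} \cup \mathcal{P}^{-1}}
\circ
{\stackrel{*}{\gets}}_{\mathcal{S} \cup \mathcal{P}'}),
\]
I would invoke Lemma~\ref{lem:PCP-III} using (ii) and (iii) to conclude
${\pargets}_{\mathcal{P}\cup \mathcal{P}^{-1}}\circ {\to}_\mathcal{S}\subseteq {\blacktriangleright}$. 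This is the step which exploits the crucial substitution-preservation Lemma~\ref{lem:PCP-I}; the variable condition $\mathcal{V}_V(u')\subseteq X$ imposed in (ii) is exactly what is needed there.

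With these two inclusions in hand, I would instantiate Theorem~\ref{thm:ARS} by taking ${\vdashv}:={\pargets}_{\mathcal{P}\cup\mathcal{P}^{-1}}$, ${\leadsto}:={\to}_{\mathcal{P}'}$, and ${\to}:={\to}_\mathcal{S}$. Then ${\leadsto}\subseteq{\vdashv}$ since $\mathcal{P}'\subseteq\mathcal{P}\cup\mathcal{P}^{-1}$, and $\to\circ\stackrel{*}{\leadsto}$ is well-founded by the assumption that $\mathcal{S}$ terminates relative to $\mathcal{P}'$. The inclusions just obtained yield exactly the hypotheses (i) and (ii) of Theorem~\ref{thm:ARS}, so $\to_\mathcal{S}$ is Church-Rosser modulo $\stackrel{*}{\pargets}_{\mathcal{P}\cup\mathcal{P}^{-1}}={\stackrel{*}{\leftrightarrow}}_\mathcal{P}$. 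Finally, since $\mathcal{P}$ is reversible, Lemma~\ref{lem:reversible + CR modulo} delivers confluence of $\mathcal{S}\cup\mathcal{P}$.

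The main obstacle is verifying that condition (ii) of the theorem really supplies the hypothesis of Lemma~\ref{lem:PCP-III}; this is essentially bookkeeping, but one must be careful that the variable annotation $X$ on the parallel critical pair and the positions $V$ chosen in the joining sequence align so that Lemma~\ref{lem:PCP-I} applies uniformly to every instantiation $\rho\parto_{\mathcal{P}\cup\mathcal{P}^{-1}}\rho'$ arising in the proof of Lemma~\ref{lem:PCP-II}. Everything else reduces to unwinding the definition of $\blacktriangleright$ and routine closure under contexts, exactly as in the parallel-step case.
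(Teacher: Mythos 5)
Your proposal is correct and follows essentially the same route as the paper's own proof: Lemma~\ref{lem:parallel-I} for the $\mathcal{S}$-$\mathcal{S}$ peaks, Lemma~\ref{lem:PCP-III} (hence Lemmas~\ref{lem:PCP-I} and \ref{lem:PCP-II}) for the $\pargets_{\mathcal{P}\cup\mathcal{P}^{-1}}$-$\mathcal{S}$ peaks, then Theorem~\ref{thm:ARS} with ${\vdashv}:={\pargets}_{\mathcal{P}\cup\mathcal{P}^{-1}}$, ${\leadsto}:={\to}_{\mathcal{P}'}$, ${\to}:={\to}_{\mathcal{S}}$, and finally Lemma~\ref{lem:reversible + CR modulo}. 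You even cite Lemma~\ref{lem:parallel-I} where the paper's text (apparently by a slip) writes Lemma~\ref{lem:PCP-I}, so no correction is needed.
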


\begin{proof}
By our assumption (i) and Lemma~\ref{lem:PCP-I},
we have (a) ${\gets}_\mathcal{S} \circ {\to}_\mathcal{S}
\subseteq 
{\stackrel{*}{\to}}_{\mathcal{S}\cup \mathcal{P}'} 
\circ 
{\pargets}_{\mathcal{P} \cup \mathcal{P}^{-1}}
\circ
{\stackrel{*}{\gets}}_{\mathcal{S}\cup\mathcal{P'}}$.
From our assumptions (ii) and (iii), 
it follows that
(b) 
${\pargets}_{\mathcal{P}\cup \mathcal{P}^{-1}} \circ {\to}_\mathcal{S}
\subseteq 
(
{\pargets}_{\mathcal{P} \cup \mathcal{P}^{-1}}
\circ 
{\stackrel{*}{\gets}}_{\mathcal{S} \cup \mathcal{P}'}
)
\cup
({\to}_{\mathcal{S}}
\circ 
{\stackrel{*}{\to}}_{\mathcal{S} \cup \mathcal{P}'}
\circ 
{\pargets}_{\mathcal{P} \cup \mathcal{P}^{-1}}
\circ 
{\stackrel{*}{\gets}}_{\mathcal{S} \cup \mathcal{P}'}
)
$ by Lemma~\ref{lem:PCP-III}.
Take ${\vdashv} := {\pargets}_{\mathcal{P}\cup\mathcal{P}^{-1}}$,
${\to} := {\to}_\mathcal{S}$
and 
${\leadsto} := {\to}_{\mathcal{P}'}$.
Then, by the termination of $\mathcal{S}$ relative to 
$\mathcal{P}'$, 
${\to}\circ \stackrel{*}{\leadsto}$ is well-founded.
Thus one can apply Theorem \ref{thm:ARS}
so as to prove ${\to}_\mathcal{S}$ is Church-Rosser modulo
$\stackrel{*}{\pargets}_{\mathcal{P}\cup\mathcal{P}^{-1}}$.
Since 
${\stackrel{*}{\pargets}}_{\mathcal{P}\cup\mathcal{P}^{-1}}
= {\stackrel{*}{\leftrightarrow}}_\mathcal{P}$,
it follows that ${\to}_\mathcal{S}$ is Church-Rosser modulo
$\stackrel{*}{\leftrightarrow}_\mathcal{P}$.
Hence, since $\to_\mathcal{P}$ is reversible,
${\to}_{\mathcal{S} \cup \mathcal{P}}$ is confluent
by Lemma \ref{lem:reversible + CR modulo}.
\end{proof}

By the definition of parallel critical pairs,
$\mathrm{CP}_\mathit{in}(\mathcal{P} \cup \mathcal{P}^{-1},\mathcal{S})
= \emptyset$
implies 
$\mathrm{PCP}_\mathit{in}(\mathcal{P} \cup \mathcal{P}^{-1},\mathcal{S})
= \emptyset$.
Thus the condition (ii) of Theorem~\ref{thm:parallel} is
a particular case of condition (ii) of Theorem~\ref{thm:PCP}.
Hence 
Theorem~\ref{thm:parallel} is subsumed by Theorem~\ref{thm:PCP}.

By taking $\mathcal{P}' = \emptyset$ in 
Theorem~\ref{thm:PCP},
we obtain the next corollary.

\begin{cor}
\label{cor:PCP}
Let $\mathcal{P},\mathcal{S}$ be TRSs
such that $\mathcal{S}$ is left-linear,
$\mathcal{P}$ is reversible
and 
$\mathcal{S}$ is terminating.
Suppose
(i) $\mathrm{CP}(\mathcal{S},\mathcal{S})
\subseteq 
{\stackrel{*}{\to}}_{\mathcal{S}}
\circ 
{\pargets}_{\mathcal{P} \cup \mathcal{P}^{-1}}
\circ
{\stackrel{*}{\gets}}_{\mathcal{S}}$,
(ii) for all $\langle u,v \rangle_X
\in \mathrm{PCP}_\mathit{in}(\mathcal{P} \cup \mathcal{P}^{-1},\mathcal{S})$,
$u 
\stackrel{*}{\to}_{\mathcal{S}}
u'
\pargets_{V,\mathcal{P}\cup \mathcal{P}^{-1}}
\circ 
\stackrel{*}{\gets}_{\mathcal{S}}
v$
for some $u'$ and $V$ satisfying 
$\mathcal{V}_V(u') \subseteq X$
and
(iii) $\mathrm{CP}(\mathcal{S},\mathcal{P} \cup \mathcal{P}^{-1})
\subseteq 
{\stackrel{*}{\to}}_{\mathcal{S}}
\circ 
{\parto}_{\mathcal{P} \cup \mathcal{P}^{-1}}
\circ 
{\stackrel{*}{\gets}}_{\mathcal{S}}
$.
Then 
$\mathcal{S}\cup \mathcal{P}$ is confluent.
\end{cor}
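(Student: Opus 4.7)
The plan is to derive this corollary as a direct instance of Theorem~\ref{thm:PCP} by taking $\mathcal{P}' := \emptyset$. First I would check that the side conditions of Theorem~\ref{thm:PCP} are met under this choice: the inclusion $\mathcal{P}' \subseteq \mathcal{P} \cup \mathcal{P}^{-1}$ is trivial, and the hypothesis that $\mathcal{S}$ is terminating relative to $\mathcal{P}'$ reduces to plain termination of $\mathcal{S}$, since $\stackrel{*}{\to}_{\emptyset}$ is the identity relation and hence $\to_{\mathcal{S}} \circ \stackrel{*}{\to}_{\emptyset} = \to_{\mathcal{S}}$.

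Next I would verify that each of the three critical pair conditions of the corollary matches the specialization of the corresponding condition in Theorem~\ref{thm:PCP}. Conditions (i) and (iii) are syntactic: with $\mathcal{P}' = \emptyset$ the set $\mathcal{S} \cup \mathcal{P}'$ collapses to $\mathcal{S}$, so the joinability patterns in Theorem~\ref{thm:PCP}(i) and Theorem~\ref{thm:PCP}(iii) become exactly those in (i) and (iii) of the corollary; in particular, both disjuncts of Theorem~\ref{thm:PCP}(iii) yield the single pattern ${\stackrel{*}{\to}}_{\mathcal{S}} \circ {\parto}_{\mathcal{P} \cup \mathcal{P}^{-1}} \circ {\stackrel{*}{\gets}}_{\mathcal{S}}$.

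For (ii), the corollary's condition requires $u \stackrel{*}{\to}_{\mathcal{S}} u' \pargets_{V,\mathcal{P} \cup \mathcal{P}^{-1}} \circ \stackrel{*}{\gets}_{\mathcal{S}} v$ with $\mathcal{V}_V(u') \subseteq X$. This falls into one of the two disjuncts of condition~(ii) of Theorem~\ref{thm:PCP} according to whether the rewrite sequence $u \stackrel{*}{\to}_{\mathcal{S}} u'$ has length zero (yielding the first disjunct $u = u' \pargets \circ \stackrel{*}{\gets}_{\mathcal{S}} v$) or length at least one (yielding the second disjunct $u \to_{\mathcal{S}} \circ \stackrel{*}{\to}_{\mathcal{S}} u' \pargets \circ \stackrel{*}{\gets}_{\mathcal{S}} v$). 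The variable side-condition $\mathcal{V}_V(u') \subseteq X$ carries over verbatim.

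Having matched all hypotheses, Theorem~\ref{thm:PCP} yields confluence of $\mathcal{S} \cup \mathcal{P}$, completing the proof. There is no real obstacle here, as the corollary is simply the $\mathcal{P}' = \emptyset$ specialization; the only thing to be mindful of is the collapse of $\stackrel{*}{\to}_{\emptyset}$ to the identity so that ``terminating relative to $\mathcal{P}'$'' degenerates to ``terminating''.
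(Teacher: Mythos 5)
Your proposal is correct and matches the paper's own proof, which obtains the corollary precisely by setting $\mathcal{P}' = \emptyset$ in Theorem~\ref{thm:PCP}. Your additional checks---that relative termination degenerates to termination, that the two disjuncts of conditions (ii) and (iii) collapse by splitting on zero versus positive step counts---are exactly the routine verifications the paper leaves implicit.
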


Considering a particular case of the condition (ii),
we obtain the following corollary.

\begin{cor}[Theorem 2 of \cite{JKR83}]
\label{cor:PCP-JKR83}
Let $\mathcal{P},\mathcal{S}$ be TRSs
such that $\mathcal{S}$ is left-linear,
$\mathcal{P}$ is reversible
and 
$\mathcal{S}$ is terminating.
Suppose
(i) $\mathrm{CP}(\mathcal{S},\mathcal{S})
\subseteq 
{\stackrel{*}{\to}}_{\mathcal{S}}
\circ 
{\pargets}_{\mathcal{P} \cup \mathcal{P}^{-1}}
\circ
{\stackrel{*}{\gets}}_{\mathcal{S}}$,
(ii) 
$\mathrm{PCP}_\mathit{in}(\mathcal{P} \cup \mathcal{P}^{-1},\mathcal{S})
\subseteq
{\stackrel{*}{\to}}_{\mathcal{S}}
\circ 
{\stackrel{*}{\gets}}_{\mathcal{S}}$
and
(iii) $\mathrm{CP}(\mathcal{S},\mathcal{P} \cup \mathcal{P}^{-1})
\subseteq 
{\stackrel{*}{\to}}_{\mathcal{S}}
\circ 
{\parto}_{\mathcal{P} \cup \mathcal{P}^{-1}}
\circ 
{\stackrel{*}{\gets}}_{\mathcal{S}}
$.
Then 
$\mathcal{S}\cup \mathcal{P}$ is confluent.
\end{cor}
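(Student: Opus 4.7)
The plan is to derive this corollary directly from Corollary~\ref{cor:PCP}, of which it is presented as a particular case. All hypotheses match verbatim except for condition (ii), so the entire task reduces to showing that the present, symmetric joinability condition on inner parallel critical pairs implies the more elaborate condition (ii) of Corollary~\ref{cor:PCP}, which threads a parallel $(\mathcal{P}\cup\mathcal{P}^{-1})$-step between the $\stackrel{*}{\to}_\mathcal{S}$ and $\stackrel{*}{\gets}_\mathcal{S}$ halves and imposes a variable restriction.

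Fix an arbitrary $\langle u, v \rangle_X \in \mathrm{PCP}_\mathit{in}(\mathcal{P} \cup \mathcal{P}^{-1},\mathcal{S})$. By the present hypothesis (ii) there exists a term $u'$ such that $u \stackrel{*}{\to}_{\mathcal{S}} u' \stackrel{*}{\gets}_{\mathcal{S}} v$. The key observation, which makes the argument one line long, is that the parallel rewrite relation $\parto_{\mathcal{R}}$ was defined in Subsection~3.1 to include the identity, i.e.\ $t \parto_{\mathcal{R}} t$ for every term $t$. Choosing the empty set of positions $V := \emptyset$, we therefore have $u' \pargets_{\emptyset, \mathcal{P}\cup\mathcal{P}^{-1}} u'$, and the variable side-condition $\mathcal{V}_V(u') = \emptyset \subseteq X$ is vacuously satisfied. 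Splicing this empty parallel step into the joining sequence yields
\[
u \stackrel{*}{\to}_{\mathcal{S}} u' \pargets_{\emptyset, \mathcal{P}\cup\mathcal{P}^{-1}} u' \stackrel{*}{\gets}_{\mathcal{S}} v,
\]
which is exactly the shape required by condition (ii) of Corollary~\ref{cor:PCP}, with $u'$ serving as the intermediate term.

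With conditions (i) and (iii) already matching verbatim, all hypotheses of Corollary~\ref{cor:PCP} are now in place, and we conclude that $\mathcal{S} \cup \mathcal{P}$ is confluent. There is essentially no obstacle: the entire content of the proof is the remark that the empty parallel step absorbs the $\pargets$-clause and trivializes the variable condition, so that the restricted hypothesis of the present corollary fits the more permissive template of Corollary~\ref{cor:PCP}.
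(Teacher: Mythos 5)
Your proposal is correct and follows exactly the route the paper intends: the paper justifies this corollary only by the remark that its condition (ii) is ``a particular case'' of condition (ii) of Corollary~\ref{cor:PCP}, and your observation that the empty parallel step ($V=\emptyset$, so $u'\pargets_{\emptyset}u'$ and $\mathcal{V}_\emptyset(u')=\emptyset\subseteq X$) realizes that specialization is precisely the missing detail. Nothing further is needed.
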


\subsection{Examples and comparison}

In this subsection,
some examples to illustrate applicability
of our confluence criteria in previous subsections
are presented.
Relations among our confluence criteria
and locations of given examples are summarized in 
Figure \ref{fig:relation}.

\begin{exa}
\label{exp:add2pp-assoc-com}
Let 
$\mathcal{R}_3 = \{ 
(\mathsf{add}_1),
(\mathsf{add}_2),
(C),
(A) \} \cup {}$
\[
\left\{
\begin{array}{clclclcl}
(\mathsf{add}_3) & \mathsf{+}(x,\mathsf{0})    &\to&  x &\quad\\
(\mathsf{add}_4) & \mathsf{+}(x,\mathsf{s}(y)) &\to&  \mathsf{s}(\mathsf{+}(x,y)) \\
\end{array}
\right\}
\]
where 
$(\mathsf{add}_1)$,
$(\mathsf{add}_2)$,
$(C)$,
$(A)$ are rewrite rules given in Example~\ref{exp:plus-com-assoc}.
We now prove confluence of $\mathcal{R}_3$
using Theorem~\ref{thm:parallel}.
For this, 
put $\mathcal{S} = \{ 
(\mathsf{add}_1),(\mathsf{add}_2),(\mathsf{add}_3),(\mathsf{add}_4)
\}$
and $\mathcal{P} = \{ (C),(A) \}$.
Then $\mathcal{S}$ is linear and terminating.
As demonstrated in Example~\ref{exp:plus-com-assoc},
$\mathcal{P}$ is reversible.
We have
$\mathrm{CP}(\mathcal{S},\mathcal{S}) = $
\[
\left\{ 
\begin{array}{l@{\,}ll@{\,}l}
\langle \mathsf{0}, \mathsf{0} \rangle 
& \in  {\stackrel{*}{\gets}}_\mathcal{S} &
\langle \mathsf{s}(y), \mathsf{s}(\mathsf{+}(\mathsf{0},y)) \rangle & \in  {\gets}_\mathcal{S}\\
\langle \mathsf{s}(\mathsf{+}(x,\mathsf{0})), \mathsf{s}(x)  \rangle 
& \in  {\to}_\mathcal{S} &
\langle \mathsf{s}(x), \mathsf{s}(\mathsf{+}(x,\mathsf{0})) \rangle & \in  {\gets}_\mathcal{S} \\
\langle \mathsf{s}(\mathsf{+}(\mathsf{0},y)), \mathsf{s}(y)  \rangle 
& \in  {\to}_\mathcal{S} &
\langle \mathsf{s}(\mathsf{+}(x,\mathsf{s}(y))), \mathsf{s}(\mathsf{+}(\mathsf{s}(x),y)) \rangle
& \in  {\to}_\mathcal{S} \circ {\gets}_\mathcal{S}\\
\langle \mathsf{s}(\mathsf{+}(\mathsf{s}(x),y)), \mathsf{s}(\mathsf{+}(x,\mathsf{s}(y))) \rangle
& \in  {\to}_\mathcal{S} \circ {\gets}_\mathcal{S}\\
\end{array}
\right\},
\]
$\mathrm{CP}_\textit{in}(\mathcal{P}\cup \mathcal{P}^{-1},\mathcal{S})
= \emptyset$ and
$\mathrm{CP}(\mathcal{S},\mathcal{P}\cup \mathcal{P}^{-1}) =$
\[
\left\{ 
\begin{array}{l@{\,}ll@{\,}l}
\langle y, \mathsf{+}(y,\mathsf{0}) \rangle
& \in  {\gets}_\mathcal{S} &
\langle \mathsf{+}(y,z), \mathsf{+}(\mathsf{0},\mathsf{+}(y,z)) \rangle
& \in  {\gets}_\mathcal{S}\\
\langle \mathsf{+}(y,z), \mathsf{+}(\mathsf{+}(\mathsf{0},y),z) \rangle
& \in  {\gets}_\mathcal{S} &
\langle \mathsf{+}(x,z), \mathsf{+}(\mathsf{+}(x,\mathsf{0}),z) \rangle
& \in  {\gets}_\mathcal{S}\\
\langle \mathsf{s}(\mathsf{+}(x,y)), \mathsf{+}(y,\mathsf{s}(x)) \rangle
& \in  {\leftrightarrow}_\mathcal{P} \circ {\gets}_\mathcal{S}\\
\langle \mathsf{+}(\mathsf{s}(\mathsf{+}(x,y)),z), 
        \mathsf{+}(\mathsf{s}(x),\mathsf{+}(y,z)) \rangle
& \in  {\to}_\mathcal{S} \circ {\leftrightarrow}_\mathcal{P} \circ {\gets}_\mathcal{S}\\
\langle \mathsf{s}(\mathsf{+}(x,\mathsf{+}(y,z))), 
        \mathsf{+}(\mathsf{+}(\mathsf{s}(x),y),z) \rangle
& \in  {\leftrightarrow}_\mathcal{P} \circ {\stackrel{*}{\gets}}_\mathcal{S}\\
\langle \mathsf{+}(x,\mathsf{s}(\mathsf{+}(y,z))), 
        \mathsf{+}(\mathsf{+}(x,\mathsf{s}(y)),z) \rangle
& \in  {\to}_\mathcal{S} \circ {\leftrightarrow}_\mathcal{P} \circ {\stackrel{*}{\gets}}_\mathcal{S}\\
\langle x, \mathsf{+}(\mathsf{0},x) \rangle
& \in  {\gets}_\mathcal{S} &
\langle \mathsf{+}(x,y), \mathsf{+}(x,\mathsf{+}(y,\mathsf{0})) \rangle
& \in  {\gets}_\mathcal{S}\\
\langle \mathsf{+}(y,z),
        \mathsf{+}(y,\mathsf{+}(\mathsf{0},z)) \rangle
& \in  {\gets}_\mathcal{S} &
\langle \mathsf{+}(x,y), \mathsf{+}(\mathsf{+}(x,y),\mathsf{0}) \rangle
& \in  {\gets}_\mathcal{S}\\
\langle \mathsf{s}(\mathsf{+}(x,y)), \mathsf{+}(\mathsf{s}(y),x) \rangle
& \in  {\leftrightarrow}_\mathcal{P} \circ {\gets}_\mathcal{S}\\
\langle \mathsf{s}(\mathsf{+}(\mathsf{+}(x,y),z)), \mathsf{+}(x,\mathsf{+}(y,\mathsf{s}(z))) \rangle
& \in  {\leftrightarrow}_\mathcal{P} \circ {\stackrel{*}{\gets}}_\mathcal{S}\\
\langle \mathsf{+}(\mathsf{s}(\mathsf{+}(x,y)),z), 
        \mathsf{+}(x,\mathsf{+}(\mathsf{s}(y),z)) \rangle
& \in  {\to}_\mathcal{S} \circ {\leftrightarrow}_\mathcal{P} \circ {\stackrel{*}{\gets}}_\mathcal{S}\\
\langle \mathsf{+}(x,\mathsf{s}(\mathsf{+}(y,z))), 
        \mathsf{+}(\mathsf{+}(x,y),\mathsf{s}(z)) \rangle
& \in  {\to}_\mathcal{S} \circ {\leftrightarrow}_\mathcal{P} \circ {\gets}_\mathcal{S}\\
\end{array}
\right\}.
\]
Thus one can apply Theorem~\ref{thm:parallel} 
to obtain the confluence of 
$\mathcal{R}_3 = \mathcal{S}  \cup \mathcal{P}$.
To this example, one can apply 
Theorem~\ref{thm:linear},
Corollaries~\ref{cor:linear}, \ref{cor:parallel}
and Proposition~\ref{prop:relative} as well.
\end{exa}

\begin{exa}
\label{exp:add-assoc-com}
Let us consider
$\mathcal{R}_2$ of Example~\ref{exp:plus-com-assoc}.
For this, we consider
a partition $\mathcal{S} = \{ (\mathsf{add}_1), (\mathsf{add}_2) \}$ and
$\mathcal{P} = \{ (C), (A) \}$.
Then $\mathcal{S}$ is linear and terminating
and $\mathcal{P}$ is reversible.
But there exists a critical pair
$\langle y, \mathsf{+}(y,\mathsf{0}) \rangle
\in \mathrm{CP}(\mathcal{S},\mathcal{P} \cup \mathcal{P}^{-1})$
obtained by the overlap of $(\mathsf{add}_1)$ on $(C)$
which is not included
in 
$({\stackrel{*}{\to}}_{\mathcal{S} \cup \mathcal{P}'}
\circ 
{\stackrel{=}{\to}_{\mathcal{P}  \cup\mathcal{P}^{-1}}})
\cup
({\stackrel{*}{\to}}_{\mathcal{S} \cup \mathcal{P}'}
\circ 
{\stackrel{=}{\to}_{\mathcal{P}  \cup\mathcal{P}^{-1}}}
\circ 
{\stackrel{*}{\gets}}_{\mathcal{S} \cup \mathcal{P}'}
\circ 
{\stackrel{*}{\gets}}_\mathcal{S})$ for any 
$\mathcal{P}' \subseteq \mathcal{P}  \cup \mathcal{P}^{-1}$.
Thus the critical pair conditions of 
Theorem~\ref{thm:linear} are not satisfied.
Since $\langle y, \mathsf{+}(y,\mathsf{0}) \rangle$
is obtained by the overlap of $(\mathsf{add}_1)$ on $(C)$,
another possible
partition $\mathcal{S} = \{ (\mathsf{add}_1), (\mathsf{add}_2), (A) \}$ and
$\mathcal{P} = \{ (C) \}$ is not effective either.
Similarly, Theorem \ref{thm:PCP} and 
Proposition \ref{prop:relative} are also not applicable.
We will revisit this example in the next section.
\end{exa}

\begin{figure}[t]
\begin{center}
\PSforPDF{
\begin{pspicture}(-3,-1)(9,4.3) 
\psframe[framearc=.25,linestyle=dashed](-.5,-.8)(6.5,1.9) 
\psellipse[](1,2)(2.8,1.2)
\psellipse[](1,2)(4,2)
\psellipse[](5,2)(2.8,1.2)
\psellipse[](5,2)(4,2)
\rput(3,1.6){$\bullet$ $\mathcal{R}_3$}
\rput(0.5,1.3){$\bullet$ $\mathcal{R}_4$}
\rput(0,2.3){$\bullet$ $\mathcal{R}_6$}
\rput(6.5,2.3){$\bullet$ $\mathcal{R}_8$}
\rput(-1,3.3){$\bullet$ $\mathcal{R}_7$}
\rput(4.2,2.5){$\bullet$ $\mathcal{R}_5$}
\rput(8,0){$\bullet$ $\mathcal{R}_2$}
\rput(0.3,2.8){Cor.~\ref{cor:PCP}}
\rput(.7,3.6){Thm.~\ref{thm:PCP}}
\rput(5.8,2.8){Cor.~\ref{cor:linear}}
\rput(5.2,3.6){Thm.~\ref{thm:linear}}
\rput(4,-.5){Prop.~\ref{prop:relative}}
\end{pspicture}}
\end{center}
\caption{Relations among confluence criteria}
\label{fig:relation}
\end{figure}

\begin{exa}
\label{exp:add2pp-assoc-com-dbl}
Let 
\[
\mathcal{R}_4 = \mathcal{R}_3 \cup 
\left\{
\begin{array}{clcl}
 (\mathsf{dbl})  & \mathsf{dbl}(x) &\to& \mathsf{+}(x,x)
\end{array}
\right\}.
\]
One can easily confirm that the
confluence of $\mathcal{R}_1$ is shown
in the same way as $\mathcal{R}_3$ 
using Theorem~\ref{thm:parallel}
by putting
$\mathcal{S} = \{ 
(\mathsf{add}_1),(\mathsf{add}_2),(\mathsf{add}_3),(\mathsf{add}_4),
(\mathsf{dbl}) \}$
and 
$\mathcal{P} = \{ (C), (A) \}$.
On the other hand, since $\mathcal{S}$ is not linear, 
Theorem \ref{thm:linear} does not apply.
To this example, one can apply 
Corollary~\ref{cor:parallel}
and Proposition~\ref{prop:relative} as well.
\end{exa}

\begin{exa}
\label{exp:add2pp-assoc-com-ss}
Consider a TRS
\[
\mathcal{R}_5 = \mathcal{R}_3 \cup 
\left\{
\begin{array}{clcl}
(\mathsf{ss}_1) & \mathsf{s}(x) &\to& \mathsf{s}(\mathsf{s}(x))\\
(\mathsf{ss}_2) & \mathsf{s}(\mathsf{s}(x)) &\to& \mathsf{s}(x) 
\end{array}
\right\}.
\]
By putting
$\mathcal{S}= \{ 
(\mathsf{add}_1),(\mathsf{add}_2),(\mathsf{add}_3),(\mathsf{add}_4)
\}$
and 
$\mathcal{P} = \{ (C), (A), (\mathsf{ss}_1), (\mathsf{ss}_2)  \}$,
one can show the confluence of $\mathcal{R}_5$
using Corollary~\ref{cor:linear}.
On the other hand,
the condition of Corollary \ref{cor:PCP}
is not satisfied because
$\mathrm{CP}_\mathit{in}(\mathcal{P} \cup \mathcal{P}^{-1},\mathcal{S})
\neq \emptyset$.
Proposition~\ref{prop:relative} does not apply either,
since $\mathcal{S}$ is not terminating relative to $\mathcal{P}$
as e.g.\
$\mathsf{+}(\mathsf{s}(\mathsf{s}(x)),y)
\to_\mathcal{S} \mathsf{s}(\mathsf{+}(\mathsf{s}(x),y))
\to_\mathcal{P} \mathsf{s}(\mathsf{+}(\mathsf{s}(\mathsf{s}(x)),y))
\to_\mathcal{S} \cdots$.
Take $\mathcal{P}' = \{ (\mathsf{ss}_2)  \}$.
Then 
the conditions of Theorem \ref{thm:PCP} are satisfied---for example,
for 
$\langle \mathsf{+}(\mathsf{s}(\mathsf{s}(x)),y), \mathsf{s}(\mathsf{+}(x,y)) \rangle_{\{ x \}}
\in \mathrm{PCP}_\mathit{in}(\mathcal{P} \cup \mathcal{P}^{-1},\mathcal{S})$,
we have 
a rewrite sequence
$\mathsf{+}(\mathsf{s}(\mathsf{s}(x)),y)
\stackrel{*}{\to}_{\mathcal{S} \cup \mathcal{P}'}
\mathsf{s}(\mathsf{+}(x,y))
\pargets_{\emptyset,\mathcal{P} \cup\mathcal{P}^{-1}}
\stackrel{*}{\gets}_{\mathcal{S}}
\mathsf{s}(\mathsf{+}(x,y))$.
Hence confluence of $\mathcal{R}_5$
is shown by Theorem \ref{thm:PCP}.
\end{exa}

\begin{exa}
\label{exp:add2pe-dbl-assoc-com}
Let 
\[
\mathcal{R}_6 = \{ 
(\mathsf{add}_1),
(\mathsf{add}_2),
(\mathsf{add}_3),
(C),
(A),
(\mathsf{dbl})
 \} \cup
\left\{
\begin{array}{clcl}
(\mathsf{add}_5) &  \mathsf{+}(x,\mathsf{s}(y)) &\to&
\mathsf{+}(\mathsf{s}(x),y)
\end{array}
\right\}.
\]
Let 
$\mathcal{S} = \{ 
(\mathsf{add}_1),
(\mathsf{add}_2),
(\mathsf{add}_3),
(\mathsf{add}_5),
(\mathsf{dbl}) \}$
and 
$\mathcal{P} = \{ (C), (A) \}$.
Then $\mathcal{S}$ is not 
terminating relative to
$\mathcal{P}$,
because of 
$\mathsf{+}(x,\mathsf{s}(x))
\to_{\mathcal{S}}
\mathsf{+}(\mathsf{s}(x),x)
\to_{\mathcal{P}}
\mathsf{+}(x,\mathsf{s}(x))
\to_{\mathcal{S}}
\cdots$.
Thus Proposition $\ref{prop:relative}$ 
is not applicable.
Since $\mathcal{S}$ is not linear,
Theorem \ref{thm:linear} does not apply either.
On the other hand, 
one easily checks the confluence of $\mathcal{R}_6$
using Theorems \ref{thm:parallel} or \ref{thm:PCP}.
To this example,
Corollaries \ref{cor:parallel} and \ref{cor:PCP} also apply.
\end{exa}

\begin{exa}
\label{exp:add2pp-dbl-assoc-com-ss}
Let 
\[
\mathcal{R}_7 = \mathcal{R}_4 \cup \mathcal{R}_5 = \{ 
(\mathsf{add}_1),
(\mathsf{add}_2),
(\mathsf{add}_3),
(\mathsf{add}_4),
(C),
(A),
(\mathsf{dbl}),
(\mathsf{ss}_1),
(\mathsf{ss}_2)
 \}.
\]
By the same reason as for $\mathcal{R}_6$,
Proposition $\ref{prop:relative}$ and Theorem \ref{thm:linear} do not apply.
By the same reason as for $\mathcal{R}_5$,
Corollary \ref{cor:PCP} does not apply.
Confluence of $\mathcal{R}_7$ can be 
shown as in Example~\ref{exp:add2pp-assoc-com-ss}
by taking 
$\mathcal{S} = 
(\mathsf{add}_1),
(\mathsf{add}_2),
(\mathsf{add}_3),
(\mathsf{add}_4),
(\mathsf{dbl}) \}$,
$\mathcal{P} = 
\{ (C),
(A),
(\mathsf{ss}_1),
(\mathsf{ss}_2) \}$ and
$\mathcal{P}' = \{ (\mathsf{ss}_2)  \}$.
\end{exa}

\begin{exa}
\label{exp:loop2}
Let 
\[
\mathcal{R}_8 =
\left\{
\begin{array}{clcl@{\quad}clcl}
(a) & \mathsf{f}(\mathsf{g}(x),\mathsf{g}(y)) &\to& \mathsf{f}(\mathsf{g}(x),\mathsf{h}(y))&
(b) & \mathsf{f}(\mathsf{h}(x),\mathsf{g}(y)) &\to& \mathsf{f}(\mathsf{g}(x),\mathsf{g}(y))\\
(c) & \mathsf{f}(\mathsf{g}(x),\mathsf{h}(y)) &\to& \mathsf{f}(x,y)&
(d) & \mathsf{f}(\mathsf{h}(x),\mathsf{h}(y)) &\to& \mathsf{f}(y,x)\\
(e) & \mathsf{f}(x,y) &\to& \mathsf{f}(y,x)&
(f) & \mathsf{g}(x) &\to& \mathsf{h}(x)\\
(g) & \mathsf{h}(x) &\to& \mathsf{g}(x)\\
\end{array}
\right\}.
\]
Let $\mathcal{S} = \{ (a),(b),(c),(d) \}$
and 
$\mathcal{P} = \{ (e),(f),(g) \}$.
We have
$\mathrm{CP}(\mathcal{S},\mathcal{S}) = \emptyset$,
$\mathrm{CP}_\textit{in}(\mathcal{P}\cup \mathcal{P}^{-1},\mathcal{S})
= $ 
\[
\left\{ 
\begin{array}{l@{\,}l@{\,}l@{\,}l}
\langle \mathsf{f}(\mathsf{h}(x),\mathsf{g}(y)),   \mathsf{f}(\mathsf{g}(y),\mathsf{h}(x)) \rangle 
& \in  \leftrightarrow_\mathcal{P} &
\langle \mathsf{f}(\mathsf{g}(x),\mathsf{h}(y)),   \mathsf{f}(\mathsf{g}(x),\mathsf{h}(y)) \rangle 
& \in  \mathit{id}\\
\langle \mathsf{f}(\mathsf{g}(x),\mathsf{g}(y)),   \mathsf{f}(\mathsf{g}(x),\mathsf{g}(y)) \rangle 
& \in  \mathit{id} \\
\langle \mathsf{f}(\mathsf{h}(x),\mathsf{h}(y)),   \mathsf{f}(\mathsf{g}(x),\mathsf{g}(y)) \rangle & \in  \to_\mathcal{S} \circ \leftrightarrow_\mathcal{P} \circ \stackrel{*}{\gets}_\mathcal{S} \\
\langle \mathsf{f}(\mathsf{h}(x),\mathsf{h}(y)),   \mathsf{f}(x,y) \rangle       
& \in  \to_\mathcal{S} \circ \leftrightarrow_\mathcal{P} &
\langle \mathsf{f}(\mathsf{g}(x),\mathsf{g}(y)),   \mathsf{f}(x,y) \rangle       & \in  \stackrel{*}{\to}_\mathcal{S} \\
\langle \mathsf{f}(\mathsf{g}(x),\mathsf{h}(y)),   \mathsf{f}(y,x) \rangle       
& \in \to_\mathcal{S} \circ \leftrightarrow_\mathcal{P} &
\langle \mathsf{f}(\mathsf{h}(x),\mathsf{g}(y)),   \mathsf{f}(y,x) \rangle       & \in  \stackrel{*}{\to}_\mathcal{S} \circ \leftrightarrow_\mathcal{P} \\
\end{array}
\right\}
\]
and
$\mathrm{CP}(\mathcal{S},\mathcal{P}\cup \mathcal{P}^{-1}) =$
\[
\left\{ 
\begin{array}{l@{\,}l@{\quad}l@{\,}l}
\langle \mathsf{f}(\mathsf{g}(x),\mathsf{g}(y)),   \mathsf{f}(\mathsf{g}(y),\mathsf{h}(x)) \rangle & \in  \leftrightarrow_\mathcal{P} &
\langle \mathsf{f}(\mathsf{g}(x),\mathsf{h}(y)),   \mathsf{f}(\mathsf{g}(y),\mathsf{g}(x)) \rangle & \in  \leftrightarrow_\mathcal{P}  \\
\langle \mathsf{f}(x,y),   \mathsf{f}(\mathsf{h}(y),\mathsf{g}(x)) \rangle 
& \in \stackrel{*}{\gets}_\mathcal{S} &
\langle \mathsf{f}(x,y),   \mathsf{f}(\mathsf{h}(x),\mathsf{h}(y)) \rangle 
& \in  \leftrightarrow_\mathcal{P} \circ \gets_\mathcal{S} \\
\end{array}
\right\}.
\]
Thus $\mathcal{R}_8$ is confluent by Corollary \ref{cor:linear}.
Proposition~\ref{prop:relative} does not apply,
since $\mathcal{S}$ is not terminating relative to $\mathcal{P}$.
Furthermore, the conditions of Theorem \ref{thm:PCP} are not satisfied.
For 
$\langle \mathsf{f}(\mathsf{h}(x),\mathsf{h}(y)), 
\mathsf{f}(x,y) \rangle_{\{ x \}}
\in \mathrm{PCP}_\mathit{in}(\mathcal{P} \cup \mathcal{P}^{-1},\mathcal{S})$,
the critical pair conditions can not be satisfied.
For, any rewrite sequence
$\mathsf{f}(\mathsf{h}(x),\mathsf{h}(y))
\stackrel{*}{\to}_{\mathcal{S} \cup \mathcal{P}'}
\circ
\pargets_{V, \mathcal{P} \cup \mathcal{P}^{-1}}
\circ
\stackrel{*}{\gets}_{\mathcal{S} \cup \mathcal{P}'}
\mathsf{f}(x,y)$
satisfying the critical pair conditions,
we have $\mathsf{f}(x,y) \to \mathsf{f}(y,x) \in \mathcal{P}'$.
(Note that if we take $\mathcal{P}' := \emptyset$
then the rewrite sequence
$\mathsf{f}(\mathsf{h}(x),\mathsf{h}(y))
\stackrel{*}{\to}_{\mathcal{S}}
\mathsf{f}(y,x)
\pargets_{\{ \epsilon \}, \mathcal{P} \cup \mathcal{P}^{-1}}
\mathsf{f}(x,y)$
does not satisfy the critical pair conditions
because of the variable condition
as $\mathcal{V}_{\{ \epsilon \}}(\mathsf{f}(y,x)) = 
\{ x,y \} \not\subseteq \{ x \}$.)
But then $\mathcal{S}$ is not terminating relative to $\mathcal{P}'$.
\end{exa}

\begin{thm}\hfill
\begin{enumerate}[\em(1)]
\item
Corollary \ref{cor:linear}
and Theorem \ref{thm:PCP} are incomparable.
\item
Corollary \ref{cor:PCP}
and Theorem \ref{thm:linear} are incomparable.
\item
Theorems \ref{thm:linear}
and \ref{thm:PCP} are incomparable.
\item
Corollaries \ref{cor:linear}
and \ref{cor:PCP} are incomparable.
\end{enumerate}
\end{thm}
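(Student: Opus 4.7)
The plan is to prove each of the four incomparabilities by producing two witnessing TRSs from among the worked examples $\mathcal{R}_4$--$\mathcal{R}_8$: for each pair of criteria, one witness will fall in the scope of the first but not the second, and a second witness will do the reverse. Two general observations simplify the bookkeeping. First, Corollary~\ref{cor:linear} is exactly Theorem~\ref{thm:linear} specialized to $\mathcal{P}'=\emptyset$ and Corollary~\ref{cor:PCP} is Theorem~\ref{thm:PCP} specialized to $\mathcal{P}'=\emptyset$, so applicability of a corollary implies applicability of the corresponding theorem, and failure of a theorem implies failure of the corresponding corollary. Second, the applicability status of each criterion on each of these examples has already been recorded in Examples~\ref{exp:add2pp-assoc-com-dbl}--\ref{exp:loop2} and is summarized in Figure~\ref{fig:relation}.

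The following facts, drawn directly from those examples, supply all the needed witnesses. For $\mathcal{R}_4$ (Example~\ref{exp:add2pp-assoc-com-dbl}) and $\mathcal{R}_6$ (Example~\ref{exp:add2pe-dbl-assoc-com}), the non-linear rule $(\mathsf{dbl})$ lies in the $\mathcal{S}$-part, so Corollary~\ref{cor:linear} and Theorem~\ref{thm:linear} both fail; on the other hand $\mathrm{CP}_\mathit{in}(\mathcal{P}\cup\mathcal{P}^{-1},\mathcal{S})=\emptyset$ there, so Corollary~\ref{cor:parallel} applies, and since its condition (ii) is a special case of condition (ii) of Corollary~\ref{cor:PCP}, we obtain applicability of Corollary~\ref{cor:PCP} and \emph{a fortiori} of Theorem~\ref{thm:PCP}. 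For $\mathcal{R}_8$ (Example~\ref{exp:loop2}), on the contrary, Corollary~\ref{cor:linear} applies directly (hence so does Theorem~\ref{thm:linear}) but the conditions of Theorem~\ref{thm:PCP} fail, and therefore so do those of Corollary~\ref{cor:PCP}. Combining these: for (1) use $\mathcal{R}_8$ against $\mathcal{R}_4$; for (2) use $\mathcal{R}_6$ against $\mathcal{R}_8$; for (3) use $\mathcal{R}_6$ against $\mathcal{R}_8$; and for (4) use $\mathcal{R}_4$ against $\mathcal{R}_8$.

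The main obstacle is the negative half for $\mathcal{R}_8$, namely the assertion that Theorem~\ref{thm:PCP} admits \emph{no} choice of $\mathcal{P}'\subseteq\mathcal{P}\cup\mathcal{P}^{-1}$ making $\mathcal{R}_8$ fit its conditions. Following the argument sketched in Example~\ref{exp:loop2}, one focuses on the parallel critical pair $\langle\mathsf{f}(\mathsf{h}(x),\mathsf{h}(y)),\,\mathsf{f}(x,y)\rangle_{\{x\}}\in\mathrm{PCP}_\mathit{in}(\mathcal{P}\cup\mathcal{P}^{-1},\mathcal{S})$ and shows that any candidate joining sequence satisfying the variable condition $\mathcal{V}_V(u')\subseteq\{x\}$ is forced to use rule $(e)\colon\mathsf{f}(x,y)\to\mathsf{f}(y,x)$ as a step in $\mathcal{P}'$; but adding $(e)$ to $\mathcal{P}'$ destroys the relative termination of $\mathcal{S}$ over $\mathcal{P}'$. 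This rules out every admissible $\mathcal{P}'$, and together with the positive halves above completes the verification of all four claims.
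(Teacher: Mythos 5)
Your proposal is correct and takes essentially the same route as the paper: the paper's proof consists of pointing to the preceding examples and Figure~\ref{fig:relation}, and your explicit choice of witnesses ($\mathcal{R}_8$ for the ``linear but not PCP'' direction, $\mathcal{R}_4$ and $\mathcal{R}_6$ for the converse, with the negative half for $\mathcal{R}_8$ justified exactly as in Example~\ref{exp:loop2}) just spells out the separations already recorded there.
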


\begin{proof}
Each claim is witnessed by the preceding examples. See Figure \ref{fig:relation}.
\end{proof}

\section{Reduction-preserving completion}

In this section,
we give a reduction-preserving completion procedure 
so that the applicability of
the criteria presented in the previous section is enhanced. 
We first present an abstract procedure formulated
in terms of inference rules (subsection 1)
and then give a concrete one 
which forms a basis of our implementation (subsection 2).

\subsection{Abstract completion procedure}

As witnessed in Example \ref{exp:add-assoc-com},
there are cases where our confluence criteria 
are not applicable directly.
Our idea is to construct a TRS suitable for applying our criteria
by exchanging or adding rewrite rules 
without changing the reduction relation
so that the confluence of the transformed TRS implies
that of the original TRS.
Using the reversibility of $\mathcal{P}$
allows several flexibilities on such transformations.

The notion of reduction equivalence
and the following properties of reduction equivalence
are well-known in literature and the latter are easily proved.

\begin{defi}[reduction equivalence]
Two relations $\to_0$ and $\to_1$ are
said to be \textit{reduction equivalent}
if ${\stackrel{*}{\to}}_0  = {\stackrel{*}{\to}}_1$.
Two TRSs $\mathcal{R}$ and $\mathcal{Q}$
are reduction equivalent if
so are $\to_\mathcal{R}$ and $\to_{\mathcal{Q}}$.
\end{defi}

\begin{prop}[properties of reduction equivalence]
\label{prop:reduction equivalence}
(i) If ${\to}_\mathcal{R} \subseteq {\stackrel{*}{\to}}_{\mathcal{Q}}$
and ${\to}_{\mathcal{Q}} \subseteq {\stackrel{*}{\to}}_{\mathcal{R}}$
then $\mathcal{R}$ and $\mathcal{Q}$ are reduction equivalent.
(ii) If $\mathcal{R}$ and $\mathcal{Q}$ are reduction equivalent
then the confluence of $\mathcal{R}$ and $\mathcal{Q}$ 
coincide.
\end{prop}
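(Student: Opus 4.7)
The plan is to handle the two parts separately, both being elementary consequences of the definitions.

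For part (i), I would argue that under the given hypotheses, $\stackrel{*}{\to}_\mathcal{R} = \stackrel{*}{\to}_\mathcal{Q}$. The key observation is that $\stackrel{*}{\to}_\mathcal{Q}$ is, by definition, reflexive and transitive. Hence, starting from the one-step inclusion ${\to}_\mathcal{R} \subseteq {\stackrel{*}{\to}}_\mathcal{Q}$, a straightforward induction on the length of an $\mathcal{R}$-rewrite sequence $a \stackrel{*}{\to}_\mathcal{R} b$ shows that $a \stackrel{*}{\to}_\mathcal{Q} b$: the base case is reflexivity of $\stackrel{*}{\to}_\mathcal{Q}$, and the inductive step uses transitivity combined with the one-step hypothesis. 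Thus ${\stackrel{*}{\to}}_\mathcal{R} \subseteq {\stackrel{*}{\to}}_\mathcal{Q}$. The symmetric argument, using ${\to}_\mathcal{Q} \subseteq {\stackrel{*}{\to}}_\mathcal{R}$, yields the reverse inclusion. Hence the two reflexive-transitive closures coincide, which is exactly reduction equivalence of $\mathcal{R}$ and $\mathcal{Q}$.

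For part (ii), I would simply unfold the definition of confluence. Recall that $\to_\mathcal{R}$ is confluent iff ${\stackrel{*}{\gets}}_\mathcal{R} \circ {\stackrel{*}{\to}}_\mathcal{R} \subseteq {\stackrel{*}{\to}}_\mathcal{R} \circ {\stackrel{*}{\gets}}_\mathcal{R}$; note that this condition is expressed entirely in terms of $\stackrel{*}{\to}_\mathcal{R}$ and its inverse $\stackrel{*}{\gets}_\mathcal{R}$. Since reduction equivalence gives ${\stackrel{*}{\to}}_\mathcal{R} = {\stackrel{*}{\to}}_\mathcal{Q}$ (and hence also ${\stackrel{*}{\gets}}_\mathcal{R} = {\stackrel{*}{\gets}}_\mathcal{Q}$ by taking inverses), the confluence condition for $\mathcal{R}$ is literally the same statement as the confluence condition for $\mathcal{Q}$. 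Therefore one holds if and only if the other does.

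There is no real obstacle here: both parts are purely formal manipulations of inclusions and closures. The only thing to be careful about is making clear in part (i) that the one-step inclusion lifts to the reflexive-transitive closure because the right-hand side is already reflexive and transitive, so no further closure operation is needed.
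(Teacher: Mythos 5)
Your proof is correct, and the paper itself gives no proof of this proposition (it is dismissed as ``easily proved''), so your argument is exactly the standard one the authors intend: lift the one-step inclusions to the reflexive--transitive closures using that $\stackrel{*}{\to}$ is already reflexive and transitive, and observe that confluence is a property of $\stackrel{*}{\to}$ alone.
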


We now demonstrate how 
the confluence criteria in the previous section
can be applied indirectly
using the notion of reduction equivalence.

\begin{exa}[confluence by reduction equivalence]
\label{exp:confluence by reduction equivalence}
We show the confluence of $\mathcal{R}_2 
= \{ (\mathsf{add}_1), (\mathsf{add}_2), (C), (A) \}$
of Example~\ref{exp:plus-com-assoc}.
Theorems \ref{thm:PCP} and \ref{thm:linear} can not be 
applied directly to prove this
as illustrated in 
Example \ref{exp:add-assoc-com}.
Consider $\mathcal{R}_3 = \mathcal{R}_2 \cup \{ (\mathsf{add}_3), (\mathsf{add}_4) \}$
of Example~\ref{exp:add2pp-assoc-com}.
Then since
we have $\mathsf{+}(x,\mathsf{0}) \to_{\mathcal{R}_2}
\mathsf{+}(\mathsf{0},x) \to_{\mathcal{R}_2} x$ and
$\mathsf{+}(x,\mathsf{s}(y)) \to_{\mathcal{R}_2}
\mathsf{+}(\mathsf{s}(y),x) \to_{\mathcal{R}_2}
\mathsf{s}(\mathsf{+}(y,x)) \to_{\mathcal{R}_2}
\mathsf{s}(\mathsf{+}(x,y))$,
the inclusions ${\to}_{\mathcal{R}_2}
\subseteq {\to}_{\mathcal{R}_3}
\subseteq {\stackrel{*}{\to}}_{\mathcal{R}_2}$ hold.
Hence $\mathcal{R}_3$
and $\mathcal{R}_2$ are reduction equivalent
by Proposition~\ref{prop:reduction equivalence} (i).
As we have shown in Example~\ref{exp:add2pp-assoc-com},
$\mathcal{R}_3$ is confluent.
Thus by Proposition~\ref{prop:reduction equivalence} (ii),
$\mathcal{R}_2$ is confluent too.
\end{exa}

\begin{figure}
\renewcommand{\arraystretch}{3}    
\[
\begin{array}{l}
\deduce{~\strut}{\textit{Partition}}\quad
\infer[\small~\mbox{$\mathcal{S} \cup \mathcal{P} = \mathcal{S}' \cup \mathcal{P}'$,
                    $\mathcal{P}'$: reversible}]
  {\langle \mathcal{S}', \mathcal{P}' \rangle}
  {\langle \mathcal{S}, \mathcal{P} \rangle}\\
\deduce{~\strut}{\textit{Replacement}}\quad
\infer[\small~r \stackrel{*}{\leftrightarrow}_\mathcal{P} r']
  {\langle \mathcal{S} \cup \{ l \to r' \}, \mathcal{P} \rangle}
  {\langle \mathcal{S} \cup \{ l \to r \}, \mathcal{P} \rangle}\\
\deduce{~\strut}{\textit{Addition}}\quad
\infer[\small~l \stackrel{*}{\leftrightarrow}_\mathcal{P} \circ \stackrel{*}{\to}_\mathcal{S} r]
  {\langle \mathcal{S} \cup \{ l \to r \},  \mathcal{P} \rangle}
  {\langle \mathcal{S}, \mathcal{P} \rangle}\\
\end{array}
\]
\caption{Inference rules of reduction-preserving completion}
\label{fig:reduction-preserving completion}
\end{figure}

In this example, two additional rewrite rules $(\mathsf{add}_2)$ 
and $(\mathsf{add}_3)$ are given
by hand.
But in automated confluence proving procedures,
one needs to find such new rewrite rules automatically.
We next present a completion-like procedure
to automate such additions (or more generally transformations)
of rewrite rules.
We first present an abstract version of the procedure
in the form of inference rules and prove its soundness
w.r.t.\ the confluence proof.

\begin{defi}[abstract reduction-preserving completion procedure]
\label{def:abstract reduction-preserving completion procedure}
Inference rules of an \textit{abstract reduction-preserving completion procedure}
are listed in Figure~\ref{fig:reduction-preserving completion}.
The inference rules act on a pair of TRSs
$\mathcal{S}$ and $\mathcal{P}$ where
$\mathcal{P}$ is reversible.
One step derivation using any of inference rules (from upper to lower)
is denoted by $\leadsto$.
We also write $\leadsto^p$ ($\leadsto^r, \leadsto^a$)
for an inference step by the rule
\textit{Partition} (\textit{Replacement}, \textit{Addition}, respectively).
\end{defi}

\begin{thm}[soundness of the abstract reduction-preserving completion procedure]
\label{thm:soundness of the abstract reduction-preserving completion procedure}
Let 
$\langle \mathcal{R}, \emptyset \rangle
= \langle \mathcal{S}_0, \mathcal{P}_0 \rangle
\stackrel{*}{\leadsto}
\langle \mathcal{S}_n,\mathcal{P}_n \rangle$
be a derivation of abstract reduction-preserving completion procedure.
If $\mathcal{S}_n \cup \mathcal{P}_n$
is confluent then $\mathcal{R}$ is confluent.
\end{thm}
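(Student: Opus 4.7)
The plan is to prove the stronger invariant that, along any derivation $\langle \mathcal{R}, \emptyset \rangle = \langle \mathcal{S}_0, \mathcal{P}_0 \rangle \leadsto \cdots \leadsto \langle \mathcal{S}_n, \mathcal{P}_n \rangle$, the union $\mathcal{S}_i \cup \mathcal{P}_i$ is reduction equivalent to $\mathcal{R}$ and the TRS $\mathcal{P}_i$ is reversible, for every $i \le n$. Once this invariant is in place, Proposition~\ref{prop:reduction equivalence}(ii) immediately transfers confluence of $\mathcal{S}_n \cup \mathcal{P}_n$ back to $\mathcal{R}$.

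The invariant is proved by induction on $n$. The base case is immediate since $\mathcal{S}_0 \cup \mathcal{P}_0 = \mathcal{R}$ and $\mathcal{P}_0 = \emptyset$ is vacuously reversible. For the inductive step I would do a case analysis on the inference rule applied. The \textit{Partition} step is immediate from its side conditions: both the union and reversibility are preserved by assumption. For \textit{Replacement} and \textit{Addition}, the component $\mathcal{P}$ is unchanged, so reversibility carries over directly from the induction hypothesis. For the reduction-equivalence half, the point is that reversibility of $\mathcal{P}_i$ gives ${\stackrel{*}{\leftrightarrow}_{\mathcal{P}_i}} = {\stackrel{*}{\to}_{\mathcal{P}_i}}$, so the side conditions $r \stackrel{*}{\leftrightarrow}_\mathcal{P} r'$ of \textit{Replacement} and $l \stackrel{*}{\leftrightarrow}_\mathcal{P} \circ \stackrel{*}{\to}_\mathcal{S} r$ of \textit{Addition} become forward rewrite sequences in $\mathcal{S} \cup \mathcal{P}$. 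Each newly created rewrite step instantiating $l \to r'$ (respectively $l \to r$) is then simulated by $l \to_\mathcal{S} r \stackrel{*}{\to}_\mathcal{P} r'$ (respectively $l \stackrel{*}{\to}_\mathcal{P} \circ \stackrel{*}{\to}_\mathcal{S} r$), giving ${\to_{\mathcal{S}' \cup \mathcal{P}}} \subseteq {\stackrel{*}{\to}_{\mathcal{S} \cup \mathcal{P}}}$; the converse inclusion is trivial for \textit{Addition} and, for \textit{Replacement}, follows from the reverse direction $r' \stackrel{*}{\to}_\mathcal{P} r$ of the same side condition. Proposition~\ref{prop:reduction equivalence}(i) then yields reduction equivalence of the old and new TRSs, and combining with the induction hypothesis by transitivity closes the step.

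The only technicality worth checking is that the new rule $l \to r'$ (or $l \to r$) is a legitimate rewrite rule, i.e.\ satisfies $\mathcal{V}(\mathrm{rhs}) \subseteq \mathcal{V}(l)$. This is not an obstacle: rewriting never introduces fresh variables on the right-hand side, so for \textit{Addition} we get $\mathcal{V}(r) \subseteq \mathcal{V}(l)$ directly from the witnessing sequence $l \stackrel{*}{\to}_\mathcal{P} \circ \stackrel{*}{\to}_\mathcal{S} r$, and for \textit{Replacement} reversibility of $\mathcal{P}$ forces $\mathcal{V}(r) = \mathcal{V}(r')$, which combined with the well-formedness of the old rule $l \to r$ gives $\mathcal{V}(r') \subseteq \mathcal{V}(l)$. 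Overall the theorem is essentially a soundness check verifying that the three inference rules were designed to preserve $\stackrel{*}{\to}_\mathcal{R}$, and the real work has already been done in Proposition~\ref{prop:reduction equivalence} and in the elementary properties of reversible TRSs.
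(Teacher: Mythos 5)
Your proposal is correct and follows essentially the same route as the paper's proof: induction along the derivation, a case analysis on the three inference rules in which reversibility of $\mathcal{P}_i$ converts the equational side conditions into forward rewrite sequences yielding both inclusions needed for Proposition~\ref{prop:reduction equivalence}(i), and a final appeal to Proposition~\ref{prop:reduction equivalence}(ii). Your extra check that the newly introduced rules satisfy the variable condition for rewrite rules is a small, correct addition that the paper leaves implicit.
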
 

\begin{proof}
We show,  for any inference step
$\langle \mathcal{S}_i,\mathcal{P}_i \rangle
\leadsto 
\langle \mathcal{S}_{i+1},\mathcal{P}_{i+1} \rangle$,
that $\mathcal{S}_i \cup \mathcal{P}_i$
and $\mathcal{S}_{i+1} \cup \mathcal{P}_{i+1}$
are reduction equivalent
and that $\mathcal{P}_{i+1}$ is reversible
whenever so is $\mathcal{P}_i$.
\begin{iteMize}{$\bullet$}
\item
$\langle \mathcal{S}_i,\mathcal{P}_i \rangle
\leadsto 
\langle \mathcal{S}_{i+1},\mathcal{P}_{i+1} \rangle$
by \textit{Partition}.
Then 
since $\mathcal{S}_i \cup \mathcal{P}_i = 
\mathcal{S}_{i+1} \cup \mathcal{P}_{i+1}$
and $\mathcal{P}_{i+1}$ is reversible by the side condition,
the claim follows immediately.

\item
$\langle \mathcal{S}_i,\mathcal{P}_i \rangle
\leadsto 
\langle \mathcal{S}_{i+1},\mathcal{P}_{i+1} \rangle$
by \textit{Replacement}.
Then 
$\mathcal{S}_i = \mathcal{S}_i' \cup \{ l \to r \}$,
$r \stackrel{*}{\leftrightarrow}_{\mathcal{P}_i} r'$ and 
$\mathcal{S}_{i+1} = \mathcal{S}_i' \cup \{ l \to r' \}$
for some $\mathcal{S}_i'$, $l,r,r'$
and $\mathcal{P}_{i+1} = \mathcal{P}_i$.
By the reversibility of $\mathcal{P}_i$,
we have $l \to_{\mathcal{S}_{i}} r \stackrel{*}{\to}_{\mathcal{P}_i} r'$
hence 
${\to}_{\mathcal{S}_{i+1} \cup \mathcal{P}_{i+1}}
\subseteq 
{\stackrel{*}{\to}}_{\mathcal{S}_i \cup \mathcal{P}_i}$.
By the reversibility of $\mathcal{P}_i$,
we also have $l \to_{\mathcal{S}_{i+1}} r' \stackrel{*}{\to}_{\mathcal{P}_i} r$,
hence 
${\to}_{\mathcal{S}_i \cup \mathcal{P}_i} 
\subseteq 
{\stackrel{*}{\to}}_{\mathcal{S}_{i+1} \cup \mathcal{P}_{i+1}}$.
Thus by Proposition~\ref{prop:reduction equivalence} (i),
$\mathcal{S}_i \cup \mathcal{P}_i$
and $\mathcal{S}_{i+1} \cup \mathcal{P}_{i+1}$
are reduction equivalent.
Hence, by $\mathcal{P}_{i+1} = \mathcal{P}_i$, 
the claim follows.

\item
$\langle \mathcal{S}_i,\mathcal{P}_i \rangle
\leadsto 
\langle \mathcal{S}_{i+1},\mathcal{P}_{i+1} \rangle$
by \textit{Addition}.
Then 
$l \stackrel{*}{\leftrightarrow}_{\mathcal{P}_i} \circ \stackrel{*}{\to}_{\mathcal{S}_i} r$
and 
$\mathcal{S}_{i+1} = \mathcal{S}_i \cup \{ l \to r \}$
for some $l,r$
and and $\mathcal{P}_{i+1} = \mathcal{P}_i$.
Since $\mathcal{S}_i \cup \mathcal{P}_i \subseteq 
\mathcal{S}_{i+1} \cup \mathcal{P}_{i+1}$,
we have 
${\to}_{\mathcal{S}_i \cup \mathcal{P}_i} \subseteq 
{\stackrel{*}{\to}}_{\mathcal{S}_{i+1} \cup \mathcal{P}_{i+1}}$.
By the reversibility of $\mathcal{P}_i$,
$l \stackrel{*}{\to}_{\mathcal{P}_i}
{\circ} \stackrel{*}{\to}_{\mathcal{S}_i} r'$.
Hence
${\to}_{\mathcal{S}_{i+1} \cup \mathcal{P}_{i+1}}
\subseteq 
{\stackrel{*}{\to}}_{\mathcal{S}_i \cup \mathcal{P}_i}$.
Thus by Proposition~\ref{prop:reduction equivalence} (i),
$\mathcal{S}_i \cup \mathcal{P}_i$
and $\mathcal{S}_{i+1} \cup \mathcal{P}_{i+1}$ are reduction equivalent.
Hence, by $\mathcal{P}_{i+1} = \mathcal{P}_i$, 
the claim follows.

\end{iteMize}
Thus by induction on $n$, it follows that
$\mathcal{R}$ and $\mathcal{S}_n \cup \mathcal{P}_n$
are reduction equivalent. 
Hence if $\mathcal{S}_n \cup \mathcal{P}_n$ is confluent,
$\mathcal{R}$ is confluent
by Proposition~\ref{prop:reduction equivalence} (ii).
\end{proof}

\begin{exa}[derivations in abstract reduction-preserving completion procedure]
\label{exp:successful derivation}
The confluence proof of Example~\ref{exp:confluence by reduction equivalence}
is derived by the abstract reduction-preserving completion procedure.
Give $\mathcal{R}_2 = \{ (\mathsf{add}_1), (\mathsf{add}_2), (C), (A) \}$ 
as the input to the procedure.
Let us consider the following derivation.
\[
\begin{array}{lcll}
\langle \{ (\mathsf{add}_1), (\mathsf{add}_2), (C), (A) 
\}, \emptyset \rangle
&\leadsto^p & \langle \{ (\mathsf{add}_1), (\mathsf{add}_2) \}, \{ (C), (A) \} \rangle 
\\
&\leadsto^a & \langle \{ (\mathsf{add}_1), (\mathsf{add}_2),(\mathsf{add}_3) \}, \{ (C), (A) \} \rangle 
\\
&\leadsto^a & \langle \{ (\mathsf{add}_1), (\mathsf{add}_2),
(\mathsf{add}_3), (\mathsf{add}_4') \}, \{ (C), (A) \} \rangle 
\\
&\leadsto^r & \langle \{ (\mathsf{add}_1), (\mathsf{add}_2),
(\mathsf{add}_3),(\mathsf{add}_4)  \}, \{ (C), (A) \} \rangle 
\\
\end{array}
\]
where $(\mathsf{add}_4'): \mathsf{+}(x,\mathsf{s}(y)) \to \mathsf{s}(\mathsf{+}(y,x))$.
As we have already demonstrated in Example \ref{exp:add2pp-assoc-com},
$\{ (\mathsf{add}_1),(\mathsf{add}_2),
(\mathsf{add}_3),(\mathsf{add}_4), (A), (C) \} = \mathcal{R}_3 
$ is confluent.
Thus, by Theorem~\ref{thm:soundness of the abstract reduction-preserving completion procedure},
$\mathcal{R}_2$ is confluent.
\end{exa}

\subsection{Concrete completion procedure}

In this subsection,
we present a concrete reduction-preserving completion procedure
that can be used as the basis of an automated completion procedure.
The procedure presented below is designed so as to 
apply Theorem~\ref{thm:parallel},
but it is straightforward to modify the procedure
suitable for Theorems~\ref{thm:linear}, \ref{thm:PCP}
or Corollaries \ref{cor:linear}, \ref{cor:parallel}, \ref{cor:PCP}
or any combinations of them.

\begin{defi}[concrete reduction-preserving completion procedure]\hfill\\
\label{def:concrete reduction-preserving completion procedure}
Input: a TRS $\mathcal{R}$\\
Output: \texttt{Success} or \texttt{Failure} (or may diverge)
\begin{enumerate}[Step 1]
\item
Put $\mathcal{Q}_0 := \mathcal{R}$ and $i:= 0$.
Proceed to Step 2.
\item
Take a partition 
$\mathcal{S}_i \cup \mathcal{P}_i = \mathcal{Q}_i $
such that 
$\mathcal{S}_i$ is left-linear and terminating,
$\mathcal{P}_i$ is reversible and
$\mathrm{CP}_\textit{in}(\mathcal{P}_i\cup{\mathcal{P}_i}^{-1},\mathcal{S}_i) = \emptyset$.
Proceed to Step 3.
If there is no such a partition then return \texttt{Failure}.
\item
Set $\mathcal{U}_1 := \emptyset$.\\
For each $\langle u, v \rangle  \in
\mathrm{CP}(\mathcal{S}_i,\mathcal{P}_i\cup {\mathcal{P}_i}^{-1})$, do:
\begin{iteMize}{$\bullet$}
\item Take $\mathcal{S}_i$-normal forms $\hat u, \hat v$ of $u,v$, respectively.
\item Check whether $\hat u \parto_{\mathcal{P}_i\cup {\mathcal{P}_i}^{-1}}  \hat v$.
If not then put $\mathcal{U}_1 := \mathcal{U}_1 \cup \{ \langle b, v \approx \hat u \rangle \}$,
where $b : = \mathtt{true}$ or $:= \mathtt{false}$
according to whether $v$ is $\mathcal{S}_i$-normal or not.
\end{iteMize}
\item
Set $\mathcal{U}_2 := \emptyset$.\\
For each $\langle u, v \rangle 
\in \mathrm{CP}({\mathcal{S}_i},{\mathcal{S}_i})$, do:
\begin{iteMize}{$\bullet$}
\item Take $\mathcal{S}_i$-normal forms $\hat u, \hat v$ of $u,v$, respectively.
\item Check whether $\hat u \pargets_{\mathcal{P}_i\cup {\mathcal{P}_i}^{-1}}  \hat v$. If 
not then put $\mathcal{U}_2 := \mathcal{U}_2 \cup \{ \langle \mathsf{false},
\hat u  \approx \hat v \rangle \}$.
\end{iteMize}
If $\mathcal{U}_1 = \mathcal{U}_2 =  \emptyset$
then return \texttt{Success}.

\item
Let $\mathcal{P}' := \emptyset$.\\
For each $\langle b, u \approx v \rangle  \in 
\mathcal{U}_1 \cup \mathcal{U}_2$, do:
\begin{iteMize}{$\bullet$}
\item
Check whether there exist $u',v'$
such that
$u 
= u'
\pargets_{\mathcal{P}_i\cup {\mathcal{P}_i}^{-1}} 
v'
\stackrel{*}{\gets}_{\mathcal{S}_i \cup \mathcal{P}_i\cup {\mathcal{P}_i}^{-1}}  
v$
if $b = \mathtt{true}$, 
and 
$u 
\to_{\mathcal{S}_i}
\circ 
\stackrel{*}{\to}_{\mathcal{S}_i \cup \mathcal{P}_i\cup {\mathcal{P}_i}^{-1}}  
u'
\pargets_{\mathcal{P}_i\cup {\mathcal{P}_i}^{-1}} 
v'
\stackrel{*}{\gets}_{\mathcal{S}_i \cup \mathcal{P}_i\cup {\mathcal{P}_i}^{-1}}  
v$ if $b = \mathsf{false}$.
\item Put $\mathcal{P}' := \mathcal{P}' \cup \{ l \to r 
\in \mathcal{P}_i\cup {\mathcal{P}_i}^{-1}
\mid$
$l \to r$
is used in the rewrite sequences
$u 
\stackrel{*}{\to}_{\mathcal{S}_i \cup \mathcal{P}_i\cup {\mathcal{P}_i}^{-1}}  
u'$ or
$v
\stackrel{*}{\to}_{\mathcal{S}_i \cup \mathcal{P}_i\cup {\mathcal{P}_i}^{-1}}  
v' \}$.
\end{iteMize}
Finally, check whether $\mathcal{S}_i$ is terminating relative to $\mathcal{P}'$.
If it is, then return \texttt{Success}.
Otherwise take some set $\mathcal{U}' 
\subseteq 
\{ v \to \hat u \mid \langle b, v \approx \hat u \rangle
\in \mathcal{U}_1 \}
\cup 
\{ l \to r, r \to l  \mid 
\langle b, l \approx r \rangle \in \mathcal{U}_2,
l \stackrel{*}{\leftrightarrow}_{\mathcal{P}_i}
r \}$
of rewrite rules 
and put $\mathcal{Q}_{i+1} := \mathcal{Q}_{i} \cup \mathcal{U}'$, $i:= i+1$
and go to Step 2.
\end{enumerate}

\medskip

\noindent
During the step 2, one may perform the following additional steps.
\begin{enumerate}[Step 1a.]
\item[Step 2a.]
If there exist $l \to r \in \mathcal{S}_i$ and $r'$ such that 
$r \leftrightarrow_{\mathcal{P}_i} r'$
and $\mathrm{CP}_\textit{in}({\mathcal{P}_i}\cup{\mathcal{P}_i}^{-1},\{ l \to r \}) \neq \emptyset$,
then put $\mathcal{Q}_{i+1} := (\mathcal{Q}_{i} \setminus \{ l \to r \}) \cup \{ l \to r' \}$,
$i:= i+1$.
\item[Step 2b.]
Let $\langle u, v \rangle
\in \mathrm{CP}_\textit{in}({\mathcal{P}_i}\cup{\mathcal{P}_i}^{-1},{\mathcal{S}_i})$
and let $\hat v$ be an $\mathcal{S}_i$-normal form of $v$.
Then put $\mathcal{Q}_{i+1} := \mathcal{Q}_{i} \cup \{ u \to \hat v \}$
and $i:= i+1$.
\end{enumerate}


\medskip

\noindent
Before moving from step 4 to step 2,
one may perform the following additional step.

\begin{enumerate}[Step 1a.]
\item[Step 4a.]
Set $\mathcal{S}_i := \mathcal{S}_{i-1}, \mathcal{P}_i := \mathcal{P}_{i-1}$.
If there exist $l \to r \in \mathcal{S}_i$ and $r'$
such that $r \leftrightarrow_{\mathcal{P}_i} r'$ and
there exists $\langle u, v \rangle \in
 \mathrm{CP}(\{ l \to r \},{\mathcal{P}_i}\cup{\mathcal{P}_i}^{-1})
\cup \mathrm{CP}(\{ l \to r \},{\mathcal{S}_i})
\cup \mathrm{CP}(\mathcal{S}_i, \{ l \to r \})$
such that 
the critical pair conditions
are not satisfied,
then put $\mathcal{Q}_{i+1} := (\mathcal{Q}_{i} \setminus \{ l \to r \}) \cup \{ l \to r' \}$,
$i:= i+1$.
\end{enumerate}
\end{defi}

\begin{rem}
Steps 2--5 try to show the condition 
of Theorem~\ref{thm:parallel} directly,
and if the relative termination check 
of $\mathcal{S}_i$ relative to $\mathcal{P}'$
fails in Step 5,
some of critical pairs which lead to the equations 
in $\mathcal{U}_1 \cup\mathcal{U}_2$ are problematic.
Thus, taking some non-empty $\mathcal{U}'$ is
indispensable for the success of the completion procedure.
On the other hand, it is not at all guaranteed
whether extra steps 2a, 2b, 4a are helpful
for the success of the completion procedure---they
just add some flexibilities to modify equations.
Adding such flexibilities may be helpful but 
there is a trade-off between adding such 
extra steps and efficiency of the completion procedure.
We here present these extra steps because they 
perform well for some examples, including
Example~\ref{exp:extra steps} below.
\end{rem}

\begin{cor}[soundness of the concrete reduction-preserving completion procedure]
\label{cor:soundness of the concrete reduction-preserving completion procedure}
If the procedure of Definition~\ref{def:concrete reduction-preserving completion procedure}
succeeds for the input $\mathcal{R}$,
then $\mathcal{R}$ is confluent.
\end{cor}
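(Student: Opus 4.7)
The plan is to reduce the claim to two earlier results: Theorem \ref{thm:soundness of the abstract reduction-preserving completion procedure} (soundness of the abstract procedure) and Theorem \ref{thm:parallel} (the confluence criterion the concrete procedure is designed to exploit). Accordingly I would split the argument into a \emph{simulation} part and a \emph{termination-of-procedure} part.

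For the simulation, I would show that any run of the concrete procedure can be matched step-by-step by a derivation $\langle \mathcal{R}, \emptyset \rangle \stackrel{*}{\leadsto} \langle \mathcal{S}_n, \mathcal{P}_n \rangle$ in the abstract system of Figure~\ref{fig:reduction-preserving completion}. The initial split $\mathcal{S}_0 \cup \mathcal{P}_0 = \mathcal{R}$ performed in Step~2 is by construction an instance of \emph{Partition} (the reversibility side-condition is exactly what Step~2 checks). Each invocation of Step~2a or Step~4a replaces some $l \to r$ in $\mathcal{S}_i$ by $l \to r'$ with $r \leftrightarrow_{\mathcal{P}_i} r'$, which is an instance of \emph{Replacement}. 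Each rule added by Step~2b has the form $u \to \hat v$ for some $\langle u, v\rangle \in \mathrm{CP}_{\textit{in}}(\mathcal{P}_i\cup\mathcal{P}_i^{-1},\mathcal{S}_i)$, so we have $u \leftrightarrow_{\mathcal{P}_i} \circ \to_{\mathcal{S}_i} v \stackrel{*}{\to}_{\mathcal{S}_i} \hat v$, witnessing \emph{Addition}. The rules contributed by $\mathcal{U}'$ in Step~5 fall into two shapes: rules $v \to \hat u$ arising from $\langle u,v\rangle \in \mathrm{CP}(\mathcal{S}_i,\mathcal{P}_i\cup\mathcal{P}_i^{-1})$, where reversibility of $\mathcal{P}_i$ gives $v \stackrel{*}{\to}_{\mathcal{P}_i} \circ \to_{\mathcal{S}_i} u \stackrel{*}{\to}_{\mathcal{S}_i} \hat u$, and rules $l \to r$ (or $r \to l$) from $\mathcal{U}_2$ for which $l \stackrel{*}{\leftrightarrow}_{\mathcal{P}_i} r$ is explicitly required, so both shapes instantiate \emph{Addition}. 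Theorem~\ref{thm:soundness of the abstract reduction-preserving completion procedure} then gives reduction equivalence of $\mathcal{R}$ with the final $\mathcal{S}_n \cup \mathcal{P}_n$, so it suffices to show that $\mathcal{S}_n \cup \mathcal{P}_n$ is confluent at a Success node.

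For the second part, I would verify that on reaching Success (either from Step~4 with $\mathcal{U}_1 = \mathcal{U}_2 = \emptyset$, or from Step~5 with the relative-termination test succeeding), the hypotheses of Theorem~\ref{thm:parallel} hold for $\mathcal{S} := \mathcal{S}_n$, $\mathcal{P} := \mathcal{P}_n$, and $\mathcal{P}'$ equal to either $\emptyset$ or the set built in Step~5. Left-linearity and termination of $\mathcal{S}_n$, reversibility of $\mathcal{P}_n$, $\mathcal{P}' \subseteq \mathcal{P}_n \cup \mathcal{P}_n^{-1}$, relative termination, and condition (ii) $\mathrm{CP}_{\textit{in}}(\mathcal{P}_n\cup \mathcal{P}_n^{-1},\mathcal{S}_n) = \emptyset$ are all established by Step~2 (and Step~5's final termination check). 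Condition (i) is witnessed for each $\langle u, v\rangle \in \mathrm{CP}(\mathcal{S}_n,\mathcal{S}_n)$ by $u \stackrel{*}{\to}_{\mathcal{S}_n} \hat u \pargets_{\mathcal{P}_n\cup\mathcal{P}_n^{-1}} \hat v \stackrel{*}{\gets}_{\mathcal{S}_n} v$ (the check performed in Step~4, via the rewrite sequences rediscovered in Step~5 when necessary). Condition (iii) is similarly witnessed for each $\langle u, v\rangle \in \mathrm{CP}(\mathcal{S}_n,\mathcal{P}_n\cup\mathcal{P}_n^{-1})$, with the flag $b$ controlling whether the second disjunct (the one ending in $\stackrel{*}{\gets}_{\mathcal{S}\cup\mathcal{P}'}\circ \gets_{\mathcal{S}}$) is required, corresponding exactly to whether $v$ is $\mathcal{S}_n$-normal.

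The delicate point — and the main obstacle — is the accounting in Step~5: one must check that the particular set $\mathcal{P}'$ chosen by the procedure (the rules \emph{actually used} in the joining sequences) is precisely what makes the patterns of Theorem~\ref{thm:parallel}(iii) go through while being small enough for relative termination to be verifiable. In particular, the joining sequence found for each critical pair must use only rules of $\mathcal{S}_n$, $\mathcal{P}_n$, and $\mathcal{P}_n^{-1}$ that lie in the chosen $\mathcal{P}'$; this is exactly the bookkeeping specified by the $\mathcal{P}'$-update in Step~5, so no gap arises, but it is the only place where care is required. Once this is checked, applying Theorem~\ref{thm:parallel} yields confluence of $\mathcal{S}_n \cup \mathcal{P}_n$, and reduction equivalence with $\mathcal{R}$ transports it via Proposition~\ref{prop:reduction equivalence}(ii) to give confluence of $\mathcal{R}$.
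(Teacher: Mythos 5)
Your proposal is correct and follows essentially the same route as the paper's own proof: simulate the concrete run by a derivation of the abstract procedure (Step~2 as \textit{Partition}, Steps~2a/4a as \textit{Replacement}, Steps~2b and~5 as \textit{Addition}), check that a \texttt{Success} exit establishes the hypotheses of Theorem~\ref{thm:parallel} (or Corollary~\ref{cor:parallel} when exiting at Step~4), and conclude via Theorem~\ref{thm:soundness of the abstract reduction-preserving completion procedure}. Your justification of the \textit{Addition} side-conditions for the rules in $\mathcal{U}'$ and Step~2b is in fact spelled out more carefully than in the paper's one-line "it is readily checked" remark.
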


\begin{proof}
It suffices to show
if the procedure succeeds
then there exists a successful derivation
of the abstract reduction-preserving completion procedure
ending with $\mathcal{S}_n, \mathcal{P}_n$
satisfying the conditions of Theorem \ref{thm:parallel}.
Step 1 corresponds to the empty derivation.
Step 2 corresponds to an inference step by \textit{Partition}.
If the procedure succeeds at Step 4,
then $\mathcal{S}_i, \mathcal{P}_i$
satisfy the conditions of Corollary \ref{cor:parallel}
and hence that of Theorem \ref{thm:parallel}.
If the procedure succeeds at Step 5,
then $\mathcal{S}_i, \mathcal{P}_i$
satisfy the conditions of Theorem \ref{thm:parallel}.
Suppose that the procedure does not return $\mathtt{Success}$
at Steps 4 or 5.
Then it is readily checked that
for $l \to r \in \mathcal{U}'$,
$l \gets_{\mathcal{S}_i} \circ \to_{\mathcal{P}_i \cup \mathcal{P}_i^{-1}} r$ 
or 
$l \stackrel{*}{\leftrightarrow}_{\mathcal{P}_i} r$ holds.
Thus, in this case, Step 5 is simulated by multiple inference steps by \textit{Addition}.
Similarly, Step 2b is simulated by multiple inference steps by \textit{Addition}
and Steps 2a and 4a are simulated by inference steps by \textit{Replace}.
\end{proof}

\begin{exa}
\label{exp:plus1zero-plus2succ-complus-assocplus}
Give
$\mathcal{R}_2 = \{ (\mathsf{add}_1), (\mathsf{add}_2), (C), (A)
\}$
as the input.
\begin{enumerate}[\phantom0(1)]
\item (Step 1) 
We put $\mathcal{Q}_0 := \mathcal{R}_2$.
\item (Step 2)
We take $\mathcal{S}_0 = \{ (\mathsf{add}_1),(\mathsf{add}_2) \}$
and $\mathcal{P}_0 = \{ (C),(A) \}$.
Then $\mathcal{S}_0$ is left-linear and terminating,
$\mathcal{P}_0$ is reversible
and
$\mathrm{CP}_\textit{in}(\mathcal{P}_0\cup{\mathcal{P}_0}^{-1},\mathcal{S}_0) = \emptyset$.
\item (Step 3)
We have
$\mathrm{CP}(\mathcal{S}_0,\mathcal{P}_0\cup {\mathcal{P}_0}^{-1})=$
\[
\left\{
\begin{array}{cl@{\qquad}cl}
\mbox{(1)} &\langle 
\mathsf{+}(y,z),
\mathsf{+}(0,\mathsf{+}(y,z))
\rangle&
\mbox{(5)} &\langle 
\mathsf{s}(\mathsf{+}(x,\mathsf{+}(y,z))),
\mathsf{+}(\mathsf{+}(\mathsf{s}(x),y),z)
\rangle\\
\mbox{(2)} &\langle 
\mathsf{+}(y,z),
\mathsf{+}(\mathsf{+}(0,y),z)
\rangle&
\mbox{(6)} &\langle 
\mathsf{+}(\mathsf{s}(\mathsf{+}(x,y)),z),
\mathsf{+}(\mathsf{s}(x),\mathsf{+}(y,z))
\rangle\\
\mbox{(3)} &\langle 
\mathsf{+}(x,z),
\mathsf{+}(\mathsf{+}(x,0),z)
\rangle&
\mbox{(7)} &\langle 
\mathsf{+}(x,\mathsf{s}(\mathsf{+}(y,z))),
\mathsf{+}(\mathsf{+}(x,\mathsf{s}(y)),z)
 \rangle\\
\mbox{(4)} &\langle 
y,
\mathsf{+}(y,0)
\rangle&
\mbox{(8)} &\langle 
\mathsf{s}(\mathsf{+}(x,y)),
\mathsf{+}(y,\mathsf{s}(x))
\rangle\\  
\end{array}
\right\}.
\]
Then for $\langle u, v \rangle \in
\{ \mbox{(3)}, \mbox{(4)}, \mbox{(7)}, \mbox{(8)} \}$,
$u \stackrel{*}{\to}_{\mathcal{S}_0}
\circ 
\pargets_{\mathcal{P}_0 \cup \mathcal{P}_0^{-1}}
\circ 
 \stackrel{*}{\gets}_{\mathcal{S}_0}
v$ does not hold and $v$ is $\mathcal{S}_0$-normal.
Thus we put
\[
\mathcal{U}_1 :=
\left\{ 
\begin{array}{l}
\langle \mathtt{true}, 
\mathsf{+}(\mathsf{+}(x,0),z) \approx \mathsf{+}(x,z)
\rangle\\
\langle \mathtt{true}, 
\mathsf{+}(y,0) \approx y
\rangle\\
\langle \mathtt{true}, 
\mathsf{+}(\mathsf{+}(x,\mathsf{s}(y)),z) \approx  \mathsf{+}(x,\mathsf{s}(\mathsf{+}(y,z)))
\rangle\\
\langle \mathtt{true}, 
\mathsf{+}(y,\mathsf{s}(x)) \approx \mathsf{s}(\mathsf{+}(x,y))
\rangle 
\end{array}
\right\}
\]
and proceed to Step 4.

\item (Step 4)
We have
$\mathrm{CP}(\mathcal{S}_0,\mathcal{S}_0)= \emptyset$
and thus $\mathcal{U}_2 = \emptyset$.
Since $b = \mathtt{true}$ for any $\langle b, u \approx v \rangle \in \mathcal{U}_1$,
we check 
$u \pargets_{\mathcal{P}_0\cup {\mathcal{P}_0}^{-1}} 
\circ 
\stackrel{*}{\gets}_{\mathcal{S}_0 \cup \mathcal{P}_0\cup {\mathcal{P}_0}^{-1}}  
v$.
But this fails since, for example,
$\mathsf{+}(y,0)
 \pargets_{\mathcal{P}_0\cup {\mathcal{P}_0}^{-1}} 
\circ 
\stackrel{*}{\gets}_{\mathcal{S}_0 \cup \mathcal{P}_0\cup {\mathcal{P}_0}^{-1}}  
y
$ does not hold.
Now we put 
\[
\mathcal{U}' :=
\left\{
\begin{array}{clcl}
(\mathsf{add}_3)&\mathsf{+}(y,0)\to y \qquad
(\mathsf{add}_4')& \mathsf{+}(y,\mathsf{s}(x)) \to \mathsf{s}(\mathsf{+}(x,y))\\
\end{array}
\right\}.
\]
and go to the step 2.
\item (Step 2)
We take $\mathcal{S}_1 = \{ (\mathsf{add}_1),(\mathsf{add}_2),(\mathsf{add}_3),(\mathsf{add}_4')
 \}$ and $\mathcal{P}_1 = \{ (C),(A) \}$.
Then $\mathcal{S}_1$ is left-linear and terminating,
$\mathcal{P}_1$ is reversible,
and
$\mathrm{CP}_\textit{in}(\mathcal{P}_1\cup{\mathcal{P}_1}^{-1},\mathcal{S}_1) = \emptyset$.

\item (Step 3)
There are four elements
in $\mathrm{CP}^{-1}(\mathcal{S}_1,\mathcal{P}_1\cup {\mathcal{P}_1}^{-1})$
which are not joinable as 
$u \stackrel{*}{\to}_{\mathcal{S}_1}
\circ 
\pargets_{\mathcal{P}_1 \cup \mathcal{P}_1^{-1}}
\circ 
 \stackrel{*}{\gets}_{\mathcal{S}_1}
v$, namely
\[
\left\{
\begin{array}{clcl}
(9) & \langle \mathsf{+}(x,\mathsf{s}(\mathsf{+}(y,z)),  \mathsf{+}(\mathsf{+}(x,\mathsf{s}(y)),z) \rangle\\
(10) & \langle \mathsf{s}(\mathsf{+}(x,\mathsf{+}(y,z)),  \mathsf{+}(y,\mathsf{+}(z,\mathsf{s}(x))) \rangle\\
(11) &  \langle \mathsf{+}(\mathsf{s}(\mathsf{+}(x,y)),z),  \mathsf{+}(y,\mathsf{+}(\mathsf{s}(x),z)) \rangle\\
(12) &\langle \mathsf{+}(x,\mathsf{s}(\mathsf{+}(y,z))),  \mathsf{+}(\mathsf{+}(x,z),\mathsf{s}(y)) \rangle
\end{array}
\right\}.
\]
We put
\[
\mathcal{U}_1 
:=
\left\{ 
\begin{array}{l}
\langle \mathtt{false}, 
\mathsf{+}(\mathsf{+}(x,\mathsf{s}(y)),z) \approx  \mathsf{s}(\mathsf{+}(x,\mathsf{+}(y,z)))
\rangle\\
\langle \mathtt{false}, 
\mathsf{+}(y,\mathsf{+}(z,\mathsf{s}(x))) \approx \mathsf{s}(\mathsf{+}(x,\mathsf{+}(y,z))
\rangle\\
\langle \mathtt{false}, 
\mathsf{+}(y,\mathsf{+}(\mathsf{s}(x),z)) \approx \mathsf{s}(\mathsf{+}(\mathsf{+}(x,y),z)
\rangle\\
\langle \mathtt{false}, 
\mathsf{+}(\mathsf{+}(x,z),\mathsf{s}(y)) \approx \mathsf{s}(\mathsf{+}(x,\mathsf{+}(y,z)))
\rangle 
\end{array}
\right\}
\]
and proceed to Step 4.

\item (Step 4)
We have
$\mathrm{CP}(\mathcal{S}_1,\mathcal{S}_1)= \emptyset$
and thus $\mathcal{U}_2 = \emptyset$.
Since $b = \mathtt{false}$ for any $\langle b, u \approx v \rangle \in \mathcal{U}_1$,
we check 
$u 
\to_{\mathcal{S}_1}
\circ 
\stackrel{*}{\to}_{\mathcal{S}_1 \cup \mathcal{P}_1\cup {\mathcal{P}_1}^{-1}}  
u'
\pargets_{\mathcal{P}_1\cup {\mathcal{P}_1}^{-1}} 
v'
\stackrel{*}{\gets}_{\mathcal{S}_1 \cup \mathcal{P}_1\cup {\mathcal{P}_1}^{-1}}  
v$.
Then we have 
$u 
\to_{\mathcal{S}_1}
\circ 
\stackrel{*}{\to}_{\mathcal{S}_1 \cup \mathcal{P}'}
u'
\pargets_{\mathcal{P}_1\cup {\mathcal{P}_1}^{-1}} 
v'
\stackrel{*}{\gets}_{\mathcal{S}_1 \cup \mathcal{P}'}
v$ for all 
$\langle b, u \approx v \rangle \in \mathcal{U}_1$
by taking $\mathcal{P}' = \{ \mathsf{+}(\mathsf{+}(x,y),z) \to \mathsf{+}(x,\mathsf{+}(y,z)) \}$.
Then 
$\mathcal{S}_1$ is terminating relative to
$\mathcal{P}'$
and $\mathtt{Success}$ is returned.
\end{enumerate}
\end{exa}

\begin{exa}
\label{exp:extra steps}
Suppose that 
the relative termination check of
$\mathcal{S}_1$ relative to $\mathcal{P}'$ fails
in the last step of
Example \ref{exp:plus1zero-plus2succ-complus-assocplus}.
Then the procedure continues as follows.
\begin{enumerate}[\phantom0(1)]
\addtocounter{enumi}{7}
\item Here, we put $\mathcal{U}' := \emptyset$,
$\mathcal{Q}_2 := \mathcal{Q}_1$, $i := 2$ and proceed to Step 4a.
\item (Step 4a)
Put $\mathcal{S}_2 := \mathcal{S}_1$
and $\mathcal{P}_2 := \mathcal{P}_1$.
Since 
$\mbox{(11)} \in 
\mathrm{CP}(\{ (\mathsf{add}_4') \},\mathcal{P}_2\cup {\mathcal{P}_2}^{-1})$
and 
$\mathsf{s}(\mathsf{+}(x,y)) \to_{\mathcal{P}_2} \mathsf{s}(\mathsf{+}(y,x))$,
one can put $\mathcal{Q}_3 := 
(\mathcal{Q}_2 \setminus \{ (\mathsf{add}_4') \})
\cup 
(\mathsf{add}_4) 
\}$
and $i:= 3$ and go to Step 2.
\item (Step 2)
We take $\mathcal{S}_3 = \{ (\mathsf{add}_1),(\mathsf{add}_2),
(\mathsf{add}_3),(\mathsf{add}_4) \}$
and $\mathcal{P}_3 = \{ (C), (A) \}$.
Then $\mathcal{S}_3$ is left-linear and terminating,
$\mathcal{P}_3$ is reversible
and
$\mathrm{CP}_\textit{in}(\mathcal{P}_3\cup{\mathcal{P}_3}^{-1},\mathcal{S}_3) = \emptyset$.
Thus proceed to Step 3.

\item (Step 3)
For any $\langle u, v \rangle \in
\mathrm{CP}(\mathcal{S}_3,\mathcal{P}_3\cup {\mathcal{P}_3}^{-1})$,
we have $u \stackrel{*}{\to}_{\mathcal{S}_3}
\circ 
\pargets_{\mathcal{P}_3 \cup \mathcal{P}_3^{-1}}
\circ 
 \stackrel{*}{\gets}_{\mathcal{S}_3}
v$ (Example~\ref{exp:add2pp-assoc-com}).
Thus $\mathcal{U}_1 := \emptyset$
and proceed to Step 4.

\item (Step 4)
For any $\langle u, v \rangle \in
\mathrm{CP}(\mathcal{S}_3,\mathcal{S}_3)$,
we have $u \stackrel{*}{\to}_{\mathcal{S}_3}
\circ 
\pargets_{\mathcal{P}_3 \cup \mathcal{P}_3^{-1}}
\circ 
 \stackrel{*}{\gets}_{\mathcal{S}_3}
v$ (Example~\ref{exp:add2pp-assoc-com}).
Thus $\mathcal{U}_2 := \emptyset$.
Since $\mathcal{U}_1 = \mathcal{U}_2 = \emptyset$,
\texttt{Success} is returned.
\end{enumerate}
\end{exa}

\section{Implementation and experiments}


All the results of this paper have been implemented.
The program is written in 
\textsf{SML/NJ}\footnote{\texttt{http://www.smlnj.org/}}
and is built upon our confluence prover \textsf{ACP}\footnote{
\texttt{http://www.nue.riec.tohoku.ac.jp/tools/acp/}}
\cite{DD,ACP,YoshidaJSSST}.

In Figure~\ref{tab:pseude code}, we present pseudo-code 
of main functions of our implementation of reduction-preserving completion procedure
enough for describing some heuristics employed in the implementation.
Short descriptions of functions 
involved in our pseudo-code and heuristics 
employed follow.

\begin{iteMize}{$\bullet$}
\item (\texttt{checkConfluence}~$\mathcal{R}$)
is the main function of the procedure.
It simulates multiple runs in the breadth-first strategy.



\item (\texttt{decompose}~$\mathcal{R}$)
decomposes $\mathcal{R}$ into $\mathcal{S} \cup \mathcal{P}$
and duplicates $\mathcal{S} \cup \mathcal{P}$.
Hence a list of triples $(\mathcal{S},\mathcal{P},i)$ where
$\mathcal{S} \cup \mathcal{P} = \mathcal{R}$ and $i \in \{ 0, 1, 2\}$ are returned.
Here, however, 
instead of returning all partitions,
we select partitions based on a heuristic, namely that  
$\mathcal{P}$ is a set of the rules $l \to r$ satisfying
either 
(1) $r \to l \in \mathcal{R}$ or
(2) $\mathcal{F}(l) = \mathcal{F}(r)$ and $l(\epsilon) = r(\epsilon)$.

\item (\texttt{check}~$(\mathcal{S},\mathcal{P},i$)) checks
whether conditions of Theorem~\ref{thm:linear} when $i = 0$
(or Theorem~\ref{thm:PCP} when $i = 1$, 
Proposition~\ref{prop:relative} when $i = 2$)
are satisfied
based on the algorithm given 
in our concrete completion procedure
(Definition \ref{def:concrete reduction-preserving completion procedure}).
Reversibility is tested by checking
$r \stackrel{\le k}{\to} l$ 
(i.e.\ there is a rewrite sequence 
from $r$ to $l$ of length less than or equal to $k$ steps)
for some 
constant $k$.
In our implementation, we set $k = 10$.

\item (\texttt{trans}~($\mathcal{S},\mathcal{P}$)~$\texttt{nj}$)
returns a collection of transformed TRSs 
obtained by addition and replacement of rewrite rules
constructed from non-joinable critical pairs 
(Steps 5 and 2b
of the Definition~\ref{def:concrete reduction-preserving completion procedure})
and 
rewrite rules generating such critical pairs
(Steps 2a and 4a
of of the Definition~\ref{def:concrete reduction-preserving completion procedure}).
Here, the addition of rewrite rules is restricted
based on some heuristic.\medskip
\end{iteMize}

\begin{figure*}
\begin{center}
\[
\begin{array}{l}
\texttt{fun}~\texttt{check~(S,P,i)} = \texttt{if~i} = 0~
\texttt{then}~\mbox{\it (apply Theorem~\ref{thm:PCP})}\\
\qquad\qquad\qquad\qquad\qquad
\texttt{else}~~\texttt{if~i} = 1~~\mathtt{then}~\mbox{\it (apply Theorem~\ref{thm:linear})}\\
\qquad\qquad\qquad\qquad\qquad
\texttt{else}~~\mbox{\it (apply Proposition~\ref{prop:relative})}\\
\texttt{fun}~\texttt{checkConfluence~R} = \\
\qquad\texttt{let}~\texttt{fun}~\texttt{step}~\texttt{[\,]} = \texttt{Failure} \\
\qquad\qquad\texttt{|}\quad\,\,\texttt{step~((S,P,i)::rest)} =
\texttt{case~check~(S,P,i)~of}\\
\qquad\qquad\qquad\qquad\texttt{NONE} \Rightarrow \texttt{step~rest}\\
\qquad\qquad\qquad\quad\texttt{|~SOME~[]}\Rightarrow \texttt{Success}\\
\qquad\qquad\qquad\quad\texttt{|~SOME~nj}\Rightarrow \texttt{step~(rest~@}\\
\qquad\qquad\qquad\qquad\qquad\qquad\qquad\qquad
\texttt{(mapAppend~decompose~(trans~(S,P)~nj)))}\\
\qquad\texttt{in~step~(decompose~R)~end}
\end{array}
\]
\caption{Pseudo-code of main functions}
\label{tab:pseude code}
\end{center}
\end{figure*}

\noindent Table~\ref{table:summary} shows the summary of our experiments.
We have tested various combinations of our results:
Rows (1)--(8) are proofs by confluence criteria
of Theorems \ref{thm:linear}, \ref{thm:parallel}, \ref{thm:PCP}
(Corollaries \ref{cor:linear}, \ref{cor:parallel}, \ref{cor:PCP})
and by the combination
of those of Theorem \ref{thm:PCP} (Corollary \ref{cor:PCP})
and Theorem \ref{thm:linear} (Corollary \ref{cor:linear}, respectively).
Those marked with ``w/o RT'' are the ones without
(proper) relative termination checking (i.e.\
Corollaries \ref{cor:linear}, \ref{cor:parallel} and \ref{cor:PCP}
where only termination checking is involved).
Rows (9)--(14) are proofs by the reduction-preserving completion
without the \textit{Replacement} rule,
i.e.\ the Steps $2a,4a$ of
the concrete reduction-preserving completion
(Definition~\ref{def:concrete reduction-preserving completion procedure}).
Rows (15)--(20) are proofs by the reduction-preserving completion
with the \textit{Replacement} rule.
Row (21) are proofs by the reduction-preserving completion
with the \textit{Replacement} rule and 
Huet's criterion (Proposition \ref{prop:relative}).

For the experiments, we used a collection of 85 TRSs 
involving non-terminating rules 
such as commutativity and associativity rules
which have been developed in the course of experiments\footnote{
Four TRSs have been added to the collection of our 
proceedings version \cite{Rcomp} of the paper 
to capture Examples \ref{exp:add2pe-dbl-assoc-com}
and \ref{exp:loop2}.}.
All experiments have been performed 
on a Linux platform of a PC equipped with 1.2GHz CPU and 1GB memory.
The maximal number of steps of the completion procedure
is limited to 20 steps; the columns below the title ``diverge'' 
show the numbers of examples which exceeded this limit,
where these numbers are included in those of ``failure.''
We set the timeout 60 sec.
The columns below the title ``timeout'' 
show the numbers of examples which exceeded this timeout.
Total time (real time) is measured in seconds.

When relative termination checking is disabled 
and only termination checking is used,
the applicability of incomparable confluent criteria
(Corollary \ref{cor:PCP} and Corollary \ref{cor:linear})
does not show much differences in the number of success.
When relative termination checking is activated,
Theorem \ref{thm:PCP} has a clear
advantage over Theorem \ref{thm:linear}
and at the same time total computation time rises much---this
is because of relative termination checking
invoked multiple times for solving each problem.
The applicability of Theorem \ref{thm:parallel} (Corollary \ref{cor:parallel})
which is subsumed by Theorem \ref{thm:PCP} (Corollary \ref{cor:PCP}, respectively)
is limited compared to these two criteria.
Comparing to Huet's criterion (Proposition \ref{prop:relative}) 
which succeeds at 37 examples, 
the direct application of the combination
of Theorems \ref{thm:linear} and \ref{thm:PCP}
without reduction-preserving completion
succeeded at 41 examples.

In each criterion, there is a clear increase
of the number of success by adding the completion procedure.
The increase of total time by 
the introduction of completion procedure is
not much but this depends on largely
our heuristics of choosing partitions
and the limitation on the number of completion steps.
The number of successful examples, however, does not change 
even when we increase that limit to 150 steps.
Activation of relative termination checking is also effective 
even in the presence of completion procedure.
The introduction of the \textit{Replacement} inference rule 
only makes a difference when relative termination checking is
not involved.

We have also tested the confluence prover \textsf{ACP} on our collection.
\textsf{ACP} is an automated confluence prover 
in which divide--and--conquer approach based on
the persistent, layer-preserving, commutative decompositions 
is employed and involving many confluence criteria 
\cite{
GOO98,
Hue80,
Oku98,OO97,
Toy81,
TO94,
TO01,
OO04,
Dev}
as well as the decreasing diagram techniques \cite{CRbyDD,converted}.
As shown in the table, 
most of our examples are not coped with by the confluence prover \textsf{ACP}.

The collection of examples and all details of the experiments 
are available 
on the webpage \texttt{http://www.nue.riec.tohoku.ac.jp/tools/acp/experiments/12lmcs/all.html}.


\begin{table}
\caption{Summary of experiments}
\label{table:summary}
\doublerulesep=.3pt
\begin{tabular}{r@{\,}l|c|c|c|c|r}\hline\hline
   &                               & \tiny success& \tiny  failure& \tiny  diverge & \tiny  timeout 
                                   & \multicolumn{1}{@{\,}c@{\,}}{\tiny time(sec.)}\\\hline
(1)& linear (w/o RT) (Cor.\ \ref{cor:linear})
& 28 & 57 & 0 & 0
& 7.28
\\
(2)& linear    (Thm.\ \ref{thm:linear})
& 32 & 53 & 0 & 0
& 7.41
\\
(3)& parallel (w/o RT) (Cor.\ \ref{cor:parallel})
& 19 & 66 & 0 & 0
& 7.93
\\
(4)& parallel (Thm.\ \ref{thm:parallel})
& 24 & 61 & 0 & 0
& 8.20
\\
(5)& PCP (w/o RT)  (Cor.\ \ref{cor:PCP})
& 28 & 57 & 0 & 0
& 7.97
\\
(6)& PCP (Thm.\ \ref{thm:PCP})
& 39 & 46 & 0 & 0
& 21.44
\\
(7)& linear\&PCP (w/o RT) (Cor.\ \ref{cor:linear}\&\ref{cor:PCP})
& 30 & 55 & 0 & 0
& 8.82
\\
(8)& linear\&PCP (Thm.\ \ref{thm:linear}\&\ref{thm:PCP})
& 41 & 44 & 0 & 0
& 21.28
\\
(9)& completion (linear,w/o RT)
& 47 & 38 & 0 & 0
& 8.17
\\
(10)& completion (linear)
& 61 & 24 & 0 & 0
& 9.00
\\
(11)& completion (PCP,w/o RT)
& 50 & 35 & 0 & 0
& 9.85
\\
(12)& completion (PCP)
& 74 & 11 & 0 & 0
& 30.18
\\
(13)& completion (linear\&PCP,w/o RT)
& 52 & 33 & 0 & 0
& 10.84
\\
(14)& completion (linear\&PCP)
& 77 & 8 & 0 & 0
& 28.26
\\
(15)& completion+repl (linear,w/o RT)
& 61 & 24 & 0 & 0
& 9.51
\\
(16)& completion+repl (linear)
& 61 & 24 & 0 & 0
& 9.09
\\
(17)& completion+repl (PCP,w/o RT)
& 66 & 19 & (3) & 0
& 14.90
\\
(18)& completion+repl (PCP)
& 75 & 10 & 0 & 0
& 29.49
\\
(19)& completion+repl (linear\&PCP,w/o RT)  
& 69 & 16 & (2) & 0
& 14.61
\\
(20)& completion+repl (linear\&PCP) 
& 77 & 8 & 0 & 0
& 29.05
\\
(21)& (20) + Huet (Prop.\ \ref{prop:relative})
& 77 & 8 & 0 & 0
& 40.95
\\\hline
\multicolumn{2}{l|}{Huet (Prop.\ \ref{prop:relative})} 
& 37 & 48 & 0 & 0
& 31.22
\\
\multicolumn{2}{l|}{\textsf{ACP} \cite{DD,ACP,YoshidaJSSST}} 
                                       & 13    & 72   & ---     &  (2)     & 196.40
\\\hline\hline\hline
\end{tabular}
\end{table}

\section{Conclusion}

We have presented a new method for proving 
confluence of TRSs.
The scope of our method is a class of TRSs
that can be partitioned into
a terminating part and a reversible part.
Our method is applicable for TRSs 
containing non-terminating rules
such as commutativity and associativity
which have been difficult to deal with
most of the standard methods.

We have given a new abstract criterion
for Church-Rosser modulo
(Theorem \ref{thm:ARS})
which extends those that appeared in \cite{Hue80} and \cite{JK86}.
Based on this abstract criterion,
we have given two new criteria for confluence of TRSs
formulated in terms of 
its terminating part $\mathcal{S}$
and 
its reversible part $\mathcal{P}$
(Theorems \ref{thm:linear} and \ref{thm:PCP}).
These criteria are effective 
even if $\mathcal{S}$ is not terminating
relative to $\mathcal{P}$;
in case $\mathcal{S}$ is terminating relative to a part of $\mathcal{P}$
then the applicability of criteria is strengthened.
We have also given 
a reduction-preserving completion
by which applicability of our criteria is enhanced.
We have implemented the proposed techniques
and reported experimental results.



\section*{Acknowledgment}

Thanks are due to Junichi Mitimata for 
discussions and experiments on preliminary
results of this paper.
The authors are grateful for
Harald Zankl, Aart Middeldorp,
Dominik Klein
for pointers to related works and/or helpful comments.
Thanks are due to anonymous referees 
for detailed comments.
This work was partially supported by grants from JSPS Nos.~20500002
and 22500002.

\bibliographystyle{plain}
\bibliography{crcomp}

\begin{thebibliography}{10}

\bibitem{DD}
T.~Aoto.
\newblock Automated confluence proof by decreasing diagrams based on
  rule-labelling.
\newblock In C.~Lynch, editor, {\em Proc.\ of RTA 2010}, volume~6 of {\em
  LIPIcs}, pages 7--16. Schloss Dagstuhl, 2010.

\bibitem{Rcomp}
T.~Aoto and Y.~Toyama.
\newblock A reduction-preserving completion for proving confluence of
  non-terminating term rewriting systems.
\newblock In M.~Schmidt-Schau\ss, editor, {\em Proc.\ of RTA 2011}, volume~10
  of {\em LIPIcs}, pages 91--106. Schloss Dagstuhl, 2011.

\bibitem{ACP}
T.~Aoto, Y.~Yoshida, and Y.~Toyama.
\newblock Proving confluence of term rewriting systems automatically.
\newblock In R.~Treinen, editor, {\em Proc.\ of RTA 2009}, volume 5595 of {\em
  LNCS}, pages 93--102. Springer-Verlag, 2009.

\bibitem{BaaderFandNipkowT:TR}
F.~Baader and T.~Nipkow.
\newblock {\em Term Rewriting and All That}.
\newblock Cambridge University Press, 1998.

\bibitem{BD89}
L.~Bachmair and N.~Dershowitz.
\newblock Completion for rewriting modulo a congruence.
\newblock {\em Theoretical Computer Science}, 67(2--3):173--201, 1981.

\bibitem{GOO98}
H.~Gomi, M.~Oyamaguchi, and Y.~Ohta.
\newblock On the {C}hurch-{R}osser property of root-{E}-overlapping and
  strongly depth-preserving term rewriting systems.
\newblock {\em Transactions of IPSJ}, 39(4):992--1005, 1998.

\bibitem{Gramlich:PCP}
B.~Gramlich.
\newblock Confluence without termination via parallel critical pairs.
\newblock In H.~Kirchner, editor, {\em Proc.\ of CAAP'96}, volume 1996 of {\em
  LNCS}, pages 211--225. Springer-Verlag, 2006.

\bibitem{Hue80}
G.~Huet.
\newblock Confluent reductions: abstract properties and applications to term
  rewriting systems.
\newblock {\em Journal of the ACM}, 27(4):797--821, 1980.

\bibitem{JK86}
J.-P. Jouannaud and H.~Kirchner.
\newblock Completion of a set of rules modulo a set of equations.
\newblock {\em SIAM Journal of Computing}, 15(4):1155--1194, 1986.

\bibitem{JKR83}
J.-P. Jouannaud, H.~Kirchner, and J.~L. Remy.
\newblock {C}hurch-{R}osser properties of weakly terminating equational term
  rewriting systems.
\newblock In A.~Bundy, editor, {\em Proc.\ of 8th IJCAI}, pages 909--915, 1983.

\bibitem{LB77c}
D.~S. Lankford and A.~M. Ballantyne.
\newblock Decision procedures for simple equational theories with
  commutative-associative axioms: complete sets of commutative-associative
  reductions.
\newblock Technical Report ATP-39, Department of Computer Sciences, University
  of Texas at Austin, 1977.

\bibitem{LB77a}
D.~S. Lankford and A.~M. Ballantyne.
\newblock Decision procedures for simple equational theories with commutative
  axioms: complete sets of commutative reductions.
\newblock Technical Report ATP-35, Department of Computer Sciences, University
  of Texas at Austin, 1977.

\bibitem{LB77b}
D.~S. Lankford and A.~M. Ballantyne.
\newblock Decision procedures for simple equational theories with permutative
  axioms: complete sets of permutative reductions.
\newblock Technical Report ATP-37, Department of Computer Sciences, University
  of Texas at Austin, 1977.

\bibitem{AMrep1991}
A.~Middeldorp and M.~Star\v{c}evi\'{c}.
\newblock A rewrite approach to polynomial ideal theory.
\newblock Report CS-R9160, CWI, Amsterdam, 1991.

\bibitem{Ohl98}
E.~Ohlebusch.
\newblock {C}hurch-{R}osser theorems for abstract reduction modulo an
  equivalence relation.
\newblock In T.~Nipkow, editor, {\em Proc.\ of RTA-98}, volume 1379 of {\em
  LNCS}, pages 17--31. Springer-Verlag, 1998.

\bibitem{Oku98}
S.~Okui.
\newblock Simultaneous critical pairs and {C}hurch-{R}osser property.
\newblock In T.~Nipkow, editor, {\em Proc.\ of RTA-98}, volume 1379 of {\em
  LNCS}, pages 2--16. Springer-Verlag, 1998.

\bibitem{OO97}
M.~Oyamaguchi and Y.~Ohta.
\newblock A new parallel closed condition for {C}hurch-{R}osser of left-linear
  {TRS}'s.
\newblock In H.~Comon, editor, {\em Proc.\ of RTA-97}, volume 1232 of {\em
  LNCS}, pages 187--201. Springer-Verlag, 1997.

\bibitem{OO04}
M.~Oyamaguchi and Y.~Ohta.
\newblock On the open problems concerning {C}hurch-{R}osser of left-linear term
  rewriting systems.
\newblock {\em IEICE Trans.\ Information and Systems}, E87-D(2):290--298, 2004.

\bibitem{PS81}
G.~E. Peterson and M.~E. Stickel.
\newblock Complete sets of reductions for some equational theories.
\newblock {\em Journal of the ACM}, 28(2):233--264, 1981.

\bibitem{Toy81}
Y.~Toyama.
\newblock On the {C}hurch-{R}osser property of term rewriting systems.
\newblock Technical Report 17672, NTT ECL, 1981.

\bibitem{ToyRTA}
Y.~Toyama.
\newblock Confluent term rewriting systems (invited talk).
\newblock In J.~Giesl, editor, {\em Proc.\ of RTA 2005}, volume 3467 of {\em
  LNCS}, page~1. Springer-Verlag, 2005.
\newblock Slides are available from {\tt http://www.nue.riec.
  tohoku.ac.jp/user/toyama/slides/toyama-RTA05.pdf}.

\bibitem{TO94}
Y.~Toyama and M.~Oyamaguchi.
\newblock {C}hurch-{R}osser property and unique normal form property of
  non-duplicting term rewriting systems.
\newblock In N.~Dershowitz and N.~Lindenstrauss, editors, {\em Proc.\ of
  CTRS-94}, volume 968 of {\em LNCS}, pages 316--331. Springer-Verlag, 1994.

\bibitem{TO01}
Y.~Toyama and M.~Oyamaguchi.
\newblock Conditional linearization of non-duplicating term rewriting systems.
\newblock {\em IEICE Trans.\ Information and Systems}, E84-D(5):439--447, 2001.

\bibitem{CRbyDD}
V.~van Oostrom.
\newblock Confluence by decreasing diagrams.
\newblock {\em Theoretical Computer Science}, 126(2):259--280, 1994.

\bibitem{Dev}
V.~van Oostrom.
\newblock Developing developments.
\newblock {\em Theoretical Computer Science}, 175(1):159--181, 1997.

\bibitem{converted}
V.~van Oostrom.
\newblock Confluence by decreasing diagrams: converted.
\newblock In A.~Voronkov, editor, {\em Proc.\ of RTA 2008}, volume 5117 of {\em
  LNCS}, pages 306--320. Springer-Verlag, 2008.

\bibitem{YoshidaJSSST}
J.~Yoshida, T.~Aoto, and Y.~Toyama.
\newblock Automating confluence check of term rewriting systems.
\newblock {\em Computer Software}, 26(2):76--92, 2009.
\newblock In Japanese.

\end{thebibliography}

\end{document}